   \theoremstyle{acmdefinition}
   \newtheorem{remark}[theorem]{Remark}}
\newif\ifcomment
\newcommand{\cmahsa}[1]{\todo[color=green!40,inline]{{\bf Mahsa:} #1}}
\newcommand{\cben}[1]{\todo[color=blue!40,inline]{{\bf Ben:} #1}}
\newcommand{\calex}[2][]{\todo[color=teal!40,#1]{{\bf Alex:} #2}}
\newcommand{\cvincent}[2][]{\todo[color=red!40,#1]{{\bf Vincent:} #2}}
\newcommand{\cmahsa}[2][]{}
\newcommand{\cben}[2][]{}
\newcommand{\calex}[2][]{}
\newcommand{\cvincent}[2][]{}
\newcommand{\NN}{\mathbb{N}}
\newcommand{\ZZ}{\mathbb{Z}}
\newcommand{\QQ}{\mathbb{Q}}
\newcommand{\A}{\mathcal{A}}   
\newcommand{\y}{\boldsymbol{y}} 
\newcommand{\init}{\boldsymbol{\alpha}}                      
\newcommand{\fin}{\boldsymbol{\beta}}                         
\newcommand{\sem}[1]{\llbracket #1 \rrbracket}  
\newcommand{\trans}{\mu}                        
\newcommand{\backF}[1]{B_{#1}}                  
\newcommand{\backS}[1]{\mathcal{B}_{#1}}        
\newcommand{\mspan}[2][]{\left\langle #2\right\rangle_{#1}}
\DeclareMathOperator{\rankop}{rank}
\newcommand{\rank}[1]{\rankop(#1)}
\newcommand{\PHankel}[2]{H(#1,#2)}
\newcommand{\AHankel}[3]{A_{#3}(#1,#2)}
\newcommand{\SeqC}{\mathcal{C}}                    
\newcommand{\SeqR}{\mathcal{R}}                    
\newcommand{\AH}{\mathcal{H}}                    
\newcommand{\colF}[2]{\mathrm{col}_{#1}(#2)}
\newcommand{\rowF}[2]{\mathrm{row}_{#1}(#2)}
\newcommand{\rowvec}[1]{\mathbf{#1}}
\newcommand{\forFB}[2]{F_{#1}^{#2}}               
\newcommand{\forSB}[2]{\mathcal{F}_{#1}^{#2}}     
\crefname{algocf}{prog.}{progs.}
\Crefname{algocf}{Program}{Programs}
\renewcommand{\@algocf@capt@plain}{above}
\renewcommand{\algocf@caption@plain}{\box\algocf@capbox\vskip\AlCapSkip}%
\renewcommand{\vv}{\boldsymbol{v}}
\newcommand{\Z}{\ZZ}
\newcommand{\N}{\NN}
\newcommand{\Q}{\QQ}
\renewcommand{\vec}[1]{\boldsymbol{#1}}
\newcommand{\valpha}{\init}
\newcommand{\veta}{\fin}
\begin{document}

\DontPrintSemicolon
\AlgoDontDisplayBlockMarkers
\SetAlgoNoEnd
\SetAlgoNoLine

\title[On Learning Polynomial Recursive Programs]{On Learning Polynomial Recursive Programs}

\author{Alex Buna-Marginean}
\email{alex.bunamarginean@spc.ox.ac.uk}
\affiliation{%
  \institution{Department of Computer Science, University of Oxford}
  \city{Oxford}
  \country{UK}} 

\author{Vincent Cheval}
\email{vincent.cheval@cs.ox.ac.uk}
\affiliation{%
  \institution{Department of Computer Science, University of Oxford}
  \city{Oxford}
  \country{UK}}

\author{Mahsa Shirmohammadi}
\email{mahsa@irif.fr}
\affiliation{%
  \institution{Universit\'e Paris Cité, CNRS, IRIF}
  \city{Paris}
  \country{France}}

\author{James Worrell}
\email{jbw@cs.ox.ac.uk}
\affiliation{%
  \institution{Department of Computer Science, University of Oxford}
  \city{Oxford}
  \country{UK}} 

\renewcommand{\shortauthors}{Buna-Marginean, Cheval, Shirmohammadi and Worrell}


\begin{CCSXML}
<ccs2012>
<concept>
<concept_id>10003752.10003766.10003773.10003775</concept_id>
<concept_desc>Theory of computation~Quantitative automata</concept_desc>
<concept_significance>500</concept_significance>
</concept>
<concept>
<concept_id>10003752.10010070.10010071.10010286</concept_id>
<concept_desc>Theory of computation~Active learning</concept_desc>
<concept_significance>500</concept_significance>
</concept>
</ccs2012>
\end{CCSXML}

\ccsdesc[500]{Theory of computation~Quantitative automata}
\ccsdesc[500]{Theory of computation~Active learning}

\keywords{Weighted automata, Exact learning, Holonomic sequences, P-finite sequences, Automata learning}


\begin{abstract}
  We introduce the class of P-finite automata.  These are a generalisation of weighted automata, in which the weights of transitions can depend polynomially on the length of the input word.  
  P-finite automata can also be viewed as simple tail-recursive programs in which the arguments of recursive calls can non-linearly refer to a variable that counts the number of recursive calls.
  The nomenclature is motivated by the fact that over a unary alphabet P-finite automata compute so-called P-finite sequences, that is, sequences that satisfy a linear recurrence with polynomial coefficients.  Our main result shows that P-finite automata can be learned in polynomial time in Angluin's MAT exact learning model.  This generalises the classical results that deterministic finite automata and weighted automata over a field are respectively polynomial-time learnable in the MAT model.
\end{abstract}
\maketitle

\section{introduction}

A central problem in computational learning  
is to determine a representation of a  function through information about its behaviour on specific inputs.  This problem encapsulates one of the main challenges in the analysis and verification of systems and protocols---namely, inferring an abstract model of a black-box system from a specification or a log of its behaviour.

In the case of functions represented by automata, one of most influential and well-known
formalisations of the learning problem is the \emph{minimally adequate teacher} (MAT) model, introduced by Dana Angluin~\cite{Angluin87}.
In this framework a learning problem is specified by a semantic class 
of functions and a syntactic class of representations (e.g., the class regular languages, represented by deterministic finite automata) and the goal of the learner is to output a representation of a given \emph{target function} by making \emph{membership} and \emph{equivalence} queries to a teacher.   In a membership query the algorithm asks the teacher the value of the target function on a specific argument, whereas in an equivalence query the algorithm asks whether its current hypothesis represents the target function and, if not, receives as counterexample an argument on which the hypothesis and target differ.   This  framework is sometimes referred to as \emph{active learning}, since the learner actively gathers information rather than 
passively receiving randomly chosen
examples, as in Valiant's PAC learning model.  Another difference with the PAC model is that in the latter the hypothesis output by the learner is only required to be approximately correct, while in the MAT model it should be an exact representation of the target function.

In the MAT model, we say that a given learning algorithm runs in polynomial time if its running time is polynomial in the shortest representation of the target concept and the 
length of the longest counterexample output by the teacher.  The running time is, by construction, an upper bound on the total number of membership and equivalence queries.
Among other contributions \cite{Angluin87} 
introduced the $L^*$ algorithm: a polynomial-time exact learning algorithm for regular languages, using the representation class of deterministic finite automata.  
The $L^*$ algorithm essentially tries to discover and distinguish the different Myhill-Nerode equivalence classes of the target language.  By now there are several highly optimized implementations of the basic algorithm, including in the LearnLib26 and Libalf packages~\cite{BolligKKLNP10,IsbernerHS15}.

For many applications, such as interface synthesis, network protocols, and compositional verification, deterministic finite-state automata
are too abstract and inexpressive to capture much of the relevant behaviour. This has motivated various
extensions of Angluin’s $L^*$ algorithm to more expressive models, such as non-deterministic, visibly pushdown, weighted, timed, register, and nominal automata~\cite{BolligHKL09,MichaliszynO22,HowarJV19,MoermanS0KS17}.   The current paper considers an extension of weighted automata.
The class of weighted automata over a field was introduced by Sch\"{u}tzenberger~\cite{Schutzenberger61b} and has since been widely studied in the context of probabilistic automata, ambiguity in non-deterministic automata, and formal power series.
A weighted automaton is a non-deterministic finite automaton whose transitions are decorated with constants from a weight semiring.  Here we focus on the case that the weight semiring is the field $\Q$ of rational numbers.
Although weighted automata over a field are strictly
more expressive and exponentially more succinct than deterministic automata, the class remains
learnable in polynomial time in the MAT model~\cite{exact-learning-wa}.  
By contrast, 
subject to standard cryptographic assumptions~\cite{AngluinK95}
non-deterministic finite automata are not learnable in the MAT model with polynomially many queries.

\textbf{Contributions of this paper.}
We introduce and study a generalisation of weighted automata, which we call \emph{P-finite automata}, in which each transition weight is a polynomial function of the length of the input word.  Over a unary alphabet, 
whereas weighted automata represent 
$C$-finite sequences (sequences that satisfy linear recurrences with constant coefficients),
P-finite automata represent so-called P-finite sequences (those that satisfy linear recurrences with polynomial coefficients).  P-finite sequences are  a classical object of study in combinatorics and the complexity analysis of algorithms~\cite{Tetrahedron}.  P-finite automata can thus be considered as a common generalisation of P-finite sequences and $\Q$-weighted automata. In Section~\ref{sec:overview} we also view weighted and P-finite automata as simple tail-recursive programs.

The main results of the paper involve
two different
developments of the problem of learning $\Q$-weighted automata, respectively involving more general and more specific representation classes.
\begin{itemize}
    \item
    The most important contribution concerns a generalisation of the algorithm of~\cite{exact-learning-wa} for learning $\Q$-weighted automata.
    We give a polynomial-time learning algorithm for the class of P-finite automata in the MAT model.  As a stepping stone to this result we show that the equivalence problem for P-finite automata is solvable in polynomial time.
       \item
       In a second direction 
    we consider the special case of the learning problem for
    $\Q$-weighted automata in which the target function is assumed to be integer valued. 
    Clearly the algorithm of~\cite{exact-learning-wa} can be applied in this case, but its final output and intermediate equivalence queries may be $\Q$-weighted automata.
    On the other hand, it was shown in~\cite{fliess1974matrices} that if a $\Q$-weighted 
    automaton gives an integer weight to every word then it has a minimal representation that is 
    a $\Z$-weighted automaton.
    Thus, in the case of an integer-valued target it is natural to ask for a learning algorithm that uses $\Z$-weighted automata as representation class.
    We give such an algorithm, running in polynomial time, and show how it can be implemented using division-free arithmetic.  The heart of this construction is to 
    give a polynomial-time procedure to decide whether a $\Q$-weighted automaton is $\Z$-valued and, if yes, to output an equivalent minimal $\Z$-weighted automaton.

\end{itemize}

\textbf{Related Work.}
In the case of a unary alphabet, P-finite automata are closely related to the matrix representations of P-finite sequences considered in~\cite{Reutenauer12}.  Over general alphabets P-finite automata can be seen as a very special case of the polynomial automata of~\cite{BenediktDSW17}.  However, while determining equivalence of P-finite automata is in polynomial time, checking equivalence of polynomial automata is non-primitive recursive.  The key difference is that in the case of P-finite automata one works with modules over univariate polynomial rings, which are principal ideal domains, rather than general polynomial rings, which are merely Noetherian.  The former setting yields much smaller bounds on the length of increasing chains of modules (compare, e.g., Proposition~\ref{prop:stablise} herein with~\cite[Theorem 2]{BenediktDSW17}).


The problems of learning automata with weights in principal ideal domains (such as the ring $\Z$ of integers and the ring $\Q[x]$ of univariate polynomials with rational coefficients) was investigated in~\cite{HeerdtKR020}.
That paper relies on the fact that finitely generated modules over principal ideal domains are Noetherian for 
the termination of the learning algorithm. 
The methods of the paper 
do not address the question of the query and computational complexity of the learning problem.  The paper also leaves open the question of learning minimal representations of a given target function.  Here we give a method that runs in polynomial time in the case of automata with weights in $\Z$ and $\Q[x]$ and that learns minimal representations.

\section{Overview}
\label{sec:overview}

\subsection*{Linear Tail-Recursive Programs.}
The weighted extensions of  automata that are currently studied in the literature are able to model  simple classes of tail-recursive programs, including linear recurrences. 
Consider \Cref{prog:mod2}, which reads a string of $a$'s letter-by-letter from the input, and  computes the  function~$f_1:\{a\}^* \to \ZZ$ such that 
\[f_1(a^k)=\begin{cases}
    2 & k \equiv 0 \pmod 2\\
    1 & \text{otherwise.}
\end{cases}\]


\SetKwFunction{Prog}{main}%
\SetKwFunction{ProgAux}{prog}%

\begin{minipage}{.60\linewidth}
\begin{algorithm}[H]
\Fn{\ProgAux{$y_1,y_2$}}{  
    \Switch{read()}{
        \lCase{None}{\KwRet{$y_1+y_2$}}
        \lCase{Some $a$}{\ProgAux{$2y_2,\frac{1}{2}y_1$}}
    }
}
\lFn{\Prog{}}{\ProgAux{$2,0$}}
\caption{A  linear tail-recursive program computing~$f_1$. }
\label{prog:mod2}
\end{algorithm}
\end{minipage}
\qquad
\begin{minipage}{.30\linewidth}
    \begin{tikzpicture}[shorten >=1pt,node distance=2cm,on grid,auto,accepting/.style=accepting by arrow] 
        \node[state, initial, initial text={$2$},accepting below,accepting text={$1$}] (q_1)  {$q_1$}; 
        \node[state,accepting below,accepting text={$1$}] (q_2) [right = of q_1] {$q_2$}; 
        \path[->] 
              (q_1) edge [bend left=30] node {$a:\frac{1}{2}$} (q_2)
          (q_2) edge [bend right=-30] node {$a:2$} (q_1);
    \end{tikzpicture}

\end{minipage}

\vskip0.7\baselineskip

The above program  can be modelled by a \emph{weighted automaton} with two states~$q_1$ and $q_2$, as depicted on the right  above. The states $q_i$ represent the output of~$f$, based on the congruence classes modulo~$2$. 
Intuitively speaking, the weight of a word is the sum of the weights of all runs  of the automaton over  the word, where the weight of a run is the product of weights of its starting state, of each transition taken along the run, and of its last state. For the automaton of~\Cref{prog:mod2} the non-zero initial weights are shown by incoming arrows to the states, whereas final weights are shown by outgoing arrows; each transition is also labelled by the letter~$a$ and its weights. 

Formally, a $\mathbb{Q}$-weighted automaton~$\A=(\init,\trans,\fin)$ of dimension $n$ over an alphabet $\Sigma$ is defined by 
the initial weight vector~$\init \in \QQ^{1\times n}$, a transition function $\trans: \Sigma \to \QQ^{n\times n}$ and the final weight vector~$\fin\in \QQ^{n\times 1}$. The  semantics of~$\A$, denoted by $\sem{\A}:\Sigma^* \to \mathbb{Q}$,  maps each word $w=\sigma_1\cdots \sigma_k$ to its weights computed as $\init \trans({\sigma_1}) \ldots \trans({\sigma_k}) \fin$. The automaton of~\Cref{prog:mod2} is formally defined as 
\begin{align*}
\init:=\begin{bmatrix}
2 & 0
\end{bmatrix} 
\qquad 
\trans(a):=\begin{bmatrix}
0 & \frac{1}{2}\\
2 & 0
\end{bmatrix} \qquad 
\fin:=\begin{bmatrix}
1\\
1
\end{bmatrix}\,.
\end{align*}

The automaton of~\Cref{prog:mod2} is a unary(-alphabet) automaton over $\mathbb{Q}$; it is well-known that unary weighted
automata over a field coincide with linear recurrence sequences~\cite{BerstelR10} over the field. Recall that a rational sequence~$\{u_i\}_{i=1}^\infty$ is a linear recurrence sequence of order $d$ if it satisfies a recurrence relation of the form 
\[u_n=c_{d-1}u_{n-1} + \ldots + c_1u_{n-d+1},\] 
where  $c_i\in \mathbb{Q}$ and $c_1\neq 0$. The Fibonacci sequence, for example, is given by $F_0 =F_1 = 1$ and $F_k = F_{k-1}+F_{k-2}$ for all $k\geq 2$. The corresponding Fibonacci automaton is defined by  
\begin{align*}
\init:=\begin{bmatrix}
1 & 1
\end{bmatrix} 
\qquad 
\mu(a):=\begin{bmatrix}
1 & 1\\
1 & 0
\end{bmatrix} \qquad 
\fin:=\begin{bmatrix}
1\\
0
\end{bmatrix} \,.
\end{align*} 
The automaton computes the $k$-th Fibonacci number as the weight of the input~$a^{k-1}$ through its semantics $\init \trans(a)^{k-1}\fin$.

In the general setting, a recursive program computing a  function $f:\Sigma^{*} \to \mathbb{Q}$ can be realised by a $\mathbb{Q}$-weighted automaton if its so-called Hankel matrix  has finite rank~\cite{berstel1988rational}. This characterization encompasses a rich class of linear tail-recursive programs,  where all assignments are  linear updates of the form $\y \leftarrow \y M $, where
$\y:=(y_1,\ldots,y_n)$ is a tuple of variables and $M\in \mathbb{Q}^{n\times n}$.
See~\Cref{prog:lrecusrive} for a schematic illustration of such linear recursive programs. 
An $\mathbb{Q}$-weighted automaton  for such programs is defined accordingly as $(\init,\trans,\fin)$ over the alphabet $\Sigma$.

\begin{algorithm}[t]
\Fn{\ProgAux{$\y$}}{  
    \Switch{read()}{
        \lCase{None}{\KwRet{$\y\fin$}}
        \lCase{Some $a$}{\ProgAux{$\y \trans(a) $}}
        \lCase{Some $b$}{\ProgAux{$\y \trans(b) $}}
        $\vdots$
    }
}
\lFn{\Prog{}}{\ProgAux{$\init$}}
\caption{Scheme of linear tail-recursive programs}
\label{prog:lrecusrive}
\vspace{-0.6cm}
\end{algorithm}

Before we proceed, we note that in weighted automata the weight growth of each word $w$   is bounded by $c^{|w|}$ for a fixed positive constant~$c\in \ZZ$. In the next section, we will see that, in our proposed extension of weighted automata, the weight growth of each word $w$  can be of magnitude $(c_1|w|)^{c_2|w|}$ where  $c_1,c_2\in \ZZ$ are fixed positive constants. 

\subsection*{Polynomial Tail-Recursive Programs.}

Our proposed P-recursive programs will have a program counter $x$, that initially is set to zero and monotonously increases by one after each input letter, in order to store the length of the word. The updates on each variable $y_i$ is now in the form $y_i \leftarrow \sum_{j=1}^nP_j(x)y_j$
where $P_1,\ldots, P_{n} \in \mathbb{Q}[x]$ are univariate polynomials with rational coefficients in indeterminate~$x$. 
\Cref{prog:fact}  computes the following function~$f_2:\{a,b\}^* \to \mathbb{Z}$ defined by 
\[f_2(w)=\begin{cases}
    (|w|+1)! & \text{ if $w$ contains an odd number of $b$'s,}\\
    0 & \text{otherwise.}
\end{cases}\]


\SetKwFunction{Prog}{main}%
\SetKwFunction{ProgAux}{prog}%

\begin{minipage}{.52\linewidth}
\begin{algorithm}[H]
\Fn{\ProgAux{$y_1,y_2,x$}}{  
    \Switch{read()}{
        \lCase{None}{\KwRet{$xy_2$}}
        \lCase{Some $a$}{\ProgAux{$x y_1,x y_2,x+1$}}
        \lCase{Some $b$}{\ProgAux{$x y_2,x y_1,x+1$}}
    }
}
\lFn{\Prog{}}{\ProgAux{$1,0,1$}}
\caption{A  P-recursive program computing~$f_2$. }
\label{prog:fact}
\end{algorithm}
\end{minipage}
\qquad\qquad
\begin{minipage}{.35\linewidth}
    \begin{tikzpicture}[shorten >=1pt,node distance=2cm,on grid,auto,accepting/.style=accepting by arrow] 
        \node[state, initial, initial text={$1$},accepting below,accepting text={$0$}] (q_1)  {$q_1$}; 
        \node[state,accepting,accepting below,accepting text={$x$}] (q_2) [right = of q_1] {$q_2$}; 
        \path[->] 
            (q_1) edge [loop above]  node [align=center]{$a:x$\\$b:0$} ()
            edge [bend left=30] node [align=center]{$a:0$\\$b:x$} (q_2)
            (q_2) edge [bend right=-30] node [align=center]{$a:0$\\$b:x$} (q_1)
            edge [loop above] node [align=center]{$a:x$\\$b:0$} ();
    \end{tikzpicture}
\end{minipage}

\vskip0.8\baselineskip

We show that such $P$-recursive programs can be realised by our proposed extension of weighted automata, which we call \emph{P-finite automata}. This extension can be thought of as a symbolic weighted automata where transition  weights, as well as final weights, are parameterized by an indeterminate~$x$. Along the execution of a P-finite automaton over an input word, the value of indeterminate~$x$ stores the length of the input read so far.  

In the P-finite automaton representing~\Cref{prog:fact} there are two states corresponding to the variables~$y_1$ and $y_2$.
As is the case for weighted automata, the weight of a word is the sum of the weight of all runs of the automaton over  the word, where the weight of a run is the product of  weights of its starting state, of each transition taken along the run, and of its last state. The main difference is that the transition and final weights change in every step, as the value of $x$ gets updated after every new input letter. 
For instance, the weight of~$ab$ is $3!$ computed by 
 \[\underbrace{1}_{\text{initial weight of } q_1}\cdot \overbrace{1}^{\text{weight of } q_1 \xrightarrow{a : \, 1 } q_1}\cdot \overbrace{2}^{\text{weight of } q_1 \xrightarrow{b:\, 2 } q_2}\cdot \underbrace{3}_{\text{final weight of } q_2: \, 3}\]

Formally, a P-finite automaton~$\A=(\init,\trans,\fin(x))$ of dimension $n$ over an alphabet $\Sigma$ is defined by 
the initial weight vector~$\init \in \QQ^n$, a transition function $\trans: \Sigma \to \QQ[x]^{n\times n}$, and the final weight vector~$\fin(x)\in \QQ[x]^n$. 
In the sequel, for simplicity  of  notations we use $\mu(\sigma,k)$ instead of $\mu(\sigma)(k)$, with $\sigma \in \Sigma$ and $k\in \NN$. 
The  semantics of~$\A$, denoted by $\sem{\A}:\Sigma^* \to \mathbb{Q}$,  maps each word $w=\sigma_1\cdots \sigma_k$ to  \[\sem{\A}(w):=\init \trans({\sigma_1},1) \ldots \trans({\sigma_k},k) \fin(k+1)\, .\] 
Although the initial vector $\init$ is a vector of rationals, one can also look at it as a vector of polynomials (similar to the final vector $\fin(x)$) that is always evaluated on $0$ as it would lead to an equivalent semantics $\init(0) \trans({\sigma_1},1) \ldots \trans({\sigma_k},k) \fin(k+1)$. The automaton of~\Cref{prog:fact} is formally defined as 
\begin{align*}
\init:=\begin{bmatrix}
1 & 0
\end{bmatrix} 
\qquad 
\trans(a):=\begin{bmatrix}
x & 0\\
0 & x
\end{bmatrix} \qquad 
\trans(b):=\begin{bmatrix}
0 & x\\
x & 0
\end{bmatrix} \qquad 
\fin:=\begin{bmatrix}
0\\
x
\end{bmatrix}\,.
\end{align*} 

Unary   P-finite
automata coincide with  monic $P$-recursive sequences. A rational sequence~$\{u_i\}_{i=1}^\infty$ is a (monic) $P$-recursive sequence of order $d$ if it satisfies a recurrence relation of the form 
\[u_n=P_{d-1}u_{n-1}+\ldots P_1u_{n-d+1},\] 
where  $P_i\in \mathbb{Q}[x]$ and $P_1\neq 0$. 
Another example of monic $P$-recursive sequences comes from the famous  recurrence for the number of involutions, found by Heinrich August Rothe in 1800.
An involution on  a set $\{1, 2, \ldots ,k\}$ is a self-inverse permutation. 
The number of involutions, including the identity involution,   is given by $I_0=I_1=1$ and 
$I_{k}=I_{k-1}+(k-1)I_{k-2}$
for $k\geq 2$. The corresponding P-finite automaton is defined by  
\[\init=\begin{bmatrix}
    1 & 1
    \end{bmatrix} \qquad \mu(a)=\begin{bmatrix}
    1 & 1 \\
    x & 0
    \end{bmatrix} \qquad \fin=\begin{bmatrix}
    1 \\
    0
    \end{bmatrix}\,.\]
The P-finite automaton computes the number of involutions of $\{1,\ldots,k\}$ as the weight of the input~$a^{k-1}$ through its semantics $\init \prod_{i=1}^{k-1}\mu(a,i)\fin(k)$.
See \Cref{prog:precusrive} for a schematic illustration of a class of  polynomial tail-recursive programs that can be realized by a P-finite automata.     

\begin{algorithm}[t]
\Fn{\ProgAux{$\y,x$}}{  
    \Switch{read()}{
        \lCase{None}{\KwRet{$\y\fin(x)$}}
        \lCase{Some $a$}{\ProgAux{$\y M_a(x),x+1 $}}
        \lCase{Some $b$}{\ProgAux{$\y M_b(x),x+1 $}}
    }
}
\lFn{\Prog{}}{\ProgAux{$\init,1$}}
\vspace{-0.3cm}
\caption{Scheme of P-recursive programs}
\label{prog:precusrive}
\end{algorithm}







\subsection*{P-Solvable Loops and Extensions}

The model of P-recursive programs (or P-finite automata) bears similarities with the notion of $P$-solvable loops~\cite{Kovacs08} and its extensions~\cite{humenberger2017, humenberger2017invariant}. 
The latter are studied in the context of program analysis and invariant synthesis in particular.

The class of P-solvable loops is subsumed by that of \emph{linear} tail-recursive programs, as P-solvable loops allow only linear updates of program
variables~\cite{Kovacs08}.  We have also the class of \emph{extended P-solvable loops}~\cite{humenberger2017,humenberger2017invariant}, in which the sequence of values assumed by a program variable is
a sum of \emph{hypergeometric sequences}. A hypergeometric sequence $(u_n)_{n=0}^\infty$ is one that satisfies a polynomial recurrence $u_n = r(n)u_{n-1}$ for all $n\geq 1$, 
where $r(x)\in \Q(x)$ is a rational function. 
The class of extended P-solvable loops is thus incomparable with P-finite automata.
On the one hand, hypergeometric recurrences allow multiplication by rational functions (such as $r(x)$ above), not just polynomials.  On the other hand P-finite automata over a unary alphabet can define sequences that are not sums of hypergeometric sequences (see~\cite[Section 10]{Reutenauer12}).


\SetKwFunction{Prog}{main}%
\SetKwFunction{ProgAux}{prog}%

\begin{minipage}{.35\linewidth}
\begin{algorithm}[H]
\While{true}{
    $a$ := $2 (x+1) (x+\frac{3}{2})  a $\;
    $b$ := $4  (x+1)  b$\;
    $c$ := $\frac{1}{2} (x+\frac{3}{2}) c$\;
    $x$ := $x+1$
}
\end{algorithm}
\end{minipage}
\qquad
\begin{minipage}{.6\linewidth}
  \begin{algorithm}[H]
\Fn{\ProgAux{$a,b,c,x$}}{  
    \Switch{read()}{
        \lCase{None}{\KwRet{$(a,b,c)$}}
        \lCase{Some \_}{\ProgAux{$2 (x+1)(x+\frac{3}{2})a,4(x+1)b,\frac{1}{2}(x+\frac{3}{2})c ,x+1$}}
    }
}
\end{algorithm}
\end{minipage}

\vskip0.8\baselineskip

The program shown above on the left is an example of an extended P-solvable loop, taken from~\cite{humenberger2017invariant}. The corresponding P-recursive program is shown on the right. Since the focus of~\cite{humenberger2017invariant} is on invariant generation they consider loops that run forever.  In our P-recursive programs any input letter invokes the recursive call.

\subsection*{Learning Algorithm}

The high-level structure of the algorithm for learning $P$-finite automata is shown in the diagram below.  The algorithm consists of a main procedure \texttt{exact\_learner} and a subroutine \texttt{partial\_learner}.  
It is also not assumed to know \emph{a priori} an upper bound $n$ on the number of states of the target automaton nor a degree bound $d$ on the polynomials appearing therein.  Hence the procedure 
\texttt{exact\_learner} searches through pairs of possible values of $d$ and $n$ and for each such pair it calls 
\texttt{partial\_learner} that tries to learn a target automaton subject to these bounds.

\begin{center}
    \scalebox{0.8}{
    \begin{tikzpicture}[
        state/.style={draw,rounded corners,align=center}
        ]
        \node[state] (Vnd) {\begin{tabular}{c}update\\$d,n$\end{tabular}};
        \node[state,right=1cm of Vnd] (Bound) {\begin{tabular}{c}compute\\timeout $\ell$\end{tabular}};
        \node[state,right=0.4cm of Bound] (Hyp) {\begin{tabular}{c}construct\\ hypothesis automaton\end{tabular}};
        \node[state,below right=0.7cm and 2cm of Bound] (Add) {add one row and at most one column};
        \node[state,right=5.7cm of Bound] (OK) {\begin{tabular}{c}return learned\\automaton\end{tabular}};

        \node[draw,dashed,minimum width=8.5cm,minimum height=2.7cm,above right=-2.5cm and 0.2cm of Bound] (Exact) {};
        \node[above left=0cm and -3cm of Exact] (ExactText) {\texttt{partial\_learner}};

        \node[draw,minimum width=14cm,minimum height=3.5cm,above right=-2.7cm and -5cm of Bound] (EExact) {};
        \node[above left=0cm and -2.6cm of EExact] (EExactText) {\texttt{exact\_learner}};
        
        \draw[->] (Vnd) edge node[auto] {} (Bound);
        \draw[->] (Bound) edge node[auto] {} (Hyp);
        \draw[->] (Hyp) edge node[above] {Teacher:} node[below] {OK} (OK);
        \draw[->] ([xshift=1.6cm]Hyp.south) to node[auto,swap,yshift=0.3cm,xshift=0.08cm] {Teacher: counterexample} ([xshift=-0.8cm]Add.north);
        \draw[->] ([xshift=-0.6cm]Add.north) to  node[swap,auto,yshift=-0.2cm] {if $\#\text{columns} < \ell$} ([xshift=1.8cm]Hyp.south);
        \draw[->] (Add) to [in=-35,out=180] node[auto,pos=0.8] {if $\#\text{columns}\geq \ell$} (Vnd);
    \end{tikzpicture}
    }
\end{center}

The subroutine \texttt{partial\_learner} can be seen as a
generalisation of the algorithm of~\cite{exact-learning-wa} for learning $\Q$-weighted
automata.  As in~\cite{exact-learning-wa}, the basic data structure, which we call the \emph{table}, is a finite fragment of the Hankel matrix of the
target function $f:\Sigma^*\rightarrow \Q$.
Formally speaking, the table is a finite matrix whose rows and columns are labelled by words and such that the entry with index $(u,v)\in
\Sigma^* \times \Sigma^*$ is $f(uv)$.  We will later on denote this table by $\PHankel{\SeqR}{\SeqC}$ where $\SeqR$ and $\SeqC$ are the sequences of words labelling the rows and columns of the table. The table is used to construct a
hypothesis automaton.  This involves making membership queries to interpolate polynomials
that label the state-to-state transitions of the automaton.
Since there is a bound $d$ on the maximum degree of the polynomials,
 the process of interpolation is reduced to solving a system of linear equations.

Once constructed, the hypothesis automaton is passed to the
teacher.  If the hypothesis is correct, the algorithm terminates and returns the hypothesis automaton; if it is incorrect, the
counterexample given by the teacher is used to augment the table by
adding a new row and at most one column (using membership queries to fill in the missing table entries).  After augmenting the table, a new hypothesis automaton can be constructed.

For any given run of \texttt{partial\_learner},
since the degree bound $d$ may not be sufficient to learn the target automaton, there is a timeout $\ell$ (a function of $d$ and $n$) on the run of 
\texttt{partial\_learner}.  If the timeout is reached then the run is abandoned and control returns to \texttt{exact\_learner}.

Going back to the case of $\QQ$-weighted automata,
the termination (and polynomial-time bound) of the learning algorithm of~\cite{exact-learning-wa} relies on the classical result
of Carlyle and Paz that a function $f:\Sigma^*\rightarrow \Q$ is recognisable
by a $\Q$-weighted automaton if and only if its Hankel matrix has finite rank.
The idea is that every unsuccessful equivalence query results in 
the rank of the table increasing by one, and so the number of equivalence queries is at most the rank of the Hankel matrix of the target function.  Such a result is not available in
the case of P-finite automata.

The termination proof and polynomial complexity bound for \texttt{exact\_learner} rely on an analysis of the timeout.  For this we associate with a run
of \texttt{partial\_learner} 
an increasing chain of submodules over the
polynomial ring $\Q[x]$, whose length is the
number of equivalence queries.  Since $\Q[x]$ is a Noetherian ring,
such a chain must have finite length.  We give a novel fine-grained analysis of the maximum length of 
 an increasing chain of modules over $\Q[x]$ to guarantee that we will learn the target automaton within the timeout if the degree bound parameter is sufficiently large.   (This analysis even allows us to obtain a polynomial bound on the overall computational and query complexity of our learning algorithm.) We highlight that in contrast to the case of $\Q$-weighted automata, the length of this chain of modules depends on the length of the counterexamples returned by the teacher.  As an intermediate result, we use this analysis to show that
equivalence of P-finite automata is decidable in polynomial time.

Our analysis of increasing chains of modules over $\Q[x]$ applies
equally well to $\Z$.  We use the version for $\Z$ to give a polynomial-time
algorithm to decide whether or not the function recognised by a given
$\Q$-weighted automaton is $\Z$-valued.  We then observe that such an
algorithm can be used to reduce the problem of learning $\Z$-weighted
automata to that of learning $\Q$-weighted automata.

\SetAlgorithmName{Figure}{Figure}{Figure}
\crefname{algocf}{fig.}{figs.}
\Crefname{algocf}{Figure}{Figures}

\section{Background on Module Theory}
\label{sec:prelim}

Let $R$ be a commutative ring with unity. An \emph{ideal} of $R$ is an additive subgroup $I\subseteq R$ such that $ra \in I$ for all $r\in R$ and $a\in I$.   The ring~$R$ is said to be a \emph{principal ideal domain} (PID) if every ideal~$I$ is generated by a single element, that is, there is some $a\in R$ such that $I= \{ ra : r \in R \}$.  We will mainly work with $\Z$ and $\Q[x]$, which  are both PIDs.
 
An $R$-module $M$ is an abelian group together with a scalar multiplication $(\cdot): R \times M \rightarrow M$ such that,
\begin{itemize}
    \item $r\cdot(m_1+m_2)=rm_1+rm_2$,
    \item $(r_1+r_2)\cdot m=r_1\cdot m+r_2 \cdot m$,
    \item $(r_1r_2)\cdot m=r_1\cdot (r_2 \cdot m)$, $1_R\cdot m=m$,
\end{itemize}
for all scalars $r, r_1, r_2 \in R$, and for all elements $m, m_1, m_2 \in M$.   A key example of an $R$-module is $R^n$, where $n\in \N$, in which addition and scalar multiplication act pointwise.

Let $M$ be an $R$-module.  A \emph{submodule} of $M$ is a subgroup that is closed under scalar multiplication.  A subset~$\{\vv_i : i\in I\} \subseteq M$ is said to be \emph{linearly independent} if an $R$-linear combination 
$\sum_{i\in I} r_i\vv_i$ is only zero if all the $r_i$ are zero.
We write $\mspan[R]{\vv_i : i\in I}$ for the $R$-\emph{span} of  the $\vv_i$, defined by
\[
\mspan[R]{\vv_i : i\in I} := \left\{\sum_{i\in I}r_i \cdot \vv_i : r_i \in R \right\}.
\]
We say that $\{\vv_i : i\in I\}$ \emph{generates} $M$ if $M = \mspan[R]{\vv_i : i\in I}$. If the $\vv_i$ are, in addition, linearly independent, then we say that $\{\vv_i : i\in I\}$ is a \emph{basis} of $M$.   If $R$ is a PID then all submodules $M$ of $R^n$ have a basis and all bases have the same cardinality. 

A key difference between modules and vector spaces is that one can have proper inclusions between modules of the same rank.  For example, we have that $15\Z \subsetneq 3\Z \subsetneq \Z$ are all rank-1 submodules of $\Z$.  However it remains true that all finitely generated $R$-modules are \emph{Noetherian}: every strictly increasing chain of submodules of $M$ is  finite.   A crucial ingredient in the analysis of our algorithms is an upper bound on the length of strictly increasing chains of modules in~$R^n$. For this, we use the Smith Normal Form. 

Let $R$ be either $\Z$ or $\Q[x]$ and let $M = \mspan[R]{\vv_1 , \ldots,\vv_m}$ be a finitely generated $R$-module of rank $r$.
   Using the Smith Normal Form~\cite{SNF}, one can show that from the set~$\{\vv_1 , \ldots,\vv_m\}$ of generators of $M$, we can compute, in polynomial time, a $R$-basis $\boldsymbol{f}_1,\ldots,\boldsymbol{f}_n$ of $R^n$ and elements~$d_1,\ldots,d_r \in R$ such that~$d_1\boldsymbol{f}_1,\ldots,d_r\boldsymbol{f}_r$ is an $R$-basis of $M$. 
 To be specific, 
 the matrix 
 \[A := \begin{bmatrix} \vv_1 & \ldots & \vv_m \end{bmatrix}\] 
 can be put in Smith Normal Form, that is, 
 $A$ can be written as $S \tilde{A} T$ where
 $S \in R^{n\times n}$ and $T \in R^{m \times m}$ are invertible matrices, and   \[\tilde{A}=diag(d_1,\ldots,d_r,0,\ldots,0)\] is  a diagonal matrix 
 such that $d_i\,|\, d_{i+1}$  for all $1 \leq i < r$.
 Define $D_i(A)$ to be the greatest common divisor of the $i \times i$ minors of $A$ for $i=0,\ldots,r$,
(so that $D_0(A) = 1$). It is known~\cite{newman1997smith} that for all $1\leq i \leq r$ we have
 \begin{align}
 \label{eq:diSNM}
     d_i = \frac{D_i(A)}{D_{i-1}(A)} \, .
 \end{align} In the above-mentioned decomposition of $A$ as $S \tilde{A} T$, the columns of $S$ are in fact the $R$-basis $\boldsymbol{f}_1,\ldots,\boldsymbol{f}_n$ of $R^n$, and $S \tilde{A}$ is a matrix with columns $d_1\boldsymbol{f}_1,\ldots,d_r\boldsymbol{f}_r$, representing an $R$-basis of $M$.

\begin{proposition}
Let $n,k\in \NN$ and $M_0\subsetneq M_1 \subsetneq\cdots \subsetneq M_k$ be a strictly
increasing chain of submodules of~$\mathbb{Z}^n$, all having the
same rank $r \leq n$.
Assume that $M_0$ is generated by a collection of vectors
whose entries have absolute value at most $B$.  Then
$k\leq r \log B+ \frac{r}{2} \log r$.
\label{prop:stablise}
\end{proposition}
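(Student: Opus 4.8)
The plan is to attach to each rank-$r$ submodule $N\subseteq\mathbb{Z}^n$ a positive integer $\det(N)$, namely the positive greatest common divisor $D_r(A)$ of the $r\times r$ minors of any $n\times m$ integer matrix $A$ whose columns generate $N$, and then to show two facts: (i) this quantity is multiplicative with respect to the index along a strict inclusion, so it drops by at least a factor $2$ at each step of the chain; and (ii) for $M_0$ it is bounded above by $r^{r/2}B^r$. Putting these together forces $2^k\le\det(M_0)\le r^{r/2}B^r$, and taking base-$2$ logarithms gives $k\le r\log B+\tfrac r2\log r$. We may assume $k\ge 1$, hence $r\ge 1$ (a rank-$0$ submodule of $\mathbb{Z}^n$ is $\{0\}$ and admits no strict rank-preserving extension) and $B\ge 1$.

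First I would check that $\det(N)$ is well defined. If $A$ and $A'$ are two matrices whose columns generate $N$, then $A'=AT$ and $A=A'T'$ for suitable integer matrices $T,T'$, and the Cauchy--Binet formula expresses every $r\times r$ minor of $AT$ as an integer combination of the $r\times r$ minors of $A$; hence $D_r(A)\mid D_r(A')$ and symmetrically $D_r(A')\mid D_r(A)$, so the two gcds agree. By the Smith Normal Form facts recalled above, $\det(N)=D_r(A)=d_1\cdots d_r$, the product of the invariant factors of $N$; since $N$ has rank exactly $r$, no $d_i$ vanishes, so $\det(N)\ge 1$.

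Next I would establish the multiplicativity step. Suppose $N\subseteq N'$ are both of rank $r$. Choose a $\mathbb{Z}$-basis $b_1,\ldots,b_r$ of $N'$ and a $\mathbb{Z}$-basis $c_1,\ldots,c_r$ of $N$ (these exist since $\mathbb{Z}$ is a PID). The inclusion gives $[c_1\ \cdots\ c_r]=[b_1\ \cdots\ b_r]\,C$ for a square matrix $C\in\mathbb{Z}^{r\times r}$, and by the standard index formula $C$ is nonsingular with $|\det C|=[N':N]$ (in particular $N'/N$ is finite). Applying Cauchy--Binet to $[b_1\ \cdots\ b_r]\,C$ — whose right-hand factor is square of size $r$ — every $r\times r$ minor of the product equals the corresponding $r\times r$ minor of $[b_1\ \cdots\ b_r]$ times $\det C$; taking gcds yields $\det(N)=|\det C|\cdot\det(N')=[N':N]\cdot\det(N')$. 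Hence, along the chain, $\det(M_i)=[M_{i+1}:M_i]\cdot\det(M_{i+1})\ge 2\,\det(M_{i+1})$, so iterating gives $\det(M_0)\ge 2^k\det(M_k)\ge 2^k$. For the upper bound, take the generating matrix $A$ of $M_0$ to have all entries in $[-B,B]$; since $M_0$ has rank $r$, some $r\times r$ minor of $A$ is nonzero, and $\det(M_0)=D_r(A)$ divides it, so $\det(M_0)$ is at most the absolute value of that minor, which by Hadamard's inequality is at most the product of the Euclidean norms of its $r$ rows, each at most $\sqrt r\,B$; thus $\det(M_0)\le(\sqrt r\,B)^r=r^{r/2}B^r$. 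Combining, $2^k\le r^{r/2}B^r$, i.e. $k\le r\log B+\tfrac r2\log r$.

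The main obstacle is the multiplicativity identity $\det(N)=[N':N]\cdot\det(N')$ for rank-$r$ submodules: this is the step where one must pass from arbitrary generating sets to compatible bases and invoke Cauchy--Binet (equivalently, observe that $\det(N)$ is the content of the generator of the rank-one module $\Lambda^r N\subseteq\Lambda^r\mathbb{Z}^n$, and that $\Lambda^r(\cdot)$ turns the inclusion $N\subseteq N'$ into multiplication by $\det C$). Once this is in place, the lower bound $\det(M_0)\ge 2^k$ is immediate and the Hadamard upper bound is routine; the identity $d_1\cdots d_r=D_r(A)$ needed to connect $\det$ with the Smith Normal Form is exactly the telescoping of \eqref{eq:diSNM}.
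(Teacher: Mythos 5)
Your proof is correct and follows essentially the same route as the paper's: both rest on the Smith Normal Form identity $d_1\cdots d_r = D_r(A)$, the Hadamard bound $d_1\cdots d_r\le r^{r/2}B^r$, and the observation that each strict inclusion in the chain contributes a factor of at least $2$ to an index. The only cosmetic difference is that the paper telescopes the index $[M:M_0]$ inside the fixed saturation $M=\mspan[\ZZ]{\boldsymbol{f}_1,\ldots,\boldsymbol{f}_r}$, whereas you telescope the multiplicative invariant $D_r(\cdot)$ directly along the chain, justifying its multiplicativity by Cauchy--Binet.
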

  
\begin{proof}[Proof Sketch]
By assumption, there are vectors $\vv_1,\ldots,\vv_m \in \mathbb{Z}^n$ that generate $M_0$ and
whose entries have absolute value at most $B$. 
Using Smith Normal Form, there exists a basis $\boldsymbol{f}_1,\ldots,\boldsymbol{f}_n$ of $\mathbb{Z}^n$
and positive integers $d_1,\ldots,d_r$, such that 
$d_1\boldsymbol{f}_1,\ldots,d_r\boldsymbol{f}_r$ is a basis of $M_0$. Furthermore, by Equation~\eqref{eq:diSNM} it follows  that
$d_1\cdots d_r$ is the greatest common divisor of all $r\times r$
minors of the $n\times m$ matrix with columns
$\vv_1,\ldots,\vv_m$.  By Hadamard's inequality it follows
that $d_1\cdots d_r\leq B^rr^{r/2}$.

Let $M$ be the module generated by $\boldsymbol{f}_1,\ldots,\boldsymbol{f}_r$.
Since the modules $M_0,\ldots,M_k$ all have rank $r$, they are all 
contained in $M$. 
Recall that the index~$[M:M_0]$ of a subgroup $M_0$ in the group $M$,   is the number of  cosets of $M_0$ in $M$. 
Observe that the index
$[M:M_0]$  is $d_1\cdots d_r$.  We
also have $[M_{k+1}:M_k] \geq 2$ for $k=0,\ldots,n-1$ since
$M_k$ is a proper submodule of $M_{k+1}$.  It follows that
$n \leq \log(d_1 \cdots d_r) \leq r \log B+ \frac{r}{2} \log r$.
\end{proof}

\begin{remark}
\label{remark:stabilise}
The bound in \Cref{prop:stablise} is tight: consider $\Vec{e_1}=\begin{bmatrix}
    1 & 0 & \cdots & 0
\end{bmatrix}^\top\in\ZZ^n$ and some positive integer $b$. The strictly increasing chain of modules of rank $1$:
\[\mspan[\ZZ]{2^b \Vec{e_1}} \subsetneq \mspan[\ZZ]{2^b\Vec{e_1},2^{b-1}\Vec{e_1}} \subsetneq \cdots \subsetneq \mspan[\ZZ]{2^b\Vec{e_1},2^{b-1}\Vec{e_1},\cdots,\Vec{e_1}}\]
has length $b$, which is the bound given by \Cref{prop:stablise}.
\end{remark}

 In the next Proposition, we generalize \Cref{prop:stablise}   to the PIDs for which there exists a well-defined greatest common divisor function. A detailed proof can be found in \Cref{sec:proof_seq_length}.

\begin{restatable}{proposition}{lemmagenerallength}
\label{lemma:general_length}
Let $n,k \in \NN$. 
  Let $R$ be a PID and 
  $M=\langle \vv_1,\ldots, \vv_m \rangle$ be a $R$-submodule of $R^n$. Let~$A$ be the $n\times m$ matrix whose $i$-th column is $\vv_i$.   Let $
  M\subsetneq M_1\subsetneq M_2\subsetneq \ldots \subsetneq M_k
  $ be a strictly increasing chain of $R$-submodules of $R^n$, all having the same rank $r\leq n$. Then $k$ is bounded by the number of (not necessarily distinct) prime factors of $D_r(A)$.
\end{restatable}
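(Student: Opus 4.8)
The plan is to mirror the proof of \Cref{prop:stablise}, with the group index $[M:M_0]$ replaced by the \emph{length} of a torsion module over $R$. First I would put the matrix $A$ in Smith Normal Form, obtaining an $R$-basis $\boldsymbol{f}_1,\ldots,\boldsymbol{f}_n$ of $R^n$ and nonzero elements $d_1,\ldots,d_r\in R$ with $d_i\mid d_{i+1}$ such that $d_1\boldsymbol{f}_1,\ldots,d_r\boldsymbol{f}_r$ is an $R$-basis of $M$. By \Cref{eq:diSNM} together with $D_0(A)=1$ we have $d_1\cdots d_r=D_r(A)$, so it suffices to bound $k$ by the number of prime factors of $d_1\cdots d_r$, counted with multiplicity.

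The crucial step, and the one I expect to be the main obstacle, is to show that every module in the chain lies inside $M':=\mspan[R]{\boldsymbol{f}_1,\ldots,\boldsymbol{f}_r}$. Here I would pass to the field of fractions $K$ of $R$, viewing $R^n\subseteq K^n$: because the $d_i$ are nonzero, the $K$-span of $M$ is the $r$-dimensional subspace $V:=K\boldsymbol{f}_1\oplus\cdots\oplus K\boldsymbol{f}_r$. Since $M\subseteq M_i$ and $M_i$ has rank $r$ --- and a submodule of $R^n$ is torsion-free, so its rank equals the dimension of its $K$-span --- the $K$-span of $M_i$ must equal $V$, whence $M_i\subseteq V\cap R^n$. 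Expanding an arbitrary element of $R^n$ in the $R$-basis $\boldsymbol{f}_1,\ldots,\boldsymbol{f}_n$ shows that $V\cap R^n=M'$, so $M_i\subseteq M'$ for every $i$. The care needed here is precisely in keeping the notion of rank consistent throughout (cardinality of an $R$-basis $=$ dimension of the $K$-span), which is legitimate because every module involved is a finitely generated torsion-free module over the PID $R$.

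Next I would compute the length of the finite $R$-module $N:=M'/M$. From the explicit bases, $N\cong\bigoplus_{i=1}^r R/d_iR$, so it is enough to determine $\mathrm{length}(R/dR)$ for a nonzero $d\in R$. Writing $d=u\,p_1^{a_1}\cdots p_s^{a_s}$ as a unit times powers of pairwise non-associate primes, the Chinese Remainder Theorem gives $R/dR\cong\prod_j R/p_j^{a_j}R$, and each factor $R/p_j^{a_j}R$ carries the composition series $0\subsetneq (p_j^{a_j-1})/(p_j^{a_j})\subsetneq\cdots\subsetneq R/(p_j^{a_j})$, all of whose successive quotients are isomorphic to the field $R/p_jR$ (here one uses that a nonzero prime ideal of a PID is maximal). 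Hence $\mathrm{length}(R/dR)=a_1+\cdots+a_s$, the number of prime factors of $d$ with multiplicity, and therefore $\mathrm{length}(N)$ equals the number of (not necessarily distinct) prime factors of $d_1\cdots d_r=D_r(A)$.

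Finally, writing $M_0:=M$, the images $M_i/M_0$ form a strictly increasing chain of submodules of $N$. Since $N$ has finite length and length is additive on short exact sequences, $\mathrm{length}(N)\ge\mathrm{length}(M_k/M_0)=\sum_{i=0}^{k-1}\mathrm{length}(M_{i+1}/M_i)\ge k$, because each quotient $M_{i+1}/M_i$ is nonzero and hence of length at least one. Thus $k\le\mathrm{length}(N)$, which is exactly the number of (not necessarily distinct) prime factors of $D_r(A)$, as required. The length computation and this final counting argument are routine bookkeeping over a PID; the real content is the containment $M_i\subseteq M'$ established in the second paragraph.
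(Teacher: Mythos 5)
Your proof is correct and follows essentially the same route as the paper's: Smith Normal Form to exhibit the chain inside the free module $\mspan[R]{\boldsymbol{f}_1,\ldots,\boldsymbol{f}_r}$, then additivity of an invariant of the torsion quotient that counts prime factors of $d_1\cdots d_r=D_r(A)$, with each strict inclusion contributing at least one. The only differences are cosmetic: you establish the containment by passing to the field of fractions rather than by the paper's direct linear-independence argument, and you realise the invariant as composition-series length (proved via CRT) where the paper defines $\dim_R$ from the invariant-factor decomposition and cites its additivity.
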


\section{\texorpdfstring{$\mathbb{Z}$}{Z}-weighted Automata}
\label{sec:Zlearning}
In this section, we start by giving a procedure to decide  in polynomial
time whether a $\QQ$-weighted automaton computes an integer-valued function. 
For every ``yes'' instance our procedure returns an
equivalent $\ZZ$-weighted automaton and for every ``no'' instance it
returns a word whose weight is non-integer.  This algorithm can be
regarded as an effective (and computationally efficient) version of
the well-known fact that $\QQ$ is a Fatou extension of $\ZZ$~\cite[Chapter 7]{BerstelR10}.  
As a corollary of the above procedure, we give a polynomial-time reduction of the exact learning
problem for $\ZZ$-automata to the exact learning problem for
$\QQ$-automata. (One can similarly reduce the exact learning problem for automata with weights in the ring $\QQ[x]$ to that for automata with weights in the quotient field $\QQ(x)$.)


\subsection{\texorpdfstring{$\ZZ$}{Z}-valuedness of \texorpdfstring{$\QQ$}{Q}-automata}


Let $\A=(\init,\trans,\fin)$ be a \emph{$\QQ$-weighted
  automaton} of dimension $n$ over alphabet $\Sigma$. Here
$\init \in \QQ^{1\times n}$,
$\trans(\sigma) \in \QQ^{n\times n}$ for all $\sigma\in\Sigma$,
and $\fin \in \QQ^{n\times 1}$.  We say that such an automaton
$\A$ is 
\emph{$\mathbb{Z}$-weighted} if all entries of $\init,\fin$ and those of the
matrices $\trans(\sigma)$ are integers.  Let $I_n$ be the $n\times n$ identity matrix. We extend
$\trans$ to a map $\trans:\Sigma^*\rightarrow \QQ^{n\times n}$ by
writing $\trans(\varepsilon) := I_n$  and
$\trans(w \sigma) := \trans(w)\mu(\sigma)$ for all
$\sigma\in \Sigma$ and $w\in\Sigma^*$.  The semantics of $\A$, that is, the function computed by $\A$, is given by $\sem{\A} : \Sigma^* \rightarrow \mathbb{Q}$ with $\sem{\mathcal{A}}(w) := \init \trans(w) \fin$. Automata $\mathcal{A}_1,\mathcal{A}_2$ over the same alphabet $\Sigma$ are said to be \emph{equivalent} if $\sem{\mathcal{A}_1} = \sem{\mathcal{A}_2}$. An automaton $\mathcal{A}$ is \emph{minimal} if there is no equivalent automaton with fewer states.

Define the \emph{forward reachability set} of $\A$ to be
$\{ \init \trans(w) : w\in \Sigma^*\}$ and define the \emph{backward
  reachability set} to be $\{ \trans(w)\fin : w\in\Sigma^*\}$.  The
\emph{forward space} and \emph{forward module} of $\A$ are
respectively the $\QQ$-subspace of $\QQ^n$ and
$\ZZ$-submodule of $\QQ^n$ spanned by the forward reachability
set, \emph{viz.},
\[\mspan[\QQ]{\init \trans(w) : w\in \Sigma^*} \qquad \text{ and } \qquad \mspan[\ZZ]{\init \trans(w) : w\in \Sigma^*}\,.\] 

The backward space and backward module are defined analogously.
The forward space is the smallest (with respect to inclusion) vector space that contains $\init$ and is closed under post-multiplication by $\trans(\sigma)$.  The forward module is likewise the smallest module that contains $\init$ and is closed under post-multiplication
by $\trans(\sigma)$.  Analogous statements apply to the backward space and backward module.

Let $F \in \QQ^{m_f \times n}$ with $m_f\leq n$ be
a matrix whose rows form a basis of the forward space of~$\A$. It is known that there are unique $\init_f \in \Q^{1\times m_f}$, $\fin_f\in\Q^{m_f \times 1}$ and 
$\trans_f(\sigma)\in \Q^{m_f\times m_f}$, for all $\sigma \in \Sigma$, such that:
\begin{gather}\init_fF=\init \qquad \trans_f(\sigma)F=F\trans(\sigma) \qquad \fin_f=F\fin \,.
\label{eq:forward}
\end{gather}
Similarly, let
$B \in \QQ^{n \times m_b}$ with $m_b\leq n$ be a matrix whose columns form a basis of the
backward space of~$\A$. 
It is known that there are unique $\init_b \in \QQ^{1\times m_b},\fin_b\in \QQ^{m_b\times 1}$ and 
$\trans_b(\sigma)\in \Q^{m_b\times m_b}$, for all $\sigma \in \Sigma$, such that:
\[\init_b=\init B \qquad B\trans_b(\sigma)=\trans(\sigma)B \qquad B\fin_b=\fin \,.\] 
The automaton~$\A_f=(\alpha_f,\mu_f,\beta_f)$ is  a \emph{forward conjugate} of~$\A$, and the automaton $\A_b=(\alpha_b,\mu_b,\beta_b)$ is a \emph{backward conjugate} of $\A$. These automata are equivalent to~$\A$, meaning that  \[\sem{\A_f}=\sem{\A_b}=\sem{\A}\,. \]

\bigskip

The procedure to decide $\ZZ$-valuedness of  $\QQ$-automata is  a variant of the classical minimisation algorithm for $\QQ$-weighted
automata and it is described in~\Cref{alg:Z_compute}. Below, we first work through a subroutine used in the algorithm.

It is classical that given an automaton $\A$ we can compute
in polynomial time a $\QQ$-basis of the forward vector space that is comprised of
vectors in the forward reachability set.  
An analogous result holds
for the backward space~\cite{doi:10.1137/0221017,kiefer2020notes}.  
The forward module need not be finitely
generated in general, but  it will be finitely generated if the forward
reachability set is contained in $\ZZ^n$.   

The procedure \ZGenerators, shown in \Cref{alg:forward module}, is a
polynomial-time algorithm that, for an input~$\QQ$-automaton, 
either outputs a finite basis of the
forward module of $\A$ or a non-integer vector in the forward
reachability set.  
Intuitively, it builds a set of words $W$, starting from $\{ \varepsilon\}$, by adding words that augment the module  $\mspan[\ZZ]{\init\trans(u) : u \in W}$. 
When no such word can be found, the set $\{\init\trans(u) : u \in W\}$ will form a generating set for the forward module. 

Notice that the procedure is based on a two-pass search:
first we search for words that increase the rank of the forward module and then for words that augment the forward module while the rank is stable.  This allows us to obtain a polynomial-time running bound through a single application of \Cref{prop:stablise} to the second phase of the search.  
We do not know if it is possible 
to obtain a polynomial bound under arbitrary search orders. 


\begin{algorithm}[t]
\Fn{\ZGenerators{$\A$}}{
  $W := \{\varepsilon\}$\;
  \tcp{\textcolor{blue}{Finding words that increase the rank}}
  \SetAlgoVlined
  \SetAlgoShortEnd
  
  \While{there is $(w,\sigma) \in W\times \Sigma$ such that $\init\trans(w\sigma) \not\in \mspan[\QQ]{\init\trans(u) : u \in W}$}{\label{linealgo:first-while}
  
    $W := W \cup \{w\sigma\}$\;
    \lIf{$\init\trans(w\sigma) \not\in \ZZ^n$}{\KwRet{$w\sigma$}}
  }
  \tcp{\textcolor{blue}{Finding words that augment the module}}
  \While{there is $(w,\sigma) \in W\times \Sigma$ such that $\init\trans(w\sigma) \not\in \mspan[\ZZ]{\init\trans(u) : u \in W}$}{\label{linealgo:second-while}
    $W := W \cup \{w\sigma\}$\;
    \lIf{$\init\trans(w\sigma) \not\in \ZZ^n$}{\KwRet{$w\sigma$}}
  }
  \KwRet{$W$}
}
\caption{Computing generators of the forward module or a counterexample.}
\label{alg:forward module}
\end{algorithm}


\begin{proposition}
The procedure \ZGenerators, described in \Cref{alg:forward module}, is a polynomial-time algorithm that given a 
$\mathbb{Q}$-automaton $\mathcal{A}=(\init,\trans,\fin)$ of
  dimension $n$ over alphabet $\Sigma$ either outputs a finite set of words $W$ generating the forward module of $\A$, namely,  \[\mspan[\ZZ]{\init\trans(w) : w \in W}\, =\mspan[\ZZ]{\init\trans(w) : w \in \Sigma^*} ,\]
  or else a word
  $w\in\Sigma^*$ such that $\init \trans(w)\not\in \ZZ^n$.
  \label{prop:promise}
\end{proposition}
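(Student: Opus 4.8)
The plan is to argue correctness and the polynomial time bound separately, and for the latter to isolate the two \texttt{while}-loops and bound each one. For \textbf{correctness}, first I would verify the loop invariant that at all times $\mspan[\ZZ]{\init\trans(u) : u \in W} \subseteq \mspan[\ZZ]{\init\trans(w) : w \in \Sigma^*}$, and that every word added is of the form $w\sigma$ with $w\in W$, so $W$ is always prefix-closed (it starts at $\{\varepsilon\}$). If the procedure returns a word $w\sigma$, it does so only after checking $\init\trans(w\sigma)\notin\ZZ^n$, so this output is correct. If the procedure returns $W$, then upon exit from the second loop there is no $(w,\sigma)\in W\times\Sigma$ with $\init\trans(w\sigma)\notin\mspan[\ZZ]{\init\trans(u):u\in W}$; hence the $\ZZ$-module $\mspan[\ZZ]{\init\trans(u):u\in W}$ contains $\init$ (take $u=\varepsilon$) and is closed under post-multiplication by each $\trans(\sigma)$ on generators, and therefore — since post-multiplication by $\trans(\sigma)$ is $\ZZ$-linear — it is closed under post-multiplication by $\trans(\sigma)$ on all of the module. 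By the characterisation that the forward module is the smallest such module (stated in the excerpt), it equals $\mspan[\ZZ]{\init\trans(w):w\in\Sigma^*}$, giving the displayed equality. Also note that if the procedure returns $W$ without ever triggering a \texttt{return}, then in particular $\init\trans(w)\in\ZZ^n$ for every $w\in W$, so the generators are integer vectors, as needed.

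For the \textbf{running time}, consider the first loop. Each iteration strictly increases $\dim_\QQ \mspan[\QQ]{\init\trans(u):u\in W}$, which is at most $n$; so there are at most $n$ iterations, and $|W|\le n+1$ when the first loop terminates. Each iteration does a polynomial-time membership/rank test over $\QQ$ (checking whether a vector lies in the $\QQ$-span of at most $n+1$ given vectors), so the first phase runs in polynomial time, and all vectors $\init\trans(u)$, $u\in W$, are computed in polynomial time from $\A$. For the second loop, the rank of $\mspan[\QQ]{\init\trans(u):u\in W}$ never changes (any word violating that would have been caught in the first loop, since $W$ only grows — here I need the observation that adding $w\sigma$ with $\init\trans(w\sigma)$ already in the $\QQ$-span cannot later enable some other $w'\sigma'$ to leave the $\QQ$-span, which holds because the $\QQ$-span of $\{\init\trans(u):u\in W\}$ is unchanged by such an addition). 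So throughout the second loop the $\ZZ$-modules $M_i := \mspan[\ZZ]{\init\trans(u):u\in W_i}$ form a strictly increasing chain of submodules of $\ZZ^n$, all of the same rank $r\le n$. The initial module $M_0$ is generated by at most $n+1$ vectors whose entries are rationals obtained from $\A$; but in fact, if the procedure has not yet returned, these vectors all lie in $\ZZ^n$, and their entries are bounded in absolute value by some $B$ that is singly exponential in the bit-size of $\A$ (a product of at most $n$ matrices $\trans(\sigma)$ times $\init$), so $\log B$ is polynomial in the input size. By \Cref{prop:stablise}, the length of this chain is at most $r\log B + \tfrac{r}{2}\log r$, which is polynomial in the input size. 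Hence the second loop performs polynomially many iterations, each a polynomial-time $\ZZ$-module membership test (again reducible to linear algebra, e.g. via Smith/Hermite Normal Form as in \Cref{sec:prelim}).

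The \textbf{main obstacle} is the bookkeeping needed to justify invoking \Cref{prop:stablise}: one must be careful that (i) once the first loop ends, the $\QQ$-rank of the forward module really is frozen for the rest of the run — which is why the two-pass structure matters and why an arbitrary search order would not obviously work — and (ii) the entries of the initial generators of the second-phase chain are genuinely integers of polynomially-bounded bit-length, so that the $\log B$ term in \Cref{prop:stablise} is polynomial; this uses that the procedure would have already returned a counterexample otherwise. Everything else is routine linear algebra over $\QQ$ and over the PID $\ZZ$, for which polynomial-time algorithms are standard.
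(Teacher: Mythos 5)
Your proposal is correct and follows essentially the same route as the paper's proof: bounding the first loop by the dimension $n$ of the forward space, observing that the second loop yields a strictly increasing chain of $\ZZ$-submodules of constant rank generated by integer vectors of polynomial bit-length, and invoking Proposition~\ref{prop:stablise} to bound its length, with correctness following from closure of the resulting module under post-multiplication by the $\trans(\sigma)$. The only cosmetic difference is that the paper makes the entry bound explicit as $n^{n-1}B^n$ and the chain-length bound as $k_0 = n(n-\tfrac{1}{2})\log n + n^2\log B$, whereas you leave these as "polynomial"; the substance is the same.
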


\begin{proof}
Write the entries of $\init,\fin$ and $\trans(\sigma)$,
with $\sigma\in\Sigma$, as fractions over a common denominator and let $B$ be an upper bound of the numerators and denominator of the
resulting fractions.  Note that the bit size of $B$ is
polynomially bounded in the length of the encoding of
$\mathcal{A}$.

The first {\bf while}-loop, in Line \ref{linealgo:first-while}
computes a set of words 
$W_0 \subseteq \Sigma^{\leq n}$ such that $\{ \init\trans(w) : w \in W_0\}$ is a $\QQ$-basis of the forward space of $\A$.
By construction, the dimension of the space spanned 
by the set $\{ \init\trans(w) : w \in W_0\}\leq |W_0|\leq n$, which shows that the first {\bf while}-loop terminates after at most $n$ iterations.

Below, we prove that the second {\bf while}-loop, in Line \ref{linealgo:second-while},
 terminates in polynomial
time in  the length of the encoding of
$\mathcal{A}$.  Let
$W_0,W_1,W_2,\ldots$ be the successive values of the variable $W$ during the second loop.
For all $k\in \NN$, let $M_k$ be the $\ZZ$-module $\mspan[\ZZ]{\init\trans(w) : w \in W_k}$. Then
$M_0 \subsetneq M_1 \subsetneq \cdots$ is a strictly increasing
sequence of $\ZZ$-modules, all having the same rank (namely the size of~$W_0$, that is the
dimension of the forward space).

Recall that the length of words in $W_0$ is at most $n$.
A simple induction on the length of  words allows us to show that for all $w \in \Sigma^*$, the entries of $\init\trans(w)$ have numerators and denominators bounded by $n^{|w|-1}B^{|w|}$. In particular, we obtain that the entries in $\{\valpha\mu(w):w\in W_0\}$ are bounded by $n^{n-1}B^n$.

Let
$k_0:= n(n-\frac{1}{2})\log n + n^2 \log B$.  Suppose that all modules $M_0,M_1, \cdots $ contain only integer vectors.  Then
Proposition~\ref{prop:stablise} shows that the above sequence modules
has length at most $k_0$.  The only other possibility is that for
some $k\leq k_0$ we have $M_k \not\subseteq \ZZ^n$ and hence
$\init \mu(w) \not\in \ZZ^n$ for some word~$w\in W_k$.  In either case,
the number of iterations of the while loop is at most $k_0$.

It follows that each set $W_k$ consists of at most $k_0+n$ words, each
of length at most $k_0+n$.  Thus the set of vectors
$\{ \init \trans(w) : w\in W_k\}$ has description length polynomial in
$\mathcal{A}$.  Each iteration of the while loop involves solving
$|W|\cdot|\Sigma|$ systems of linear equations over $\ZZ$ to determine
membership in the module generated by
$\{ \init \mu(w) : w\in W_k\}$.  Again, this requires time polynomial
in $\mathcal{A}$.  Altogether, the algorithm runs in polynomial time.

If the loop terminates by returning $W\subseteq \Sigma^*$ then
$\{ \init \trans(w) : w\in W\}$ 
contains $\init$ and is closed by multiplication on the right by
$\trans(\sigma)$ for all $\sigma\in \Sigma$.  Thus this module is
the forward module of $\mathcal{A}$.
\end{proof}

The procedure to compute an equivalent $\ZZ$-automaton from a $\QQ$-automaton is illustrated in \Cref{alg:Z_compute}. It starts by computing a $\QQ$-basis of the backward space and by building an equivalent $\QQ$-automaton $\A'$, where  each entry of a forward reachability vector is an evaluation of the function computed by $\A$, that is $\init'\trans'(u) = \begin{bmatrix} \sem{\A}(uw_1)& \ldots & \sem{\A}(uw_m)\end{bmatrix}$. We then apply \ZGenerators{$\A'$} to either deduce the existence of a word such that $\sem{\A}(ww_i) \not\in \ZZ$ for some $i \in \{1,\ldots,m\}$, or to obtain a generator of the forward reachability set consisting of integer vectors. Form these generators, an equivalent $\ZZ$-automaton is built.


\begin{algorithm}[t]
\Fn{\ZCompute{$\A = (\init,\trans,\fin)$}}{ 
  \tcp{\color{blue}Compute a basis of the backward space}
  $W_B := \{ \varepsilon\}$\;\label{alg_line-start_tzeng}
  \SetAlgoVlined
  \SetAlgoShortEnd
  \While{there is $(w,\sigma) \in W_B\times \Sigma$ s.t. $\trans(\sigma w)\fin \not\in \mspan[\QQ]{\trans(u)\fin : u \in W_B}$}{
    
    $W_B := W_B \cup \{\sigma w\}$
  }\label{alg_line-end_tzeng}
  \SetAlgoNoEnd
  \SetAlgoNoLine
  \tcp{\color{blue}Build new automaton}
  $B := \begin{bmatrix} \trans(w_1)\fin & \ldots & \trans(w_m)\fin\end{bmatrix}$ where $W_B = \{ w_1,\ldots,w_m\}$\;
  \tcp{\color{blue}Conjugate $\A$ with matrix $B$ to obtain $\A'$}
  $\A' := (\init',\trans',\fin')$ s.t. $\init' = \init B$, $B\fin' = \fin$ and $B\trans'(\sigma) = \trans(\sigma)B$, for all $\sigma \in \Sigma$\;
  \Switch{\ZGenerators{$\A'$}}{
    \Case(\tcp*[f]{\color{blue}$\init'\trans'(w) \not\in \ZZ^m$}){$w \in \Sigma^*$}{
      take $i \in \{1,\ldots,m\}$ such that $(\init'\trans'(w))_i \not\in \ZZ$\;
      \KwRet{$ww_i$}
    }
    \Case(\tcp*[f]{\color{blue}Generators of forward space in $\ZZ^m$}){$W \subseteq \Sigma^*$}{
      $B_F :=$ generate $\ZZ$-basis of $\mspan[\ZZ]{\init'\trans'(w) \mid: w \in W}$
      \tcp*{\color{blue}Using Smith Normal Form}
      $F := \begin{bmatrix} \vv_1 \\ \ldots \\ \vv_\ell\end{bmatrix}$ where $B_F = \{\vv_1,\ldots,\vv_\ell\}$\;
      \tcp{\color{blue}Conjugate $\A'$ with matrix $F$ to obtain $\A''$}
      $\A'' := (\init'',\trans'',\fin')$ s.t. $\init''F = \init'$, $\fin'' = F\fin'$ and $\trans''(\sigma)F = F\trans'(\sigma)$, for all $\sigma \in \Sigma$\;
      \KwRet{$\A''$}
    }
  }
}
\caption{Computing a $\ZZ$-weighted automaton from a $\QQ$-weighted automaton.}
\label{alg:Z_compute}
\end{algorithm}

\begin{theorem}
The procedure  \ZCompute, described in \Cref{alg:Z_compute}, is a polynomial-time algorithm that given a $\QQ$-weighted automaton $\A$ of dimension $n$ over $\Sigma$, either outputs an  
equivalent $\ZZ$-automaton (that is in fact minimal as a  
$\QQ$-weighted automaton), or a word $w$ such that  
$\sem{\A}(w)\not\in\mathbb{Z}$.
\label{thm:main}
\end{theorem}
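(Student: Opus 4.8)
The plan is to verify that \ZCompute\ meets the three requirements: polynomial running time, correctness of the counterexample case, and correctness of the $\ZZ$-automaton output case (including minimality as a $\QQ$-weighted automaton). I would organise the argument around the intermediate automaton $\A'$ obtained by conjugating with the backward-space basis matrix $B$, and then around $\A''$ obtained by conjugating $\A'$ with the forward-module basis matrix $F$.

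First, the polynomial-time claim. The backward-space computation in Lines~\ref{alg_line-start_tzeng}--\ref{alg_line-end_tzeng} is the classical Tzeng-style procedure: it terminates after at most $n$ iterations because each word added increases the dimension of $\mspan[\QQ]{\trans(u)\fin : u \in W_B}$, so $|W_B| = m \leq n$; each iteration solves $|W_B|\cdot|\Sigma|$ rational linear systems, which is polynomial. The conjugation step producing $\A'$ amounts to solving linear systems over $\QQ$ and is polynomial; crucially the bit-size of $\A'$ is polynomial in that of $\A$. Then \ZGenerators{$\A'$} runs in polynomial time by \Cref{prop:promise}. Finally, building the $\ZZ$-basis $B_F$ via Smith Normal Form and conjugating to $\A''$ is polynomial (Smith Normal Form over $\ZZ$ is computable in polynomial time, as noted in \Cref{sec:prelim}). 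Hence the whole procedure is polynomial.

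Second, correctness. The key observation to record is the "sampling" identity: since the columns of $B$ are $\trans(w_1)\fin,\ldots,\trans(w_m)\fin$, for every $u \in \Sigma^*$ the vector $\init'\trans'(u) = \init\trans(u)B$ has $i$-th entry $\init\trans(u)\trans(w_i)\fin = \sem{\A}(uw_i)$. Thus every entry of every forward-reachability vector of $\A'$ is a value of $\sem{\A}$. Now if \ZGenerators{$\A'$} returns a word $w$, then $\init'\trans'(w) \notin \ZZ^m$, so some coordinate $i$ gives $\sem{\A}(ww_i) \notin \ZZ$, and the algorithm correctly returns $ww_i$. Conversely, if it returns a generating set $W$, then $\{\init'\trans'(w) : w \in W\}$ consists of integer vectors and generates the forward module of $\A'$; in particular $\sem{\A}$ is $\ZZ$-valued on $\Sigma^*\cdot\{w_1,\ldots,w_m\}$, and since the $w_i$ span the backward space one deduces $\sem{\A}$ is $\ZZ$-valued everywhere. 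The matrix $F$ whose rows form the $\ZZ$-basis $B_F$ of the forward module then yields a forward conjugate $\A'' = (\init'',\trans'',\fin')$ via the equations in~\eqref{eq:forward}; equivalence $\sem{\A''} = \sem{\A'} = \sem{\A}$ follows from the general facts about forward conjugates. It remains to check that $\A''$ is a genuine $\ZZ$-automaton: $\init''$ is integral because $\init' = \init'' F$ expresses $\init'$ (an integer vector, being a forward-reachability vector, $=\sem{\A}$ on $\{w_1,\ldots,w_m\}$) in the $\ZZ$-basis $B_F$ of the module; $\trans''(\sigma)$ is integral because $F\trans'(\sigma) = \trans''(\sigma)F$ expresses each row $\vv_j\trans'(\sigma)$ — again a forward-reachability vector of $\A'$, hence in the forward module — in the basis $B_F$; and $\fin'' = F\fin'$ is integral since $F$ has integer entries and $\fin' = B^{+}\fin$ must be checked to be integral, or alternatively $\fin''$ is integral because its entries are values of $\sem{\A}$ (indeed $(\fin'')_j = \vv_j \fin'$ and one argues each such quantity equals $\sem{\A}$ evaluated appropriately).

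Third, minimality. Here I would invoke the standard two-sided minimisation principle for $\QQ$-weighted automata: conjugating first by a backward-space basis and then by a forward-space basis of the result produces a minimal $\QQ$-weighted automaton, whose dimension equals the Hankel rank. The matrix $B$ restricts to the backward space, and the $\ZZ$-basis $B_F$ of the forward module of $\A'$ is in particular a $\QQ$-basis of the forward space of $\A'$ (it has the right rank $r = m_f$, since the forward module has the same rank as the forward space); so $F$ is a forward-space basis matrix. Hence $\A''$ has dimension equal to the Hankel rank of $\sem{\A}$ and is minimal among $\QQ$-weighted automata, a fortiori among $\ZZ$-weighted automata.

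The main obstacle I anticipate is the bookkeeping in the correctness-of-output case: one must track carefully why the conjugated automaton $\A''$ has \emph{integer} entries and not merely rational ones. The clean way is to note that every row of $F$ and every forward-reachability vector $\vv_j\trans'(\sigma)$ of $\A'$ lies in the forward module, which has $B_F$ as a $\ZZ$-basis, so the change-of-basis coefficients are integers by definition of a basis; and to handle $\init''$ and $\fin''$ one needs the observation that $\init'$ itself and the relevant products are integer vectors because they are (tuples of) values of the $\ZZ$-valued function $\sem{\A}$. A secondary subtlety is making the implicit claim "$\sem{\A}$ is $\ZZ$-valued on $\Sigma^*\cdot\{w_1,\ldots,w_m\}$ implies $\sem{\A}$ is $\ZZ$-valued everywhere" fully rigorous: it uses that $\fin = \sum_i c_i \trans(w_i)\fin$ for some $c_i\in\QQ$ would be too weak; instead one argues via the backward conjugate that $\sem{\A}(u) = \init'\trans'(u)\cdot(\text{column expressing }\fin\text{ in basis }B)$, and since $\sem{\A}$ takes only the values $\sem{\A}(uw_i)\in\ZZ$ that already determine it — more precisely, since $\A'$ is equivalent to $\A$ and the argument that $\A'$ computes $\ZZ$-values follows from the forward module being integral — no separate argument is needed. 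I would state this as a short lemma extracted before the main proof to keep the exposition clean.
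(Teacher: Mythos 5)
Your proposal is correct and follows essentially the same route as the paper: conjugate by the backward-space basis $B$ so that forward-reachability vectors of $\A'$ become tuples of values of $\sem{\A}$, run \ZGenerators{} on $\A'$, and in the positive case conjugate by the Smith-Normal-Form $\ZZ$-basis of the forward module to obtain the minimal integral automaton. The one point you leave dangling, the integrality of $\fin'$, closes immediately: since $\varepsilon\in W_B$, the vector $\fin$ is a column of $B$, so $\fin'$ is a standard basis vector and $\fin''=F\fin'$ is a column of the integer matrix $F$.
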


\begin{proof}
  The procedure is a variant of  
  the classical minimisation algorithm for weighted automata over fields.
  
  The first step is to compute a basis $\{ u_1,\ldots,u_m\}$ of the  backward space of
  $\A$.  Lines \ref{alg_line-start_tzeng}-\ref{alg_line-end_tzeng} correspond to Tzeng's procedure and, as noted previously, this is done in polynomial time.
  The matrix~$B\in\mathbb{Q}^{n\times m}$ has columns corresponding to the vectors in the
  above-mentioned basis, that are $\trans(u_i)\fin$. 
  
  The next step defines a new $m$ dimensional $\QQ$-automaton $\A'$ that is a conjugate of $\A$, so that $\sem{\A} = \sem{\A'}$. 
  From the fact that the columns of $B$ form a basis of
  the backward space of $\A$ it can be seen that $\A'$
  is well-defined. 
  Furthermore, for all $w\in\Sigma^*$ we have
  $\valpha' \mu'(w) = \valpha \mu(w) B$, so, the $i$-th entry of
  $\valpha' \mu'(w)$ has the form $\valpha \mu(ww_i) \veta= \sem{\A}(ww_i)$. Thus, the forward reachability set of $\mathcal{A}'$
  consists exclusively of integer vectors when $\sem{\A}$
  is integer-valued.
  
  Applying \Cref{prop:promise}, the computation of \ZGenerators{$\A'$} yields either a word~$w\in \Sigma^*$ such that $\init\trans(w) \not\in \ZZ^m$ or else a set~$W$ of words generating the forward reachability set of~$\A'$. In the former case, there exists $i \in \{1,\ldots,m\}$ such that $(\init\trans(w))_i \not\in \ZZ$ and so $\sem{\A}(ww_i) \not\in \ZZ$. In the latter case, we use the Smith Normal Form to generate a $\ZZ$-basis $B_F$ of $\mspan[\ZZ]{\init'\trans'(w) : w \in W}$. As $B_F$ is comprised of the $\ZZ$-vectors $\vv_1,\ldots,\vv_\ell$, the $\ell$ dimensional automaton $\A''$ is a conjugate automaton of $\A'$, that is $\sem{\A'} = \sem{\A''}$. Note that $\A''$ is
  a well-defined $\ZZ$-automaton by the fact that the rows of $F$
  form a $\ZZ$-basis forward module of $\A'$, which entails that Equation~\eqref{eq:forward} has a solution $\alpha_f,\mu_f(\sigma),\beta_f$ in integers. We conclude by noting that $\ell$ is  the dimension of the forward space of
  $\A'$ as well as the rank of the forward module.  It follows
  that $\A''$ is a minimal $\QQ$-weighted automaton.
\end{proof}


\subsection{Exact Learning}

In this subsection, we describe how the exact learning problem for $\ZZ$-weighted automata can be reduced to the exact learning problem for
 $\QQ$-weighted automata.  Such a reduction is non-trivial since the equivalence oracle in the former setting is more restrictive: it requires a $\ZZ$-weighted automaton as input rather than a $\QQ$-weighted automaton.
 The key to the reduction is thus a procedure \QEquiv that implements 
 an equivalence oracle for $\QQ$-weighted automata using an equivalence oracle for 
 $\ZZ$-weighted automata.
This procedure inputs 
a $\QQ$-weighted automaton $\AH$ and returns either \KwSome{$w$} or \KwNone:
\begin{itemize}
    \item In the first case, it returns \KwSome{$w$} 
with $w$ being a counterexample, witnessing that $\sem{\A}(w) \neq \sem{\AH}(w)$. This counterexample is given by \ZCompute{$\AH$} 
in case $\sem{\AH}$ is not integer valued and otherwise it is
given by \Equiv. 
\item In the second case the procedure returns \KwNone, meaning  that  $\AH$ is equivalent to $\A$.
\end{itemize}  


\begin{algorithm}[t]
\Fn{\QEquiv{$\AH$}}{
  \Switch{\ZCompute{$\AH$}}{
    \lCase(\tcp*[f]{\color{blue}A counterexample as $f_\AH(w) \not\in \ZZ$}){$w \in \Sigma^*$}{\KwRet{\KwSome{w}}}
    \lCase(\tcp*[f]{\color{blue}$\ZZ$-automaton equivalent to $\AH$}){$\AH'$}{
      \KwRet{\Equiv{$\AH'$}}
    }
  }
}
\end{algorithm}


\begin{theorem}
There is a procedure that learns the target $\ZZ$-weighted automaton $\A$, by outputting a minimal $\ZZ$-weighted automaton equivalent to $\A$, which runs in  polynomial time  in the length of the encoding of $\A$ and in the length of the longest counterexample given by the teacher. 
\end{theorem}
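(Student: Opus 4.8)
The plan is to reduce the exact learning problem for $\ZZ$-weighted automata to that for $\QQ$-weighted automata, using the procedure \ZCompute{} of \Cref{thm:main} as the key ingredient. We assume as given a polynomial-time exact learning algorithm for $\QQ$-weighted automata in the MAT model (the algorithm of~\cite{exact-learning-wa}). The target $\ZZ$-weighted automaton $\A$ is in particular a $\QQ$-weighted automaton, so we may try to run the $\QQ$-learner on it; the only difficulty, as noted in the text, is that the $\QQ$-learner issues equivalence queries by presenting $\QQ$-weighted hypotheses $\AH$, whereas in our setting the teacher only accepts $\ZZ$-weighted automata. The procedure \QEquiv{} shown above resolves exactly this mismatch, so the bulk of the proof is to verify that \QEquiv{} correctly simulates a $\QQ$-equivalence oracle and to track the resulting complexity.

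First I would argue correctness of the simulation. Given a $\QQ$-weighted hypothesis $\AH$ produced by the $\QQ$-learner, \QEquiv{$\AH$} calls \ZCompute{$\AH$}. By \Cref{thm:main} this returns, in polynomial time, either (i) a word $w$ with $\sem{\AH}(w)\notin\ZZ$, or (ii) a minimal $\ZZ$-weighted automaton $\AH'$ equivalent to $\AH$. In case (i), since the true target $\A$ is $\ZZ$-weighted we have $\sem{\A}(w)\in\ZZ$, hence $\sem{\AH}(w)\neq\sem{\A}(w)$ and $w$ is a genuine counterexample; we return $w$ to the $\QQ$-learner. In case (ii), $\AH'$ is a legitimate input to the $\ZZ$-teacher's equivalence oracle \Equiv{}; the teacher either confirms $\sem{\AH'}=\sem{\A}$ (and since $\sem{\AH'}=\sem{\AH}$ this means $\AH$ is correct, so we halt and output $\AH'$, which is minimal by \Cref{thm:main}) or returns a counterexample $w$ with $\sem{\AH'}(w)\neq\sem{\A}(w)$, which is likewise a counterexample for $\AH$ and is passed back to the $\QQ$-learner. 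Thus \QEquiv{} is a faithful equivalence oracle for $\QQ$-weighted automata, and the $\QQ$-learner run on top of it terminates and outputs a $\QQ$-weighted automaton equivalent to $\A$; one final application of \ZCompute{} to this output yields the desired minimal $\ZZ$-weighted automaton.

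For the running-time bound I would observe that every counterexample returned by \QEquiv{} is either a word produced by \ZCompute{} or a word produced by the underlying $\ZZ$-teacher. The former have length polynomial in the encoding of $\AH$ (by \Cref{thm:main} and \Cref{prop:promise}), and $\AH$ itself has size polynomial in the parameters already incurred by the $\QQ$-learner; the latter have length bounded by the longest counterexample the $\ZZ$-teacher ever gives. Since the $\QQ$-learner of~\cite{exact-learning-wa} runs in time polynomial in the size of the minimal $\QQ$-weighted automaton for the target and in the longest counterexample it receives, and the minimal $\QQ$-weighted automaton for $\sem{\A}$ has size polynomial in the encoding of $\A$ (indeed, by~\cite{fliess1974matrices}, it is itself realizable over $\ZZ$ with dimension at most that of $\A$), composing these bounds with the polynomial overhead of each \ZCompute{} call gives an overall polynomial bound in the encoding of $\A$ and the longest counterexample supplied by the $\ZZ$-teacher.

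The main obstacle is the complexity bookkeeping rather than any conceptual subtlety: one must check that the sizes of the intermediate $\QQ$-weighted hypotheses, and hence of the counterexamples produced by \ZCompute{}, remain polynomially bounded throughout the interaction, so that feeding them back to the $\QQ$-learner does not cause a blow-up — this is where the polynomial bounds of \Cref{prop:stablise}, \Cref{prop:promise}, and \Cref{thm:main} are essential. A secondary point worth making explicit is that the reduction is faithful because counterexamples valid for $\AH'$ (equivalently $\AH$) against the $\ZZ$-target are exactly counterexamples valid for $\AH$ against the same target viewed as a $\QQ$-function, which holds since $\sem{\A}$ is the same function in both settings.
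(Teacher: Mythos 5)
Your proposal is correct and follows essentially the same route as the paper: it uses \texttt{compute\_$\ZZ$\_automaton} inside \texttt{$\QQ$\_equivalence\_oracle} to simulate a $\QQ$-equivalence oracle for the learner of~\cite{exact-learning-wa}, and invokes the polynomial bounds of Theorem~\ref{thm:main} for both correctness and complexity. Your write-up is in fact somewhat more explicit than the paper's about why the counterexamples returned in each branch are genuine, but the argument is the same.
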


\begin{proof}
Denote by $s$ the size of the encoding of the target automaton~$\A$. As is the case for $\Q$-weighted automata learning, the algorithm maintains the invariant that the dimension of the hypothesis automata~$\AH$ constructed during the learning procedure is less than~$s$.  
By \Cref{thm:main}, the procedure \ZCompute{$\AH$} runs in time polynomial  in~$s$.  This implies that  the built-in \QEquiv{$\AH$} also runs in time polynomial  in~$s$. 
We know that there is a procedure~$\mathcal{L}$ that learns $\QQ$-weighted automata, and runs in  time polynomial in $s$ and in the length of the longest counterexample given by the teacher~\cite{exact-learning-wa}. 
As such, $\mathcal{L}$ only calls such equivalence oracle a polynomial number of times. Therefore, using \QEquiv as an oracle for~$\mathcal{L}$ yields a polynomial time procedure that outputs a $\QQ$-weighted automaton $\AH$ equivalent to $\A$. We conclude by calling \ZCompute{$\AH$} which runs, as already mentioned, in time polynomial in~$s$. 
\end{proof}

\section{P-finite Automata}


 Recall that a P-finite automaton of dimension $n$ over $\Sigma$ is a tuple $\A = (\init,\trans,\fin(x))$ where $\init \in \QQ^{1\times n}$ is the initial vector, $\trans: \Sigma \rightarrow \QQ[x]^{n\times n}$ is the transition function and $\fin(x) \in \QQ[x]^{n\times 1}$ is the final vector.  We write $\trans(\sigma,k)$ to stand for $\trans(\sigma)(k)$ for all $\sigma \in \Sigma$ and $k \in \NN$. We extend $\trans$ to a map $\trans: \Sigma^* \rightarrow \QQ[x]^{n\times n}$ by writing $\trans(\varepsilon)(x) := I_n$ and 
 \[\trans(w\sigma,x) := \trans(w,x)\, \trans(\sigma,x+|w|)\] for all $\sigma \in \Sigma$ and $w \in \Sigma^*$. Hence, the semantics of $\A$, defined as \[
 \sem{\A}(w) = \init\trans(\sigma_1,1)\ldots\trans(\sigma_k,k)\fin(k+1)
 \] for all $w = \sigma_1\ldots\sigma_k \in \Sigma^*$, can be simply written $\sem{\A}(w) = \init\trans(w,1)\fin(|w|+1)$. The semantics of~$\A$ is also called the function computed by $\A$. We also denote by $\vec{e_1},\vec{e_2},\ldots,\vec{e_n}$ the standard basis.

\medskip

In this section, we tackle the zeroness, equivalence, and exact learning problems for P-finite automata.
The equivalence problem is the problem of deciding whether two  automata compute the same function,
while the zeroness problem aims to check whether the input automaton computes the zero function.
In \Cref{subsec:Hequivalence} we observe that the zeroness and equivalence problems for P-finite automata are polynomial-time interreducible and we show that zeroness can be solved in polynomial time.  Meanwhile, 
in \Cref{subsec:Hlearning} we show that the 
P-finite automata
can be exactly learned in polynomial time in the MAT model.

\subsection{Equivalence}
\label{subsec:Hequivalence}

We can reduce the equivalence problem to the zeroness problem. Indeed, two automata $\A_1$ and $\A_2$ are equivalent if and only if the difference automaton $\A_-$ (such that $\sem{\A_-}=\sem{\mathcal{A}_1}-\sem{\mathcal{A}_2}$) computes the zero function.
We refer to \Cref{sec:proof holonomic equiv} for details.

\begin{restatable}{proposition}{lemmareductionequivtozero}
The equivalence problem of P-finite automata is polynomial-time reducible to the zeroness problem.
\label{lemma: eq_zero}
\end{restatable}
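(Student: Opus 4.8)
The plan is to construct, from two P-finite automata $\A_1 = (\init_1, \trans_1, \fin_1(x))$ and $\A_2 = (\init_2, \trans_2, \fin_2(x))$ of dimensions $n_1$ and $n_2$ respectively, a single \emph{difference automaton} $\A_-$ of dimension $n_1 + n_2$ whose semantics is $\sem{\A_1} - \sem{\A_2}$; then $\A_1$ and $\A_2$ are equivalent if and only if $\sem{\A_-}$ is the zero function, which is precisely an instance of the zeroness problem, and the reduction is clearly computable in polynomial (indeed linear) time in the sizes of $\A_1$ and $\A_2$.

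First I would take the standard direct-sum (block) construction used for $\Q$-weighted automata and check that it goes through in the polynomial-weight setting. Concretely, set
\[
\init_- := \begin{bmatrix} \init_1 & -\init_2 \end{bmatrix}, \qquad
\trans_-(\sigma) := \begin{bmatrix} \trans_1(\sigma) & \mathbf{0} \\ \mathbf{0} & \trans_2(\sigma) \end{bmatrix}, \qquad
\fin_-(x) := \begin{bmatrix} \fin_1(x) \\ \fin_2(x) \end{bmatrix},
\]
where each block of $\trans_-(\sigma)$ is a matrix over $\QQ[x]$, so $\trans_-(\sigma) \in \QQ[x]^{(n_1+n_2)\times(n_1+n_2)}$ as required, and the zero blocks are of the appropriate dimensions. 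The initial vector $\init_-$ is a vector of rationals, and $\fin_-(x)$ is a vector of polynomials, so $\A_-$ is a well-formed P-finite automaton.

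Next I would verify the semantic identity. Because the transition matrices are block-diagonal, for any word $w = \sigma_1\cdots\sigma_k$ the product $\trans_-(\sigma_1,1)\cdots\trans_-(\sigma_k,k)$ is block-diagonal with blocks $\trans_1(\sigma_1,1)\cdots\trans_1(\sigma_k,k)$ and $\trans_2(\sigma_1,1)\cdots\trans_2(\sigma_k,k)$ — here the key point is that the program counter $x$ is substituted by the same value $i$ in the $i$-th factor of both blocks, since the evaluation points depend only on the position in the word and not on which automaton we are in. Multiplying out,
\[
\sem{\A_-}(w) = \begin{bmatrix} \init_1 & -\init_2 \end{bmatrix} \begin{bmatrix} \trans_1(w,1) & \mathbf{0} \\ \mathbf{0} & \trans_2(w,1) \end{bmatrix} \begin{bmatrix} \fin_1(|w|+1) \\ \fin_2(|w|+1) \end{bmatrix} = \sem{\A_1}(w) - \sem{\A_2}(w).
\]
Hence $\sem{\A_-} \equiv 0$ iff $\sem{\A_1} = \sem{\A_2}$, and since $\size{\A_-}$ is bounded by $\size{\A_1} + \size{\A_2}$ plus a constant, the whole construction is polynomial-time. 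There is no real obstacle here; the only point needing a moment's care is the bookkeeping that block-diagonality is preserved under the position-indexed products defining the semantics of P-finite automata, i.e.\ that $\trans_-(w,x)$ is block-diagonal with the expected blocks — an easy induction on $|w|$ using $\trans_-(w\sigma,x) = \trans_-(w,x)\,\trans_-(\sigma,x+|w|)$ and the fact that a product of block-diagonal matrices is block-diagonal blockwise.
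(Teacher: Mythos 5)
Your construction is exactly the paper's: the same block-diagonal difference automaton $\A_-$ with $\init_- = \begin{bmatrix}\init_1 & -\init_2\end{bmatrix}$, and the same verification that block-diagonality is preserved under the position-indexed matrix products, yielding $\sem{\A_-} = \sem{\A_1} - \sem{\A_2}$. The proposal is correct and matches the paper's proof in both structure and detail.
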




\subsubsection{Backward module}
Below, we  fix a P-finite automaton $\mathcal{A}=(\init, \trans, \fin(x))$   of dimension $n$ over $\Sigma$. 
The \emph{backward function} associated to $\A$, denoted by $\backF{\A}$, is the function $\backF{\A}:\Sigma^*\rightarrow\QQ[x]^n$ given by 
\[\backF{\A}(u)(x) = \trans(u,x)\fin(x+|u|)\,.\]
The \emph{backward module} is the $\QQ[x]$-submodule of $\QQ[x]^n$ defined as $\backS{\A} = \mspan[{\QQ[x]}]{\backF{\A}(w) : w \in \Sigma^*}$.

Consider the  P-finite automaton of Program~\ref{prog:fact}, one can show that the backward module $\backS{\A_1}$ of this automaton  is defined as:
\[
\mspan[{\QQ[x]}]{ 
\begin{bmatrix}0\\x\end{bmatrix},
\begin{bmatrix}0\\xp_k(x)\end{bmatrix},
\begin{bmatrix}xp_k(x)\\0\end{bmatrix}
: k\in \NN},
\]
where $p_k(x):=\prod_{i=1}^k (x+i)$. By a simple computation, we have that  
\[\backS{\A_1} = \mspan[{\QQ[x]}]{\begin{bmatrix}0\\x\end{bmatrix},
\begin{bmatrix}x(x+1)\\0\end{bmatrix}} \, . \]


We remark that the backward function can be defined recursively as 
$\backF{\A}(\varepsilon)=\fin(x)$, and for all $\sigma \in \Sigma$ and  $w\in\Sigma^*$, 
\[\backF{\A}(\sigma w)=\trans(\sigma,x)\, \backF{\A}(w)(x+1),\]
where $\backF{\A}(w)(x+1)$ is  obtained by substituting $x+1$ for $x$ in the vector~$\backF{\A}(w)$. 
By definition, the result of the computation of a P-finite automaton $\A$ on a word $w$ is $\sem{\A}(w)=\init\backF{\A}(w)(1)$. 


\paragraph{From backward module to zeroness} Formally speaking, the zeroness problem asks, given an automaton~$\A$ over~$\Sigma$, whether $\sem{\A}(w) = 0$ for  all words~$w \in \Sigma^*$. 
The following proposition describes how we can decide zeroness by inspecting a finite generating set of the backward module.


\begin{proposition}
\label{lemma:zero_ha_bcw}
Let $\A = (\init,\trans,\fin(x))$ be a P-finite automaton of dimension $n$. Let $S \subseteq \mathbb{Q}[x]^n$ be a finite generating set for the backward module $\backS{\A}$. We have $\sem{\A} \equiv 0$ if and only if  all $\boldsymbol{v} \in S$ satisfy~$\init v(1) = 0$.
\end{proposition}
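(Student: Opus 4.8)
The plan is to prove both directions by relating $\sem{\A}$ on all words to the condition $\init\boldsymbol{v}(1)=0$ for generators $\boldsymbol{v}\in S$. Recall that $\sem{\A}(w)=\init\,\backF{\A}(w)(1)$ and that $\backF{\A}(w)\in\backS{\A}$ for every word $w$. The forward direction is the easy one: if $\sem{\A}\equiv 0$, I would like to conclude $\init\boldsymbol{v}(1)=0$ for each generator $\boldsymbol{v}\in S$. This is \emph{not} immediate, because a generator need not itself be of the form $\backF{\A}(w)$; it is only a $\QQ[x]$-linear combination of such vectors. So the real content here is the claim that the map $\boldsymbol{v}(x)\mapsto \init\boldsymbol{v}(1)$ vanishes on all of $\backS{\A}$ whenever it vanishes on the spanning set $\{\backF{\A}(w):w\in\Sigma^*\}$ — in other words, that ``$\init\boldsymbol{v}(1)=0$ for all $\boldsymbol{v}$ in the module'' is equivalent to the same statement for a spanning set. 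This is where I have to be slightly careful: evaluation at $1$ followed by the linear functional $\init$ is $\QQ$-linear but not $\QQ[x]$-linear, so I cannot directly push scalars through. The fix is to prove the stronger statement by induction on word length: $\sem{\A}\equiv 0$ iff $\init\,\backF{\A}(w)(1)=0$ for all $w$, and then separately observe that the set of polynomial vectors $\boldsymbol{v}(x)$ satisfying $\init\,\boldsymbol{v}(k+1)=0$ for \emph{all} $k\in\NN$ is a $\QQ[x]$-submodule that contains each $\backF{\A}(w)$ and hence contains $\backS{\A}$; since a nonzero polynomial has finitely many roots, $\init\,\boldsymbol{v}(k+1)=0$ for all $k$ is equivalent to the single equation $\init\,\boldsymbol{v}(1)=0$ once we know it holds for infinitely many integer arguments... wait, that is the wrong direction. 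Let me restate this more carefully below.

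Here is the cleaner structure. Define the set
\[
N := \{\boldsymbol{v}(x)\in\QQ[x]^n : \init\,\backF{\A}(w)(1)\cdot 0 = 0\}
\]
— no. Instead: first prove by induction on $|w|$ that for all $w\in\Sigma^*$ and all $j\geq 1$, $\init\,\backF{\A}(w)(j) = \sem{\A}(\cdot)$ suitably — concretely, prove $\init\,\trans(w,j)\,\fin(j+|w|) = \sem{\A}_j(w)$ where $\sem{\A}_j$ is the automaton with initial counter value $j$. The point is that $\sem{\A}\equiv 0$ forces $\init\,\backF{\A}(w)(j)=0$ not just for $j=1$ but for every $j\geq 1$, because $\backF{\A_{\geq j}}(w)(j) = \backF{\A}(a^{j-1}w)(1)/(\text{stuff})$... this is getting tangled. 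The genuinely clean approach: prove that $\sem{\A}\equiv 0$ implies $\init\,\backF{\A}(w)(1)=0$ for all $w$ — which is literally the definition of $\sem{\A}(w)=0$ — and then show the functional $\ell:\QQ[x]^n\to\QQ$ defined by... no, I need $\QQ[x]$-linearity.

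The right move, and the one I would commit to: show that the set
\[
K := \{\boldsymbol{v}\in\QQ[x]^n : \init\,\boldsymbol{v}(x+j)\big|_{\text{coefficients}} \text{ gives } 0 \text{ at } x=1-j\}
\]
is not quite it either. Let me just say it plainly. Consider $P(x) := \init\,\boldsymbol{v}(x)\in\QQ[x]$ for $\boldsymbol{v}\in\QQ[x]^n$; this \emph{is} $\QQ[x]$-linear in $\boldsymbol{v}$ (if $\boldsymbol{v} = \sum_i q_i(x)\boldsymbol{w}_i$ then $\init\,\boldsymbol{v}(x) = \sum_i q_i(x)\,\init\,\boldsymbol{w}_i(x)$, using that $\init$ is a \emph{constant} row vector). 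Now I claim $\sem{\A}\equiv 0$ iff $\init\,\backF{\A}(w)(x) \equiv 0$ as a polynomial, for every $w$: the ``if'' is trivial (evaluate at $x=1$); for ``only if'', note $\init\,\backF{\A}(w)(j) = \sem{\A}(a_1\cdots a_{j-1}\,w)$ when we prepend $j-1$ arbitrary letters... no — prepending letters changes the matrices. Hmm, actually the substitution rule $\backF{\A}(\sigma w)(x) = \trans(\sigma,x)\backF{\A}(w)(x+1)$ shows $\backF{\A}(w)(x+1)$ is a ``sub-piece'' of $\backF{\A}(\sigma w)(x)$, so $\init$ applied to it at a shifted point equals $\sem{\A}$ of a longer word only if $\init$ were an eigenvector of $\trans(\sigma,x)$ — not generally true.

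So I will go with this final, correct plan. (i) The ``only if'' direction: $\sem{\A}\equiv 0$ means $\init\,\backF{\A}(w)(1)=0$ for all $w$; in particular for each $\boldsymbol{v}\in S$... but again $\boldsymbol{v}$ need not be a $\backF{\A}(w)$. Resolution: I will NOT claim the only-if direction reduces to the spanning set pointwise; instead I invoke that $S$ generates $\backS{\A}$, pick each $\boldsymbol{v}\in S$, write $\boldsymbol{v} = \sum_i q_i(x)\,\backF{\A}(w_i)$ for finitely many words $w_i$ and polynomials $q_i$, and compute $\init\,\boldsymbol{v}(1) = \sum_i q_i(1)\,\init\,\backF{\A}(w_i)(1) = \sum_i q_i(1)\cdot 0 = 0$. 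This works because the representation is a $\QQ[x]$-combination and evaluation-at-$1$ is a ring homomorphism $\QQ[x]\to\QQ$, so it commutes with the combination; crucially $\init$ is a constant vector so $\init(\sum q_i \backF{\A}(w_i))(x) = \sum q_i(x)(\init\,\backF{\A}(w_i))(x)$ as polynomials, and then evaluate at $1$. (ii) The ``if'' direction: suppose $\init\,\boldsymbol{v}(1)=0$ for all $\boldsymbol{v}\in S$. Since $S$ generates $\backS{\A}$ and $\backF{\A}(w)\in\backS{\A}$, the same $\QQ[x]$-linearity-then-evaluate-at-$1$ computation gives $\init\,\backF{\A}(w)(1)=0$, i.e. $\sem{\A}(w)=0$, for every $w\in\Sigma^*$. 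Hence $\sem{\A}\equiv 0$.

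The main (and essentially only) subtlety is the one flagged above: getting the scalar $q_i(x)$ past the functional $\boldsymbol{v}\mapsto\init\,\boldsymbol{v}$. The key observation making everything go through is that $\init\in\QQ^{1\times n}$ has \emph{constant} entries, so $\boldsymbol{v}\mapsto\init\,\boldsymbol{v}(x)$ is a morphism of $\QQ[x]$-modules $\QQ[x]^n\to\QQ[x]$, and composing with the ring homomorphism $\mathrm{ev}_1:\QQ[x]\to\QQ$ is harmless once the $\QQ[x]$-combination is fixed. I expect the write-up to be short: state this module-morphism fact, apply it in both directions, and note that membership $\backF{\A}(w)\in\backS{\A}$ is immediate from the definition of $\backS{\A}$ as the span of all $\backF{\A}(w)$.
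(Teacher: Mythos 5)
Your final committed argument is correct and is essentially the paper's proof, which is just a three-line chain of equivalences ($\sem{\A}(w)=\init \backF{\A}(w)(1)$ by definition, then pass between the spanning set $\{\backF{\A}(w)\}$, the module $\backS{\A}$, and the generating set $S$). The one subtlety you rightly isolate --- that $\boldsymbol{v}\mapsto\init\,\boldsymbol{v}(x)$ is a $\QQ[x]$-module morphism because $\init$ is constant, so composing with evaluation at $1$ lets the coefficients $q_i(1)$ pass through in both directions --- is exactly what the paper's ``by the definition of $\backS{\A}$'' and ``since $S$ is a generating set'' steps silently rely on; just trim the many false starts before the final write-up.
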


\begin{proof}

Let $\A$ be over $\Sigma$. The proof is straightforward by unfolding the definitions of backward function, backward module, and $f_\A$:
\begin{align*}
    \forall w\in \Sigma^*, \sem{\A}(w)=0 &\iff \forall w\in\Sigma^*: \init B_{\mathcal{A}}(w)(1) = 0\\
    &\iff \forall v \in \backS{\A}, \init v(1) = 0 &\text{By the definition of $\backS{\A}$}\\
    &\iff \forall v \in S, \init v(1) = 0 &\text{Since $S$ is a generating set of $\backS{\A}$}
\end{align*}
\end{proof}

The previous proposition indicates that, in order to verify zeroness, it is enough to check if $\init$ is orthogonal to a generating set of the backward module. 


\subsubsection{Computing a generating set for \texorpdfstring{$\backS{\A}$}{BA}}
Our algorithm for computing a generating set of the backward module is displayed in \Cref{alg:backward space generation}. It bears a strong resemblance to our algorithm for computing generators for the backward and forward modules in $\ZZ$-weighted automata (\Cref{alg:forward module}). The main distinction  lies  in the soundness proof, which is more involved due to the necessity to work with $\QQ[x]$-modules. 
In particular, we will need the following corollary of \Cref{lemma:general_length}.

\SetKwFunction{GenBackward}{generators\_backward\_module}
\SetKwFunction{GenBackward}{generators\_backward\_module}

\begin{algorithm}[t]

\Fn{\GenBackward{$\A$}}{
  \SetAlgoVlined
  \SetAlgoShortEnd
  
  \tcp{\color{blue}$\A$ a P-finite automaton over $\Sigma$}
  $W := \{\varepsilon\}$\;

  \tcp{\textcolor{blue}{Finding words that increase the rank}}
  \While{there is $(w,\sigma) \in W\times \Sigma$ such that $\backF{\A}(\sigma w) \not\in \mspan[\QQ(x)]{\backF{\A}(u) : u \in W}$}{\label{backward_space-while1}
    $W := W \cup \{\sigma w\}$
  }

  \tcp{\textcolor{blue}{Finding words that augment the module}}
  \While{there is $(w,\sigma) \in W\times \Sigma$ such that $\backF{\A}(\sigma w) \not\in \mspan[{\QQ[x]}]{\backF{\A}(u) : u \in W}$}{\label{backward_space-while2}
    $W := W \cup \{\sigma w\}$
  }
  \KwRet{$\{\backF{\A}(u) : u \in W\}$}
}
 
\caption{Finding a generating set for the backward module of a P-finite automaton} 
\label{alg:backward space generation}
\end{algorithm}

\begin{corollary}
\label{cor:sequence of increasing modules}
Let $n,k \in \NN$ and $M_0 \subsetneq M_1 \subsetneq \ldots \subsetneq M_k$ be a strictly increasing chain of submodules of $\QQ[x]^n$, all having the same rank $r \leq n$. Assume that $M_0$ is generated by a collection of vectors whose entries have degree at most $d$. Then $k \leq d \cdot r$.
\end{corollary}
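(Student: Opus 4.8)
The plan is to derive \Cref{cor:sequence of increasing modules} as a direct instantiation of \Cref{lemma:general_length} with $R = \QQ[x]$. The key observation is that for a PID like $\QQ[x]$, the general bound on chain length is the number of prime factors (counted with multiplicity) of $D_r(A)$, where $A$ is the matrix whose columns are the generators of $M_0$ and $r$ is the common rank. So the whole task reduces to bounding the number of irreducible factors (with multiplicity) of $D_r(A)$, which is a greatest common divisor of certain $r\times r$ minors of $A$.

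First I would let $\vv_1,\ldots,\vv_m \in \QQ[x]^n$ be generators of $M_0$ with each entry a polynomial of degree at most $d$, and form $A = \begin{bmatrix}\vv_1 & \cdots & \vv_m\end{bmatrix}$. By \Cref{lemma:general_length}, $k$ is bounded by the number of prime factors of $D_r(A)$ counted with multiplicity. Now $D_r(A)$ is defined as the gcd of the $r\times r$ minors of $A$; in particular it divides any single nonzero $r\times r$ minor. Each such minor is the determinant of an $r\times r$ submatrix of $A$, whose entries are polynomials of degree at most $d$. Expanding the determinant as a sum of products of $r$ entries shows that each $r\times r$ minor has degree at most $rd$. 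Since $D_r(A)$ divides a nonzero minor of degree at most $rd$, we get $\deg D_r(A) \le rd$. In $\QQ[x]$, the number of irreducible factors of a polynomial (counted with multiplicity) is at most its degree, hence $D_r(A)$ has at most $rd$ prime factors with multiplicity. Therefore $k \le d\cdot r$, as claimed.

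The only subtlety to address is the existence of a nonzero $r\times r$ minor: since $M_0$ has rank $r$, the matrix $A$ has rank $r$ over the fraction field $\QQ(x)$, so some $r\times r$ minor is a nonzero polynomial, and $D_r(A)$ is a well-defined nonzero polynomial dividing it. (If $r = 0$ then $D_0(A) = 1$ has no prime factors and $k = 0$, consistent with the bound.) I expect the main obstacle — if any — is simply being careful that $D_r(A)$, being the gcd of all the minors, divides at least one nonzero minor and hence inherits the degree bound; beyond that the argument is a routine degree count combined with the cited proposition.
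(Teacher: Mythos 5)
Your proof is correct and follows essentially the same route as the paper: instantiate \Cref{lemma:general_length} with $R=\QQ[x]$, bound the number of prime factors of $D_r(A)$ by its degree, and bound that degree by $d\cdot r$ via the determinant expansion of an $r\times r$ minor. Your extra remark that $D_r(A)$ divides some \emph{nonzero} minor (guaranteed since $A$ has rank $r$ over $\QQ(x)$) is a welcome precision that the paper's proof leaves implicit.
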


\begin{proof}
From \Cref{lemma:general_length}, it follows that $k$ is bounded by the number of prime factors of $D_r(A)$ where $A$ is the matrix whose columns contain generators of $M_0$. Since the number of prime factors of a univariate polynomial  is at most its degree, $k$ is bounded by $\deg(D_r(A))$. This can also be upper-bounded by the maximum degree of all $r\times r$ minors of $A$, which, by the triangle inequality and the determinant formula involving permutations, is at most $d\cdot r$.
\end{proof}

We are now ready to  present the  polynomial-time membership of the   equivalence problem of P-finite automata.


\begin{theorem}
\label{th:backward generation}
 The procedure \GenBackward in \Cref{alg:backward space generation}, on an input   P-finite automaton $\A$, terminates and outputs a set $B$ of  vectors such that $\backS{\A} = \mspan[{\QQ[x]}]{B}$. The procedure executes in polynomial time in the length of encoding of~$\A$.
\end{theorem}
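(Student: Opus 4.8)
The plan is to analyse the two \textbf{while}-loops of \GenBackward separately, exactly mirroring the structure of the proof of \Cref{prop:promise}. Let $W_0$ be the value of $W$ at the end of the first loop, and let $W_0 \subseteq W_1 \subseteq \cdots$ be the successive values during the second loop. First I would argue termination and a length bound for the first loop: each time a word $\sigma w$ is added, $\backF{\A}(\sigma w)$ is $\QQ(x)$-linearly independent of the previously added vectors, so the $\QQ(x)$-dimension of $\mspan[\QQ(x)]{\backF{\A}(u) : u \in W}$ strictly increases; since this dimension is at most $n$, the first loop runs for at most $n$ iterations and every word in $W_0$ has length at most $n$. The rank $r = \dim_{\QQ(x)} \mspan[\QQ(x)]{\backF{\A}(u) : u \in \Sigma^*}$ is reached at the end of the first loop, and for $M_k := \mspan[{\QQ[x]}]{\backF{\A}(u) : u \in W_k}$ we get a strictly increasing chain $M_0 \subsetneq M_1 \subsetneq \cdots$ of $\QQ[x]$-submodules of $\QQ[x]^n$ all of rank $r$ (note: I must first check that adding $\sigma w$ in the second loop does not increase the rank — it cannot, because $\backF{\A}(\sigma w) \in \mspan[\QQ(x)]{\backF{\A}(u):u\in W_0}$ already, as the first loop has terminated).

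Next I would bound the degrees. A straightforward induction on $|u|$, using $\backF{\A}(\sigma w)(x) = \trans(\sigma,x)\backF{\A}(w)(x+1)$, shows that the entries of $\backF{\A}(u)$ have degree at most $c(|u|+1)$ where $c$ bounds the degrees of entries of $\fin(x)$ and of the $\trans(\sigma)$. Hence the generators $\backF{\A}(u)$ for $u \in W_0$ (all of length $\le n$) have entries of degree at most $d := c(n+1)$, which is polynomial in the encoding size of $\A$. Applying \Cref{cor:sequence of increasing modules} to the chain $M_0 \subsetneq M_1 \subsetneq \cdots$, the second loop runs for at most $d \cdot r \le c(n+1)n$ iterations, which is polynomial in the size of $\A$. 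Therefore both loops terminate, $|W|$ is polynomially bounded, each word in $W$ has polynomially bounded length, and so (using that entries of $\backF{\A}(u)$ have degree and bit-size polynomial in $|\A|$ — the latter requiring a companion induction on the bit-sizes of coefficients) the returned set $\{\backF{\A}(u) : u \in W\}$ has description length polynomial in $|\A|$.

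For correctness: when the second loop terminates, $M := \mspan[{\QQ[x]}]{\backF{\A}(u) : u \in W}$ contains $\backF{\A}(\varepsilon) = \fin(x)$ and, by the loop exit condition, for every $u \in W$ and $\sigma \in \Sigma$ we have $\backF{\A}(\sigma u) = \trans(\sigma,x)\backF{\A}(u)(x+1) \in M$. I need to conclude from this that $\backF{\A}(w) \in M$ for \emph{all} $w \in \Sigma^*$, i.e. that $M = \backS{\A}$. This is the one genuinely delicate point: unlike in the $\QQ$-weighted case, the closure condition "$\trans(\sigma,x)\cdot(\text{shift of a generator}) \in M$ for all generators" is not literally "$M$ is closed under the operators $\trans(\sigma,x)$", because the operation also substitutes $x \mapsto x+1$, which is $\QQ[x]$-linear but does not commute with multiplication by $\QQ[x]$-scalars in the naive way. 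The resolution is to observe that $M$ being a $\QQ[x]$-module that is closed under the composite operation $v(x) \mapsto \trans(\sigma,x)\,v(x+1)$ \emph{on generators} suffices, because this composite operation is additive and satisfies $(\trans(\sigma,x)(p\cdot v)(x+1)) = p(x+1)\cdot(\trans(\sigma,x)v(x+1))$, so closure on generators propagates to closure on all of $M$ (the scalar $p(x)$ just gets replaced by $p(x+1)$, still a $\QQ[x]$-scalar). Then a straightforward induction on $|w|$ using the recursive definition $\backF{\A}(\sigma w) = \trans(\sigma,x)\backF{\A}(w)(x+1)$ gives $\backF{\A}(w) \in M$ for all $w$, hence $\backS{\A} \subseteq M$; the reverse inclusion is immediate since every generator of $M$ is some $\backF{\A}(u)$. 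I expect this propagation-of-closure argument (and getting the $x\mapsto x+1$ bookkeeping right) to be the main obstacle; the complexity bookkeeping is routine given \Cref{cor:sequence of increasing modules}. Finally, each iteration of either loop requires testing $\QQ(x)$-linear independence or $\QQ[x]$-module membership for $|W|\cdot|\Sigma|$ candidates, each a linear-algebra computation over $\QQ(x)$ or an application of the Smith Normal Form over $\QQ[x]$, all polynomial-time in the (polynomially bounded) data; summing over the polynomially many iterations gives the overall polynomial running time.
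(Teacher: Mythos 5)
Your proposal is correct and follows essentially the same route as the paper's proof: a rank argument bounding the first loop by $n$, the degree bound $c(|u|+1)$ on entries of $\backF{\A}(u)$ combined with \Cref{cor:sequence of increasing modules} to bound the second loop, and an induction on word length whose key step is precisely the identity $\trans(\sigma,x)\,(p\cdot v)(x+1)=p(x+1)\,\trans(\sigma,x)\,v(x+1)$ that you isolate as the "closure propagation" point. The only presentational nit is that your parenthetical justification that ranks stay constant in the second loop ("because the first loop has terminated") implicitly relies on the all-words version of the first-loop closure claim, which — as you in effect acknowledge — itself requires the same induction.
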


\begin{proof}
Let $\A = (\init,\trans,\fin(x))$ be an automaton  of dimension $n$ and over alphabet $\Sigma$. Write $W_1,W_2,\ldots$ for the successive instantiations of the variable $W$ during the execution of the function \GenBackward{$\A$}.
Since  $W_1 = \{ \varepsilon\}$, and for all $i > 0$, $W_i=W_{i-1} \cup \{ \sigma w\}$ for some $\sigma \in \Sigma$ and $w \in W_{i-1}$, it follows that 
the maximum length of words in $W_i$ is at most the size of~$W_i$. 

The first {\bf while}-loop, in Line \ref{backward_space-while1},
terminates after at most~$n$ iterations since the backward module, being a submodule of  $\QQ[x]^n$, has rank at most $n$.
The second  {\bf while}-loop, in Line \ref{backward_space-while2},
terminates by virtue of $\QQ[x]^n$ being  Noetherian. Below, we write
\[W_\ell = \{ w_1,\ldots,w_\ell\} \qquad \text{ and } \qquad W_m = \{ w_1,\ldots,w_m\} \]
for some $\ell\leq n$,  for the instantiations of $W$ upon exiting the first and second {\bf while}-loops, respectively.

We first claim that $\backF{\A}(w) \in \mspan[\QQ(x)]{\backF{\A}(u) : u \in W_\ell}$ for all words~$w \in \Sigma^*$.
The proof is by induction on the length of the words. The base case ($|w| = 0$) follows as~$\varepsilon \in W_\ell$.  
For the  inductive step ($|w| > 0$), decompose $w$ as  $\sigma w'$ for some $\sigma \in \Sigma$ and $w' \in \Sigma^*$. By the induction hypothesis, 
\[\backF{\A}(w')(x) = \sum_{k=1}^\ell \frac{p_{k}(x)}{q_{k}(x)} \backF{\A}(w_k)(x)\] for some univariate polynomials $p_{k}(x),\, q_{k}(x)\in \QQ[x]$, where $k\in \{1,\ldots,\ell\}$.
Recall the recursive definition of the backward function, namely, we have $\backF{\A}(\sigma w')=\trans(\sigma,x)\, \backF{\A}(w')(x+1)$. Hence, 
\begin{align*}
      \backF{\A}(\sigma w')(x) = & \trans(\sigma,x) \sum_{k=1}^\ell \frac{p_{k}(x+1)}{q_{k}(x+1)} \, \backF{\A}(w_k)(x+1) \\
      = & \sum_{k=1}^\ell \frac{p_{k}(x+1)}{q_{k}(x+1)} \trans(\sigma,x)\backF{\A}(w_k)(x+1) \\
      = & \sum_{k=1}^\ell \frac{p_{k}(x+1)}{q_{k}(x+1)} \backF{\A}(\sigma w_k)\,,
  \end{align*}
 implying  that $\backF{\A}(w)\in \mspan[\QQ(x)]{\backF{\A}(\sigma u) : \sigma \in \Sigma, u \in W_\ell}$.
 But then 
 the exit-condition of the first {\bf while}-loop  ensures that  \[\mspan[\QQ(x)]{\backF{\A}(\sigma u) : \sigma \in \Sigma, u \in W_\ell} \subseteq \mspan[\QQ(x)]{\backF{\A}( u) :   u \in W_\ell},\] 
concluding the proof of the claim.

We show a similar result concerning the second {\bf while}-loop termination.
We  claim that $\backF{\A}(w) \in \mspan[{\QQ[x]}]{\backF{\A}(u) : u \in W_m}$ 
for all words~$w \in \Sigma^*$. 
Intuitively speaking, once exiting the second {\bf while}-loop, no words in $\Sigma^*$ that could augment the module can be added. 
The proof is again by induction on the length of the words.  The base case ($|w| = 0$) trivially holds as~$\varepsilon \in W_m$.
For the  inductive step ($|w| > 0$), rewrite $w$ as  $\sigma w'$ for some $\sigma \in \Sigma$ and $w' \in \Sigma^*$. By the induction hypothesis, 
\[\backF{\A}(w')(x) = \sum_{k=1}^m p_{k}(x) \backF{\A}(w_k)(x)\]
for some polynomials $p_k(x) \in \QQ[x]$ where $k\in \{1,\ldots,m\}$. Following similar reasoning as in the first loop case,
 we obtain that
 $\backF{\A}(w)\in \mspan[{\QQ[x]}]{\backF{\A}(\sigma u) : \sigma \in \Sigma, u \in W_m}$.
But then, again,
 the exit-condition of the second {\bf while}-loop  ensures that  \[\mspan[{\QQ[x]}]{\backF{\A}(\sigma u) : \sigma \in \Sigma, u \in W_m} \subseteq \mspan[{\QQ[x]}]{\backF{\A}( u) :   u \in W_m},\] 
concluding the proof of the claim.

It remains to show that the execution of \GenBackward{$\A$} can be carried out in time polynomial in the length of encoding of~$\A$.  
Recall that, given a word  $w = \sigma_1\ldots\sigma_k$, the backward reachable vector is computed as $\backF{\A}(w)(x) = \trans(\sigma_1,x)\ldots\trans(\sigma_k,x+k-1)\fin(x+k)$.
Denote by $d$ and $c$, respectively, the maximal degree and largest coefficient of the polynomials occurring as entries of~$\fin(x)$ and $\trans(\sigma)$, for~$\sigma \in \Sigma$.
It follows that the degree of polynomial entries of~$\backF{\A}(w)(x)$  is at most~$d(|w|+1)$. 
We will argue that  the largest coefficient of the polynomials occurring as entries of~$\backF{\A}(w)(x)$ is at most~$n^{|w|}c^{|w|+1}(|w|d)^{(|w|+1)d}$. Indeed, this comes from the  observation that  the coefficients of the monomial $(x+|w|)^d$ are bounded by $(|w|d)^d$. 
The length of the encoding of  $\backF{\A}(w)(x)$ is therefore polynomial in the length of encoding of~$\A$ and in~$|w|$.
  Using \cite{KANNAN198569}, we deduce that testing whether 
  \[\backF{\A}(\sigma w) \not\in \mspan[\QQ(x)]{\backF{\A}(u) : u \in W}\qquad \text{ or } \qquad \backF{\A}(\sigma w) \not\in \mspan[{\QQ[x]}]{\backF{\A}(u) : u \in W}\] is polynomial in the length of encoding of~$\A$, and in the maximum length of words in $W_m$, and in the size of~$W_m$. 
Recall that the maximum length of words in $W_m$ is at most the size of~$W_m$.

We conclude the proof by arguing that the size of $W_m$ is polynomial in the length of encoding of~$\A$. As a result of the two claims on the termination of the loops, the two backward modules induced by $W_\ell$ and $W_m$ have the same rank. Since $\ell\leq n$, the degree of the polynomials in the entries of $\backF{\A}(u)$ for $u \in W_\ell$ is at most $d(n+1)$. By \Cref{cor:sequence of increasing modules}, the length of the strictly increasing sequence of modules induced by  $W_{\ell} \subsetneq \ldots \subsetneq W_m$ is at most $m -\ell+1 \leq n d(n+1)$, implying that the size of $W_m$ is at most~$n d(n+1) + n -1$. 
\end{proof}

By a direct application of \Cref{th:backward generation}, \Cref{lemma:zero_ha_bcw} and \Cref{lemma: eq_zero}, we have:
\begin{theorem}
The zeroness and equivalence problems for P-finite automata are both in polynomial time.  We can furthermore suppose that the polynomial-time procedure for testing equivalence returns a word of   
polynomial length that witnesses in-equivalence on negative instances.
\end{theorem}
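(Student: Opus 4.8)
The plan is to assemble the three preceding results into the required statement, with the only extra work being to keep track of polynomial size bounds. For zeroness, given a P-finite automaton $\A$, I would first invoke \GenBackward{$\A$} to obtain, in polynomial time by \Cref{th:backward generation}, a finite set $B$ of vectors with $\backS{\A} = \mspan[{\QQ[x]}]{B}$; the complexity analysis in the proof of \Cref{th:backward generation} moreover guarantees that $B$ has polynomially many elements, each a polynomial vector of polynomial description length. Then by \Cref{lemma:zero_ha_bcw} we have $\sem{\A} \equiv 0$ if and only if $\init\, v(1) = 0$ for every $v \in B$, and this condition is checkable in polynomial time by evaluating each $v$ at $x = 1$ and taking an inner product with $\init$. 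This yields a polynomial-time decision procedure for zeroness.

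For equivalence, I would appeal to \Cref{lemma: eq_zero}: two P-finite automata $\A_1, \A_2$ are equivalent if and only if the difference automaton $\A_-$, with $\sem{\A_-} = \sem{\A_1} - \sem{\A_2}$, computes the zero function, and $\A_-$ is constructible from $\A_1, \A_2$ in polynomial time (in particular with encoding size polynomial in those of $\A_1$ and $\A_2$). Composing this reduction with the zeroness procedure above gives a polynomial-time equivalence test.

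For the polynomial-length witness on negative instances, I would observe that each generator $v \in B$ produced by \GenBackward{$\A$} has the form $v = \backF{\A}(w)$ for some word $w$ in the set $W$ maintained by the algorithm, and that the proof of \Cref{th:backward generation} bounds the length of every word in $W$ polynomially in the encoding of $\A$. If $\sem{\A} \not\equiv 0$, then by \Cref{lemma:zero_ha_bcw} some $v = \backF{\A}(w) \in B$ satisfies $\init\, v(1) \neq 0$; since $\sem{\A}(w) = \init\, \backF{\A}(w)(1) = \init\, v(1)$, the word $w$ witnesses non-zeroness and is of polynomial length. Applying this to $\A_-$, and using that the encoding of $\A_-$ is polynomial in those of $\A_1, \A_2$, the same word $w$ satisfies $\sem{\A_1}(w) \neq \sem{\A_2}(w)$, giving a polynomial-length inequivalence witness.

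I do not expect a substantive obstacle here, since the heavy lifting is already done in \Cref{th:backward generation}. The only two points demanding a little care are (i) confirming that the generating set $B$ and its associated words are genuinely of polynomial size, which is precisely what the termination-and-complexity argument inside the proof of \Cref{th:backward generation} establishes, and (ii) checking that forming the difference automaton in \Cref{lemma: eq_zero} incurs at most a polynomial blow-up in encoding size, so that all polynomial bounds transfer cleanly through the reduction from equivalence to zeroness.
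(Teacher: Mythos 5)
Your proposal is correct and follows exactly the paper's route: the theorem is obtained as a direct combination of Theorem~\ref{th:backward generation}, Proposition~\ref{lemma:zero_ha_bcw} and Proposition~\ref{lemma: eq_zero}, with the inequivalence witness read off from the word $w\in W$ whose backward vector $\backF{\A}(w)$ fails the orthogonality test, using $\sem{\A}(w)=\init\backF{\A}(w)(1)$ and the polynomial bound on word lengths in $W$. No substantive differences from the paper's argument.
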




\subsection{Learning}
\label{subsec:Hlearning}


\SetKwFunction{LPrefix}{largest\_correct\_prefix}%

We first introduce some notation and terminology. Below, we fix $f: \Sigma^* \rightarrow \QQ$ to be a  function. 
The  Hankel matrix of~$f$ is  an infinite matrix with rows and columns indexed by words in~$\Sigma^*$ such that  $\PHankel{r}{c}:=f(rc)$, where $H(r, c)$ is the entry of matrix with row index $r\in \Sigma^*$ and column index~$c\in \Sigma^*$.

Given  two sequences $\SeqR, \SeqC$ of words from $\Sigma^*$,  denote by $\PHankel{\SeqR}{\SeqC}$ the restriction of the Hankel matrix to the respective sets $\SeqR$ of rows and $\SeqC$ of columns, that is, if $\SeqR = [ r_1,\ldots, r_m ]$ and $\SeqC = [ c_1,\ldots,c_n]$, then $\PHankel{\SeqR}{\SeqC}$ is the $m\times n$ submatrix such that $\PHankel{\SeqR}{\SeqC}_{i,j} = f(r_ic_j)$.  Moreover, given two words~$r,c\in \Sigma^*$ such that $r$ appears in the sequence~$\SeqR$  and $c$ appears in the sequence~$\SeqC$, we write 
$\rowF{\SeqC}{r}$ for the associated row  and $\colF{\SeqR}{c}$ for the associated column in 
$\PHankel{\SeqR}{\SeqC}$, namely, 
\[\rowF{\SeqC}{r}:=\begin{bmatrix} f(rc_1) & \ldots & f(rc_n)\end{bmatrix} \qquad \text{ and } \qquad \colF{\SeqR}{c}:=\begin{bmatrix} f(r_1c) & \ldots & f(r_mc) \end{bmatrix}^\top.\]
In the sequel, we will call $\PHankel{\SeqR}{\SeqC}$ a \emph{table}.

Assume that the target function~$f$ can be computed by a P-finite automaton. 
Intuitively, our learning algorithm maintains a table from which it constructs a \emph{hypothesis automaton}. Using the equivalence oracle, the algorithm checks whether the hypothesis automaton computes the function $f$. In case of a negative answer, the witness of non-equivalence is used to augment the table (by augmenting the sets of rows and columns), and the process repeats.

In order to build the hypothesis automaton we require the table to be \emph{closed} in the following sense. Let $\SeqR$ and $\SeqC$ be two sequences of words from $\Sigma^*$ such that $|\SeqC|=n$. We say that the table~$\PHankel{\SeqR}{\SeqC}$ is \emph{closed} when for each $\sigma \in \Sigma$, there exists a matrix of polynomials $M_\sigma(x) \in \QQ[x]^{n\times n}$ such that for all rows~$r \in \SeqR$, the equation
$\rowF{\SeqC}{r\sigma} = \rowF{\SeqC}{r} M_\sigma(|r|+1)$ holds. Given such a closed table~$\PHankel{\SeqR}{\SeqC}$, we can compute 
a hypothesis P-finite automaton  $(\init,\trans,\fin)$ of dimension $n$ as follows:
\begin{equation}
\label{eq:constHA}
    \init = \rowF{\SeqC}{r_1}, \qquad  \fin(x) = \vec{e_1},\qquad  \trans(\sigma,x) = M_\sigma(x) \text{ for all } \sigma \in \Sigma.
\end{equation}
The polynomials in the transition matrix of a hypothesis automaton need to be constructed by interpolation.
To this end, we maintain a variable $d$ that represents a degree bound on the 
polynomials in~$M_\sigma(x)$.  
Specifically, we will say that the table~$\PHankel{\SeqR}{\SeqC}$ is \emph{$d$-closed} when the maximal degree of the polynomials in the $M_\sigma(x)$ are bounded by~$d$. 
The $d$-closedness condition allows  to set up a linear system of equations where the unknowns $y_{i,j,k}$ are the coefficients  of the polynomials of each entry of~$M_{\sigma}(x)$, that is, we write the $(i,j)$-th entry of $M_{\sigma}(x)$ as $y_{i,j,d}x^d+y_{i,j,d-1}x^{d-1}+\cdots+y_{i,j,0}$. More precisely, we search for the unknowns $y_{i,j,k}$, ranging over~$\QQ$. Focusing on the $j$-th column of~$M_{\sigma}(x)$, the $d$-closedness condition $\rowF{\SeqC}{r\sigma} = \rowF{\SeqC}{r} M_\sigma(|r|+1)$ entails, for all $r \in \SeqR$, the following equation:
\begin{equation}
\label{eq:beforeLS}
    f(r \sigma c_j) = \rowF{\SeqC}{r} \sum_{k=0}^d (|r|+1)^k Y_k = \sum_{k=0}^d (|r|+1)^k \rowF{\SeqC}{r} Y_k, 
\end{equation}
where the $Y_k$ are the column vectors $\begin{bmatrix}y_{1,j,k} & \ldots & y_{|\SeqR|,j,k}\end{bmatrix}^\top$ of unknowns. By taking $H = \PHankel{\SeqR}{\SeqC}$ and $\Delta$ the $m\times m$ diagonal matrix $diag(|r_1|+1, \ldots, |r_m|+1)$ where $\SeqR = [r_1,\ldots,r_m]$, we obtain the following system of equations in $Y_0,\ldots,Y_d$:
\begin{equation} 
\label{eq:bigLS}
    \begin{bmatrix}f(r_1\sigma c_j)\\ \vdots \\ f(r_m\sigma c_j) \end{bmatrix} = 
\sum_{k=0}^d \Delta^k H Y_k = 
\begin{bmatrix}\Delta^0 H& \ldots & \Delta^d H \end{bmatrix} 
\begin{bmatrix}Y_0\\ \vdots \\ Y_d 
\end{bmatrix}. 
\end{equation}

We recover the \emph{hypothesis automaton associated to the $d$-closed table~$\PHankel{\SeqR}{\SeqC}$} from a solution to the above system of equations
by setting the $j$-th column of $M_\sigma(x)$ to be $\sum_{k=0}^d  Y_k x^{k}$. 
Henceforth we denote by $\AHankel{\SeqR}{\SeqC}{d}$ the matrix 
\[
  \begin{bmatrix} 
    \Delta^0 H & \ldots & \Delta^d H
  \end{bmatrix}.
\]

In the following proposition, we state a sufficient condition for 
the above linear system of equations to  have a solution, meaning that \emph{the table $\PHankel{\SeqR}{\SeqC}$ is $d$-closed}.

\begin{proposition}
  \label{lem:full rowrank and closeness}
  Given two sequences of words $\SeqR$ and $\SeqC$ and $d \in \NN$,  the table~$\PHankel{\SeqR}{\SeqC}$ is $d$-closed if  $\AHankel{\SeqR}{\SeqC}{d}$ has full row rank.
\end{proposition}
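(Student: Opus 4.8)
The plan is to show that when $\AHankel{\SeqR}{\SeqC}{d}$ has full row rank, the linear system~\eqref{eq:bigLS} is solvable for every column index $j$ and every $\sigma \in \Sigma$, and that the resulting polynomial matrices $M_\sigma(x)$ witness $d$-closedness of $\PHankel{\SeqR}{\SeqC}$. First I would fix $\sigma \in \Sigma$ and a column $c_j \in \SeqC$ and look at the system~\eqref{eq:bigLS}, whose coefficient matrix is exactly $\AHankel{\SeqR}{\SeqC}{d} = \begin{bmatrix}\Delta^0 H & \ldots & \Delta^d H\end{bmatrix}$, an $m \times (d+1)m$ matrix (with $m = |\SeqR|$), and whose right-hand side is the column vector $\begin{bmatrix} f(r_1 \sigma c_j) & \ldots & f(r_m \sigma c_j)\end{bmatrix}^\top \in \QQ^m$. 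The key linear-algebra fact is that a linear system $Ax = b$ with $A \in \QQ^{m \times N}$ is solvable for \emph{every} right-hand side $b \in \QQ^m$ precisely when $A$ has full row rank $m$; this is immediate since full row rank means the columns of $A$ span all of $\QQ^m$. Hence, under the hypothesis, for each $j$ and each $\sigma$ we obtain vectors $Y_0, \ldots, Y_d \in \QQ^m$ solving the system.

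Next I would assemble these solutions into the matrices $M_\sigma(x)$: for each $\sigma$, letting $Y_k^{(j)}$ denote the solution vectors obtained for the $j$-th column, set the $(i,j)$-entry of $M_\sigma(x)$ to be $\sum_{k=0}^d (Y_k^{(j)})_i\, x^k$, a polynomial of degree at most $d$. It then remains to verify that this $M_\sigma(x)$ satisfies the defining equation of closedness, namely $\rowF{\SeqC}{r\sigma} = \rowF{\SeqC}{r} M_\sigma(|r|+1)$ for every $r \in \SeqR$. This is a matter of unwinding notation: the $j$-th component of the left-hand side is $f(r\sigma c_j)$, and the $j$-th component of the right-hand side, after substituting $x = |r|+1$, is $\rowF{\SeqC}{r} \sum_{k=0}^d (|r|+1)^k Y_k^{(j)}$, which is precisely the $r$-indexed row of the system~\eqref{eq:bigLS} that the $Y_k^{(j)}$ were chosen to satisfy (this is exactly equation~\eqref{eq:beforeLS}). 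Matching components for all $j$ and all $r$ gives the closedness equation, and since $\sigma$ was arbitrary this establishes that the table is $d$-closed.

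I do not expect a serious obstacle here — the statement is essentially a bookkeeping reformulation of the standard solvability criterion for linear systems, once one recognizes that $\AHankel{\SeqR}{\SeqC}{d}$ is the common coefficient matrix for all $|\Sigma| \cdot |\SeqC|$ systems. The only point requiring mild care is keeping the indices straight: the system~\eqref{eq:bigLS} is stated for a single column $c_j$, so one must note that the same coefficient matrix serves for all $j$ and all $\sigma$ (only the right-hand side changes), and that full row rank is therefore enough to solve all of them simultaneously-in-structure though separately-in-solution. If anything is delicate it is simply making explicit that "the table is $d$-closed" unpacks to the existence of such degree-$\le d$ matrices $M_\sigma(x)$, which is exactly what the construction produces.
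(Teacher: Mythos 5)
Your proposal is correct and follows essentially the same route as the paper: full row rank of $\AHankel{\SeqR}{\SeqC}{d}$ makes the system~\eqref{eq:bigLS} solvable for every right-hand side, the solutions are assembled column-by-column into $M_\sigma(x)$, and closedness follows by unwinding~\eqref{eq:beforeLS}. The only slip is dimensional bookkeeping (the coefficient matrix is $m\times(d+1)n$ with $n=|\SeqC|$, and the $Y_k$ live in $\QQ^n$, not $\QQ^m$), which does not affect the argument.
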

  
  \begin{proof}
  Let  $\SeqR = [r_1,\ldots,r_m]$ and $\SeqC = [c_1,\ldots,c_n]$. Write $A$ for $\AHankel{\SeqR}{\SeqC}{d} \in \QQ^{m\times (d+1)n}$, which, by  hypothesis, has full row rank. 
  Then for all vectors $V \in \QQ^{m\times 1}$, the system $AX = V$ has a solution $X \in \QQ^{(d+1)n\times 1}$. Indeed, the system $AX = V$ has a solution if and only if $\rank{A} = \rank{\begin{bmatrix} A & V \end{bmatrix}}$. Since $A \in \QQ^{m \times (d+1)n}$ and $\rank{A} = m$, we deduce that for all $V \in \QQ^{m\times 1}$, the equality $\rank{\begin{bmatrix} A & V \end{bmatrix}} = \rank{A}$ holds and the system $AX = V$ has a solution. 

Write~$n$ for the size of the sequence~$C$. 
We construct the matrices~$M_{\sigma} \in \QQ[x]^{n\times n}$, for $\sigma\in \Sigma$, as follows. By the above argument, for $j\in \{1,\cdots, n\}$,
the  system of linear equations described in \eqref{eq:bigLS} has some solution, say $Y^*_0,\ldots, Y^*_d$. We define   the $j$-th column of $M_\sigma(x)$ to be $\sum_{k=0}^d  Y^*_k x^{k}$, which in turn implies that  the $(i,j)$-th entry of $M_{\sigma}(x)$ is the polynomial $y^*_{i,j,d}x^d+y^*_{i,j,d-1}x^{d-1}+\cdots+y^*_{i,j,0}$ of degree~$d$.

It remains to argue that the matrix $M_{\sigma}(x)$ so defined satisfies the closedness condition, that is, for all rows~$r \in \SeqR$ the condition
$\rowF{\SeqC}{r\sigma} = \rowF{\SeqC}{r} M_\sigma(|r|+1)$  holds. But then, this  is guaranteed by  enforcing~\eqref{eq:beforeLS} for all columns~$c\in \SeqC$. We conclude by noting that constraints~\eqref{eq:beforeLS} constitute the system of linear equations described in \eqref{eq:bigLS}. 
\end{proof}

\begin{algorithm}[t]
  \Fn{\HBuild{$d,\SeqR,\SeqC = [c_1,\ldots,c_n]$}}{
    \SetAlgoVlined
    \SetAlgoShortEnd
    \tcp{\color{blue}$\PHankel{\SeqR}{\SeqC}$ is assumed to be $d$-closed}
    \For{$\sigma \in \Sigma$}{
      $\SeqC' := [\sigma c_1,\ldots,\sigma c_n]$\;
      solve $\AHankel{\SeqR}{\SeqC}{d} Y = \PHankel{\SeqR}{\SeqC'}$ \tcp*{\color{blue}Has a solution since $\PHankel{\SeqR}{\SeqC}$ is $d$-closed}
      define $\trans(\sigma)_{i,j}(x) := \sum_{k=0}^d Y_{i+kn,j} \, x^k$ for all $i \in \{1,\ldots,n\}$, $j \in \{1,\ldots,n\}$
    }
    \KwRet{$(\vec{e_1}^\top H,\trans,\vec{e_1})$}
  }
\caption{Building an associated P-finite automaton to $\PHankel{\SeqR}{\SeqC}$}\label{algo:buildHaut}
\end{algorithm}

Our algorithm for building the automaton associated to the $d$-closed table~$\PHankel{\SeqR}{\SeqC}$ is given as function \HBuild  in \Cref{algo:buildHaut}. This function is an implementation of the construction stated in~\eqref{eq:constHA}, which is  ensured by the $d$-closedness assumption  on the input table. We note again that the maximal degree of polynomials in constructed automaton is at most~$d$.  In summary, we have: 


\begin{corollary}
    \label{col:functionHbuild}
 The function \HBuild{$d,\SeqR,\SeqC$}, assuming that $\PHankel{\SeqR}{\SeqC}$ is $d$-closed, outputs an automaton $\AH$ associated to the $d$-closed table $\PHankel{\SeqR}{\SeqC}$. 
\end{corollary}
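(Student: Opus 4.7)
The plan is to show that under the $d$-closedness hypothesis the algorithm \HBuild{} successfully carries out the construction described by \eqref{eq:constHA}, so that the triple it returns is a valid P-finite automaton of dimension $n = |\SeqC|$, and to check that it agrees with the hypothesis automaton prescribed there. I would split the verification into three pieces: (i) the linear system solved inside the loop has a solution; (ii) any solution yields transition matrices satisfying the closedness equation; (iii) the initial and final vectors returned by \HBuild{} match the prescription in \eqref{eq:constHA}.

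For (i), I would observe that for every $\sigma\in\Sigma$ the $d$-closedness assumption provides a matrix $M_\sigma(x)\in\QQ[x]^{n\times n}$ of degree at most $d$ with $\rowF{\SeqC}{r\sigma} = \rowF{\SeqC}{r}\,M_\sigma(|r|+1)$ for every $r\in\SeqR$. Writing $M_\sigma(x)=\sum_{k=0}^d M_k\,x^k$ and stacking, for each column index $j$, the $j$-th columns of $M_0,\ldots,M_d$ into one long vector in $\QQ^{(d+1)n}$ yields a valid right-hand side solution of the system $\AHankel{\SeqR}{\SeqC}{d}Y = \PHankel{\SeqR}{\SeqC'}$. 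This is exactly what the derivation of \eqref{eq:bigLS} from \eqref{eq:beforeLS} encodes, read from right to left. Hence the linear solve in \HBuild{} succeeds.

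For (ii), starting from any solution $Y$ returned by the linear solver I would run the same derivation in the opposite direction. The definition $\trans(\sigma)_{i,j}(x) := \sum_{k=0}^d Y_{i+kn,j}\,x^k$ interprets the $j$-th column of $Y$ as $d+1$ blocks of length $n$, the $k$-th block being the $j$-th column of the coefficient matrix of $x^k$ in $\trans(\sigma,x)$. The block structure $\AHankel{\SeqR}{\SeqC}{d}=\begin{bmatrix}\Delta^0 H & \cdots & \Delta^d H\end{bmatrix}$ then translates the matrix equation back into the column-by-column closedness constraints \eqref{eq:beforeLS}, so $\rowF{\SeqC}{r\sigma} = \rowF{\SeqC}{r}\,\trans(\sigma,|r|+1)$ for every $r\in\SeqR$, as required by \eqref{eq:constHA}. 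For (iii) I would note that $H=\PHankel{\SeqR}{\SeqC}$ and $\SeqR=[r_1,\ldots,r_m]$, so $\vec{e_1}^\top H = \rowF{\SeqC}{r_1}$, matching $\init$ in \eqref{eq:constHA}, while the final vector $\vec{e_1}$ returned by \HBuild{} matches $\fin(x)=\vec{e_1}$ verbatim.

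The hard part will not be any single mathematical step, since both directions are just transpositions of the derivation surrounding \eqref{eq:bigLS}; rather, it is the indexing bookkeeping to align the stacked representation of the unknowns in $Y$ with the entries of the coefficient matrices $M_0,\ldots,M_d$ of $\trans(\sigma,x)$. I would therefore fix a column index $j\in\{1,\ldots,n\}$, write the $j$-th column of $Y$ explicitly as an element of $\QQ^{(d+1)n}$ partitioned into $d+1$ blocks of size $n$, and verify on that fixed column that the system \eqref{eq:bigLS} is equivalent, block by block, to the closedness constraint on the $j$-th column of $\trans(\sigma,x)$. Once this correspondence is written out once, the conclusion of the corollary follows immediately.
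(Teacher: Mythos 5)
Your proposal is correct and follows essentially the same route as the paper, which states the corollary as an immediate consequence of the construction in~\eqref{eq:constHA} and the derivation of the system~\eqref{eq:bigLS} (the same forward/backward translation between the closedness constraints and the block-structured linear system already appears in the proof of Proposition~\ref{lem:full rowrank and closeness}). Your write-up simply makes explicit the solvability witness supplied by $d$-closedness and the indexing correspondence between the stacked solution $Y$ and the coefficient matrices of $\trans(\sigma,x)$, which the paper leaves implicit.
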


Concretely, in the function \HBuild computing $\AHankel{\SeqR}{\SeqC}{d}$ and $\PHankel{\SeqR}{\SeqC}$ can be evaluated by asking  membership queries from the teacher, through \Membership, at most $|\SeqR|\times|\SeqC|$ times. Therefore, the execution of this function  runs in time polynomial  in $d+|\SeqR|+|\SeqC|+|\Sigma|$.


\subsubsection{Correctness of  P-finite automata}


Let $\SeqR = [ r_1,\ldots,r_m]$ and $\SeqC = [c_1,\ldots,c_n]$ be two sequences of words from $\Sigma^*$. Assume that $\PHankel{\SeqR}{\SeqC}$ is closed and let $\AH = (\init,\trans,\fin)$ be an associated P-finite automaton over $\Sigma$. We say that that \emph{$\AH$ is \emph{correct} on the word~$w \in \Sigma^*$} if $\init\trans(w,1) = \rowF{\SeqC}{w}$.

As previously mentioned, after building a hypothesis automaton $\AH$ associated with a table, we will ask the teacher  an equivalence query on~$\AH$, through   \Equiv{$\AH$}, and receive  a counterexample $w$ in case $\AH$ is not equivalent to the target automaton. The automaton~$\AH$ is correct on $\varepsilon$ by construction and is necessarily incorrect on $w$, as indeed we initialize $\SeqC$ with $\varepsilon$ and ensure that the automaton is always correct on this word,  and the fact that $f(w) \neq \sem{\AH}(\varepsilon \cdot w) = \init\trans(w,1)\vec{e_1}$. We compute the longest prefix $u\sigma$ of $w$ such that $\AH$ is correct on $u$ but incorrect on $u\sigma$. Our learning algorithm extends its table by adding the row associated with $u$.

Computing such a prefix $u$ can be straightforwardly done as depicted in \Cref{alg:largest prefix}. The function \LPrefix{$\AH,\SeqC,w$} outputs $u,\sigma$ as well as the word $c_j \in \SeqC$ that renders $\AH$ incorrect on $u\sigma$.
The execution of \LPrefix{$\AH,\SeqC,w$} runs in time polynomial in its parameters, that is, in time polynomial in $|w|+|\SeqC|$.

In \Cref{col:functionHbuild}, we assumed that the table $\PHankel{\SeqR}{\SeqC}$ is closed in order to build the hypothesis automaton from it. However, when augmenting the table with the row associated with $u$, the closedness condition might  not hold anymore as the new row~$u$ might be linearly dependent with the previous rows in~$\SeqR$. 
We show in the next proposition that in such cases the closedness of the table can be restored by adding the column associated with $\sigma c_j$ to the table, where  $(u,\sigma,c_j)$ is the output of the function~\LPrefix{$\AH,\SeqC,w$}.

\begin{algorithm}[t]
  \Fn{\LPrefix{$\AH,\SeqC = [c_1,\ldots,c_n],w$}}{
    \SetAlgoVlined
    \SetAlgoShortEnd
    \tcp{\color{blue}$\AH = (\init,\trans,\fin(x))$ is a P-finite automaton of dimension $n$ over $\Sigma$}
    $\vec{x} := \init$, $u := \varepsilon$, $v := w$\;
    \While{$v = \sigma w'$}{
      $\vec{x} := \vec{x}\trans(\sigma,|u|+1)$\;
      \For{$j =1 \ldots n$}{
        $y :=$ \Membership{$u\sigma c_j$}\;
        \lIf{$\vec{x}_j \neq y$}{
          \KwRet{$(u,\sigma,c_j)$}
        }
      }
      $u := u\sigma$, $v := w'$\;
    }
  }
\caption{Largest correct prefix}
\label{alg:largest prefix}
\end{algorithm}

\begin{proposition}
\label{lem:rank}
Let  $d \in \NN$, and    $\SeqR,\SeqC$  be sequences of words such that $\AHankel{\SeqR}{\SeqC}{d}$ has full row rank. Let  
 $\AH = (\init,\trans,\fin)$ be an automaton associated to the $d$-closed table $\PHankel{\SeqR}{\SeqC}$. 
Let $u \in \Sigma^*$ and $\sigma \in \Sigma$ be
such that $\AH$ is correct on $u$ but not on $u\sigma$. Let $c_j$ be the $j$-th word in~$\SeqC$ where $ f(u \sigma c_j) \neq \init\trans(u\sigma,1)\vec{e_j}$.

Define  $\SeqR':=\SeqR \cdot [u]$. Then  the   matrix $\begin{bmatrix} \AHankel{\SeqR'}{\SeqC}{d} & \colF{\SeqR'}{\sigma c_j} \end{bmatrix}$ is  full row rank.
\end{proposition}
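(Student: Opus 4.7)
The plan is to argue by contradiction based on a case split on the rank of $\AHankel{\SeqR'}{\SeqC}{d}$. If $\AHankel{\SeqR'}{\SeqC}{d}$ already has full row rank (\emph{i.e.}, rank $m+1$ where $|\SeqR|=m$), then appending the column $\colF{\SeqR'}{\sigma c_j}$ cannot decrease the rank and we are done. Otherwise, since the first $m$ rows of $\AHankel{\SeqR'}{\SeqC}{d}$ (which are exactly the rows of $\AHankel{\SeqR}{\SeqC}{d}$) are linearly independent by hypothesis, there exist unique scalars $\lambda_1,\ldots,\lambda_m \in \QQ$ such that the last row is equal to $\sum_{i=1}^m \lambda_i$ times the previous rows. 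Writing out the block structure $\AHankel{\SeqR'}{\SeqC}{d} = \begin{bmatrix} \Delta'^0 H' & \cdots & \Delta'^d H' \end{bmatrix}$ with $\Delta' = \mathrm{diag}(|r_1|+1,\ldots,|r_m|+1,|u|+1)$, this dependence is equivalent to the family of identities
\[
(|u|+1)^k \rowF{\SeqC}{u} = \sum_{i=1}^m \lambda_i (|r_i|+1)^k \rowF{\SeqC}{r_i}, \qquad k=0,\ldots,d.
\]

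The goal of the main calculation is to show that $f(u\sigma c_j) \neq \sum_{i=1}^m \lambda_i f(r_i \sigma c_j)$, so that the same combination fails on the column $\colF{\SeqR'}{\sigma c_j}$, giving full row rank. To this end, I will use the $d$-closedness condition $\rowF{\SeqC}{r\sigma} = \rowF{\SeqC}{r} M_\sigma(|r|+1)$ for $r \in \SeqR$. Writing the $j$-th column of $M_\sigma(x)$ as $\sum_{k=0}^{d} V_k x^k$ with $V_k \in \QQ^n$, we have $f(r_i \sigma c_j) = \sum_{k=0}^d (|r_i|+1)^k \rowF{\SeqC}{r_i} V_k$. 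Substituting this into $\sum_i \lambda_i f(r_i \sigma c_j)$ and interchanging the two summations allows us to apply the row dependence above coordinate by coordinate, collapsing the expression to
\[
\sum_{i=1}^m \lambda_i f(r_i \sigma c_j) \;=\; \sum_{k=0}^d (|u|+1)^k \rowF{\SeqC}{u} V_k \;=\; \rowF{\SeqC}{u}\, M_\sigma(|u|+1) \vec{e_j}.
\]

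Finally, because $\AH$ is correct on $u$ we have $\rowF{\SeqC}{u} = \init \trans(u,1)$, and by the construction of $\AH$ in~\eqref{eq:constHA} we have $\trans(\sigma,x) = M_\sigma(x)$. Chaining these together gives $\sum_i \lambda_i f(r_i \sigma c_j) = \init \trans(u\sigma,1) \vec{e_j}$, whereas $\AH$ being incorrect on $u\sigma$ at the $j$-th coordinate means exactly that $f(u\sigma c_j) \neq \init \trans(u\sigma,1) \vec{e_j}$. Hence the added column breaks the dependence, and $\begin{bmatrix} \AHankel{\SeqR'}{\SeqC}{d} & \colF{\SeqR'}{\sigma c_j} \end{bmatrix}$ has full row rank.

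The main obstacle here is the middle step: transferring the linear dependence at the level of the block-augmented matrix through the polynomial evaluation $M_\sigma(|r|+1)$. This works precisely because bundling the $d+1$ scaled copies of $H$ into $\AHankel{\SeqR'}{\SeqC}{d}$ matches the Horner expansion of $M_\sigma$ of degree at most $d$; a cleaner way to phrase the argument is to observe that for every degree-$d$ polynomial vector $N(x)$ the map $r \mapsto \rowF{\SeqC}{r} N(|r|+1)$ is a linear functional on the rows of $\AHankel{\SeqR'}{\SeqC}{d}$, so it respects the dependence automatically.
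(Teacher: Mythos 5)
Your proof is correct and follows essentially the same route as the paper's: both extract the linear dependence $(|u|+1)^k \rowF{\SeqC}{u} = \sum_i \lambda_i (|r_i|+1)^k \rowF{\SeqC}{r_i}$ from the block structure of $\AHankel{\SeqR'}{\SeqC}{d}$, push it through the $d$-closedness identity $\rowF{\SeqC}{r_i\sigma} = \rowF{\SeqC}{r_i} M_\sigma(|r_i|+1)$ coefficient by coefficient, and then use correctness on $u$ to contradict incorrectness on $u\sigma$ at coordinate $j$. The only difference is organizational (your up-front case split and direct contrapositive versus the paper's global proof by contradiction), which does not change the substance.
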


\begin{proof}
Write $M$ for the matrix $\begin{bmatrix} \AHankel{\SeqR'}{\SeqC}{d} & \colF{\SeqR'}{\sigma c_j} \end{bmatrix}$. 
Since $\AHankel{\SeqR}{\SeqC}{d}$ is a sub-matrix of $M$ and $M$ has $|\SeqR|+1$ rows, we deduce that $|\SeqR| +1 \geq \rank{M} \geq \rank{\AHankel{\SeqR}{\SeqC}{d}} = |\SeqR|$. For a contradiction, assume that $M$ is not full row rank, implying that~$\rank{M}=|\SeqR|$. Then the last row of $M$ is a linear combination of all other rows of $M$.
In other words, writing $\SeqR$ as the sequence~$[r_1,\ldots,r_m]$, there exists a row vector $\rowvec{x}$ such that for all words~$c \in \SeqC$, for all $k \in \{0,\ldots,d\}$, 
\[
\left\{
\begin{array}{l}
\rowvec{x} 
\begin{bmatrix}
(|r_1|+1)^k f(r_1c)& \ldots & (|r_m|+1)^k f(r_m c)
\end{bmatrix}^\top
= (|u|+1)^k f(uc)\\
\rowvec{x} 
\begin{bmatrix}
f(r_1\sigma c_j)& \ldots & f(r_m \sigma c_j)
\end{bmatrix}^\top
= f(u \sigma c_j)
\end{array}
\right.
.
\]

By \Cref{lem:full rowrank and closeness}, since  $\AHankel{\SeqR}{\SeqC}{d}$ has full row rank,   the table $\PHankel{\SeqR}{\SeqC}$ is $d$-closed, implying that for each row $i \in \{1,\ldots,m\}$, 
the equality 
$\rowF{\SeqC}{r_i\sigma} = \rowF{\SeqC}{r_i}\trans(\sigma,|r_i|+1)$ holds.

Recall that the $\trans(\sigma)$ are matrices of univariate polynomials. 
 Define $\trans^{(k)}_{\sigma}$ 
 to be  the matrix whose $(i,j)$-th entry is the coefficient of  $x^k$ in the  $(i,j)$-th entry of $\trans(\sigma)$.  
We obtain that $\rowF{\SeqC}{r_i\sigma} = \rowF{\SeqC}{r_i} \sum_{k=0}^d (|r_i|+1)^k \trans^{(k)}_{\sigma}$. Therefore,
\[
\begin{bmatrix}
  f(r_1\sigma c_j)\\
  \ldots\\
  f(r_m\sigma c_j)
\end{bmatrix} =
\sum_{k=0}^d 
\begin{bmatrix}
  (|r_1|+1)^k f(r_1 c_1) & \ldots & (|r_1|+1)^k f(r_1 c_n) \\
  \ldots&\ldots&\ldots\\
  (|r_m|+1)^k f(r_m c_1) & \ldots & (|r_m|+1)^k f(r_m c_n) \\
\end{bmatrix}
\trans^{(k)}_{\sigma} \vec{e_j}
.
\]
Multiplying both sides of the equation by $\rowvec{x}$, we obtain:
\[
f(u \sigma c_j) = \sum_{k=0}^d 
\begin{bmatrix}
(|u|+1)^k f(u c_1) & \ldots & (|u|+1)^k f(u c_n)
\end{bmatrix} \trans^{(k)}_{\sigma} \vec{e_j}\,,
\]
which in turn implies that  \[
f(u \sigma c_j) = \rowF{\SeqC}{u} \sum_{k=0}^d (|u|+1)^k \trans^{(k)}_{\sigma} \vec{e_j} = \rowF{\SeqC}{u} \trans(\sigma,|u|+1)\vec{e_j}.
\]

By hypothesis, the automaton $\AH$ is correct on $u$, meaning that  $\init\trans(u,1) = \rowF{\SeqC}{u}$. Subsequently, \[f(u \sigma c_j) = \init\trans(u,1)\trans(\sigma,|u|+1)\vec{e_j} = \init\trans(u\sigma,1)\vec{e_j}.\] This is in contradiction with the assumption $\init\trans(u\sigma,1)\vec{e_j} \neq f(u \sigma c_j)$, concluding the proof.
\end{proof}


\begin{corollary}
  \label{cor:rank}
Let  $d \in \NN$, and    $\SeqR,\SeqC$  be sequences of words such that $\AHankel{\SeqR}{\SeqC}{d}$ is full row rank. Let  
 $\AH = (\init,\trans,\fin)$ be an automaton associated to the $d$-closed table $\PHankel{\SeqR}{\SeqC}$. 
Let $u \in \Sigma^*$ and $\sigma \in \Sigma$ be
such that $\AH$ is correct on $u$ but not on $u\sigma$. Let $c_j$ be the $j$-th word in~$\SeqC$ where $ f(u \sigma c_j) \neq \init\trans(u\sigma,1)\vec{e_j}$.

Define  $\SeqR':=\SeqR \cdot [u]$ and $\SeqC':=\SeqC \cdot [\sigma c_j]$. For all $d'>d$,    the table $\AHankel{\SeqR'}{\SeqC'}{d'}$ has  full row rank.
\end{corollary}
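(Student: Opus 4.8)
The plan is to reduce Corollary~\ref{cor:rank} to Proposition~\ref{lem:rank} together with two elementary facts about how the matrix $\AHankel{\SeqR}{\SeqC}{d}$ grows when we enlarge $d$ or append a column. First I would unpack the claim: we already know from Proposition~\ref{lem:rank} that $M := \begin{bmatrix} \AHankel{\SeqR'}{\SeqC}{d} & \colF{\SeqR'}{\sigma c_j} \end{bmatrix}$ has full row rank, i.e.\ rank $|\SeqR|+1$. The goal is to show that the (potentially much larger) matrix $\AHankel{\SeqR'}{\SeqC'}{d'}$ also has full row rank $|\SeqR|+1$, for every $d' > d$. Since $\SeqR'$ has $|\SeqR|+1$ rows, full row rank is the maximum possible, so it suffices to exhibit $M$ (or a row-equivalent matrix) as a column-submatrix of $\AHankel{\SeqR'}{\SeqC'}{d'}$: a matrix containing a full-row-rank submatrix with the same number of rows is itself full row rank.

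The key step is therefore a bookkeeping argument identifying the columns. Recall $\AHankel{\SeqR'}{\SeqC'}{d'} = \begin{bmatrix} \Delta^0 H' & \cdots & \Delta^{d'} H'\end{bmatrix}$ where $H' = \PHankel{\SeqR'}{\SeqC'}$ and $\Delta = \mathrm{diag}(|r|+1 : r \in \SeqR')$. Now $\SeqC' = \SeqC \cdot [\sigma c_j]$, so each block $\Delta^k H'$ splits (after a harmless column permutation) into $\begin{bmatrix}\Delta^k \PHankel{\SeqR'}{\SeqC} & \Delta^k \colF{\SeqR'}{\sigma c_j}\end{bmatrix}$. Collecting the $\Delta^k \PHankel{\SeqR'}{\SeqC}$ blocks for $k = 0, \ldots, d$ recovers exactly $\AHankel{\SeqR'}{\SeqC}{d}$ (this uses $d' \geq d$, which is where the hypothesis $d' > d$ enters — in fact $d' \geq d$ would already do), and among the remaining blocks we find $\Delta^0 \colF{\SeqR'}{\sigma c_j} = \colF{\SeqR'}{\sigma c_j}$ itself. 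Hence the column-set of $\AHankel{\SeqR'}{\SeqC'}{d'}$ contains (up to permutation) all columns of $M = \begin{bmatrix} \AHankel{\SeqR'}{\SeqC}{d} & \colF{\SeqR'}{\sigma c_j}\end{bmatrix}$, so $\mathrm{rank}\,\AHankel{\SeqR'}{\SeqC'}{d'} \geq \mathrm{rank}\,M = |\SeqR|+1$, and equality follows from the row count.

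I do not expect a genuine obstacle here; the only thing to be careful about is that Proposition~\ref{lem:rank} is stated for the specific value $d$, and one must make sure its hypotheses are met — namely that $\AHankel{\SeqR}{\SeqC}{d}$ has full row rank (given) and that $\AH$ is an automaton associated to the $d$-closed table $\PHankel{\SeqR}{\SeqC}$ (given, with $d$-closedness itself following from full row rank via Proposition~\ref{lem:full rowrank and closeness}) — all of which are already assumed in the statement of the corollary, so the application is immediate. The proof is thus a two-line invocation: apply Proposition~\ref{lem:rank} to get full row rank of $M$, then observe $M$'s columns sit inside those of $\AHankel{\SeqR'}{\SeqC'}{d'}$.

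\begin{proof}
By Proposition~\ref{lem:rank}, the matrix $M := \begin{bmatrix} \AHankel{\SeqR'}{\SeqC}{d} & \colF{\SeqR'}{\sigma c_j} \end{bmatrix}$ has full row rank, namely rank $|\SeqR'| = |\SeqR|+1$. Write $H' := \PHankel{\SeqR'}{\SeqC'}$ and let $\Delta'$ be the $|\SeqR'| \times |\SeqR'|$ diagonal matrix with entries $|r|+1$ for $r \in \SeqR'$. Since $\SeqC' = \SeqC \cdot [\sigma c_j]$, for each $k$ the block $(\Delta')^k H'$ decomposes, up to a fixed permutation of its columns, as $\begin{bmatrix} (\Delta')^k \PHankel{\SeqR'}{\SeqC} & (\Delta')^k \colF{\SeqR'}{\sigma c_j} \end{bmatrix}$. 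Hence, up to a permutation of columns, the matrix $\AHankel{\SeqR'}{\SeqC'}{d'} = \begin{bmatrix} (\Delta')^0 H' & \ldots & (\Delta')^{d'} H' \end{bmatrix}$ contains as a column-submatrix the blocks $(\Delta')^k \PHankel{\SeqR'}{\SeqC}$ for $k = 0, \ldots, d$ — which together form $\AHankel{\SeqR'}{\SeqC}{d}$, using $d' > d$ — as well as the block $(\Delta')^0 \colF{\SeqR'}{\sigma c_j} = \colF{\SeqR'}{\sigma c_j}$. Therefore $M$ is, up to column permutation, a column-submatrix of $\AHankel{\SeqR'}{\SeqC'}{d'}$, so $\rank{\AHankel{\SeqR'}{\SeqC'}{d'}} \geq \rank{M} = |\SeqR|+1$. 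As $\AHankel{\SeqR'}{\SeqC'}{d'}$ has exactly $|\SeqR'| = |\SeqR|+1$ rows, it has full row rank.
\end{proof}
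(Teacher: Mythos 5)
Your proof is correct and follows essentially the same route as the paper's: invoke Proposition~\ref{lem:rank} to get full row rank of $\begin{bmatrix} \AHankel{\SeqR'}{\SeqC}{d} & \colF{\SeqR'}{\sigma c_j} \end{bmatrix}$, then observe that this matrix is (up to column permutation) a column-submatrix of $\AHankel{\SeqR'}{\SeqC'}{d'}$. The paper states this in one line; your version just spells out the block bookkeeping.
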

\begin{proof}
    The result follows from \Cref{lem:rank} and the fact that $\begin{bmatrix} \AHankel{\SeqR'}{\SeqC}{d} & \colF{\SeqR'}{\sigma c_j} \end{bmatrix}$ is a submatrix of  $\AHankel{\SeqR'}{\SeqC \cdot [\sigma c_j]}{d'}$. 
\end{proof}

We can combine the above-mentioned functions to define our \emph{partial learner}, which is depicted in \Cref{func:partial learner}. It takes four arguments:  two sequences of words $\SeqR$, $\SeqC$ that determine the table, an integer $d_{max}$ representing our \emph{guess} of the maximal degree of polynomials occurring in the target automaton, and finally a \emph{timeout} integer $\ell$ on the number of columns $|\SeqC|$. This limit~$\ell$ acts as a safeguard in case our guess $d_{max}$ is incorrect.

By construction, when \PLearner{$d_{max},\ell,\SeqR,\SeqC$} returns \KwSome{$\AH$}, the automaton $\AH$ computes the target function $f$, that is $\sem{\AH} = f$. However, the function may return \KwNone if we fail to find an equivalent automaton within the time bound imposed by $\ell$. In the next section, we will show that if we take $\ell$ large enough and if we have guessed correctly the value $d_{max}$ for the maximal degree of polynomials occurring in the target automaton, then \PLearner{$d_{max},\ell,[\varepsilon],[\varepsilon]$} will always eventually learn the target.


\subsubsection{Bounding the number of columns}

The function \PLearner incorporates a limit~$\ell$ on the number of columns added while constructing the table. Having this limit ensures that our algorithm terminates even when the guess of the maximal degree of polynomials $d_{max}$ in the target automaton is incorrect. However, we need also to guarantee that we never exceed the limit when the guess for $d_{max}$ is correct. We can compute such a limit by relating the columns $c_1,\ldots,c_n$ that are added during the execution of \PLearner with the submodule of $\backS{\A}$ generated by \[\backF{\A}(c_1),\, \ldots,\, \backF{\A}(c_n).\] Intuitively, the following proposition shows that the sequence of modules $\mathcal{M}_1,\ldots,\mathcal{M}_n$ defined as $\mathcal{M}_i = \mspan[{\QQ[x]}]{\backF{\A}(c_j) : j\leq i}$, for all $i \in \{1\ldots n\}$, is strictly increasing, that is, $\mathcal{M}_1 \subsetneq \ldots \subsetneq \mathcal{M}_n$.

\medskip

We say that a sequence~$[w_1,\ldots, w_n]$ of words is 
\emph{totally suffix-closed}  if for each $w_i$ all its suffixes~$s$
occur before $w_i$ in the sequence, meaning that 
$s=w_j$ for some $j \leq i$.


\begin{proposition}
  \label{lem:bound_column}
  Let $\SeqR, \SeqC$ be  sequences of words from $\Sigma^*$,
  such that $\SeqC$ is totally suffix-closed.
Let $d_{max}$ be the maximal degree of polynomials  in the target P-finite automaton $\A$.
Let $d \geq d_{max}(|\SeqC|+1)|\SeqC|$, 
let~$c\in \SeqC$ and let $\sigma \in \Sigma$. Assume that 
the matrix $\AHankel{\SeqR}{\SeqC}{d}$ does not have full row rank, but the matrix~$\begin{bmatrix} \AHankel{\SeqR}{\SeqC}{d} & \colF{\SeqR}{\sigma c} \end{bmatrix}$ has full row rank.
Then \[\backF{\A}(\sigma c) \not\in \mspan[{\QQ[x]}]{\backF{\A}(c'):c'\in \SeqC}.\]
  \end{proposition}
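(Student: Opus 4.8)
## Proof Proposal

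\textbf{Overall approach.} The statement relates a purely linear-algebraic rank condition on the (extended) Hankel table with a module-theoretic non-membership statement about backward reachability vectors. The natural strategy is contrapositive: assume $\backF{\A}(\sigma c) \in \mspan[{\QQ[x]}]{\backF{\A}(c') : c' \in \SeqC}$, and derive that $\colF{\SeqR}{\sigma c}$ already lies in the column space of $\AHankel{\SeqR}{\SeqC}{d}$, contradicting the hypothesis that adjoining this column raises the row rank. The key bridge is the identity $\PHankel{\SeqR}{\SeqC}_{i,j} = f(r_i c_j) = \init \trans(r_i,1)\backF{\A}(c_j)(|r_i|+1)$, which expresses each table entry as the pairing of a fixed ``row functional'' (depending only on $r_i$) with the backward vector of the column word, evaluated at $|r_i|+1$. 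So every column of the table — and, after inserting the diagonal powers $\Delta^k$, every column of $\AHankel{\SeqR}{\SeqC}{d}$ — is obtained by applying these functionals to $\backF{\A}(c)(x)$ and its products with powers of $x$.

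\textbf{Key steps.} First I would fix the notation: for $r \in \SeqR$ let $\lambda_r : \QQ[x]^n \to \QQ$ be the $\QQ$-linear map $\lambda_r(\vec{p}) := \init\trans(r,1)\,\vec{p}(|r|+1)$, so that $f(rc) = \lambda_r(\backF{\A}(c))$ and, more generally, the block of $\AHankel{\SeqR}{\SeqC}{d}$ indexed by power $k$ and column $c$ has $i$-th entry $(|r_i|+1)^k \lambda_{r_i}(\backF{\A}(c)) = \lambda_{r_i}(x^k \backF{\A}(c))$, using that $\lambda_{r_i}$ evaluates the polynomial argument at $|r_i|+1$. Next, suppose $\backF{\A}(\sigma c) = \sum_{c' \in \SeqC} p_{c'}(x)\, \backF{\A}(c')(x)$ for polynomials $p_{c'} \in \QQ[x]$. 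Because $\SeqC$ is totally suffix-closed and $c \in \SeqC$, the word $\sigma c$ has $c$ among its suffixes but need not itself be in $\SeqC$; nevertheless the displayed combination is a genuine $\QQ[x]$-combination of the $\backF{\A}(c')$, $c' \in \SeqC$. The crucial quantitative point is that by the recursive definition $\backF{\A}(\sigma c)(x) = \trans(\sigma,x)\backF{\A}(c)(x+1)$ together with induction along the suffix chain inside $\SeqC$, all the coefficient polynomials $p_{c'}$ can be taken of degree at most $d_{max}(|\SeqC|+1)|\SeqC| \leq d$; this is where the hypothesis $d \geq d_{max}(|\SeqC|+1)|\SeqC|$ is used, and it should follow from a degree-bookkeeping argument analogous to the one in the proof of \Cref{th:backward generation} (each application of $\trans(\sigma,\cdot)$ adds at most $d_{max}$ to the degree, and the chain of substitutions $x \mapsto x+1$ does not change degrees).

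\textbf{Completing the argument.} Writing $p_{c'}(x) = \sum_{k=0}^{d} a_{c',k} x^k$, apply $\lambda_{r_i}$ to both sides of the combination: linearity of $\lambda_{r_i}$ and the fact that $\lambda_{r_i}$ is evaluation-at-$(|r_i|+1)$ on the polynomial part give
\[
f(r_i \sigma c) = \lambda_{r_i}(\backF{\A}(\sigma c)) = \sum_{c' \in \SeqC} \sum_{k=0}^{d} a_{c',k}\, (|r_i|+1)^k\, f(r_i c').
\]
In matrix form over all $r_i \in \SeqR$, this says exactly that $\colF{\SeqR}{\sigma c} = \AHankel{\SeqR}{\SeqC}{d}\, Z$ where $Z$ is the vector of coefficients $a_{c',k}$ (arranged to match the block structure $[\Delta^0 H \mid \cdots \mid \Delta^d H]$). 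Hence $\colF{\SeqR}{\sigma c}$ lies in the column space of $\AHankel{\SeqR}{\SeqC}{d}$, so $\rank\begin{bmatrix}\AHankel{\SeqR}{\SeqC}{d} & \colF{\SeqR}{\sigma c}\end{bmatrix} = \rank \AHankel{\SeqR}{\SeqC}{d}$, which by hypothesis is strictly less than $|\SeqR|$ — contradicting that the augmented matrix has full row rank. This contradiction establishes the claim.

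\textbf{Main obstacle.} The delicate part is the degree bound on the $p_{c'}$: one must show that unfolding $\backF{\A}(\sigma c)$ as a $\QQ[x]$-combination of $\{\backF{\A}(c') : c' \in \SeqC\}$ can be done with coefficient degrees bounded by $d_{max}(|\SeqC|+1)|\SeqC|$. This needs the total suffix-closedness of $\SeqC$ (so that one can peel off letters from $\sigma c$ one at a time and always land back inside the spanning set), plus a careful induction tracking how the degree grows — at most $d_{max}$ per peeled letter, over at most $|\SeqC|$ steps, with the extra factor accounting for the substitutions $x \mapsto x + j$ that shift evaluation points. Everything else is routine linear algebra once this bound is in hand.
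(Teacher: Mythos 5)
Your overall skeleton coincides with the paper's proof: argue by contradiction from $\backF{\A}(\sigma c)\in\mspan[{\QQ[x]}]{\backF{\A}(c'):c'\in\SeqC}$, pair both sides with the functionals $\vec p\mapsto \init\trans(r,1)\vec p(|r|+1)$ to turn the module relation into the scalar identities $f(r\sigma c)=\sum_i p_i(|r|+1)f(rc_i)$, and read these off as $\colF{\SeqR}{\sigma c}$ lying in the column space of $\AHankel{\SeqR}{\SeqC}{d}$, contradicting the rank hypothesis. That part is correct and is exactly what the paper does.

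However, the step you yourself flag as the main obstacle --- that the coefficients $p_{c'}$ can be chosen with degree at most $d$ --- is a genuine gap, and the route you sketch for it would not work. The membership of $\backF{\A}(\sigma c)$ in the module is a bare hypothesis assumed for contradiction; it is not produced by any ``peeling'' process, and the module $\mspan[{\QQ[x]}]{\backF{\A}(c'):c'\in\SeqC}$ need not be closed under $\vec v(x)\mapsto\trans(\sigma,x)\vec v(x+1)$, so inducting along the suffix chain gives you no representation of $\backF{\A}(\sigma c)$ at all, let alone one with controlled coefficient degrees. (Your bookkeeping of ``$d_{max}$ per peeled letter over $|\SeqC|$ steps'' also does not reach the bound $d_{max}(|\SeqC|+1)|\SeqC|$ that is actually needed.) The paper closes this gap differently: total suffix-closedness is used only to bound the \emph{lengths} of the words in $\SeqC\cup\{\sigma c\}$ by $n=|\SeqC|$, whence the entries of the vectors $\backF{\A}(c_1),\ldots,\backF{\A}(c_n),\backF{\A}(\sigma c)$ have degree at most $d_{max}(n+1)$; then one invokes a general small-solution result for linear systems over $\QQ[x]$ (Lemma~2.5 of Kannan, cited as \cite{KANNAN198569}), which guarantees that a solvable system $\begin{bmatrix}\backF{\A}(c_1)&\ldots&\backF{\A}(c_n)\end{bmatrix}X=\backF{\A}(\sigma c)$ admits a solution whose entries have degree at most $n$ times the maximal degree of the system's entries, i.e.\ at most $d_{max}(n+1)n\leq d$. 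Without some such Cramer-type degree bound on solutions of $\QQ[x]$-linear systems, your argument cannot conclude that the coefficient vector $Z$ fits inside the $d+1$ blocks of $\AHankel{\SeqR}{\SeqC}{d}$.
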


\begin{proof}
Write $[r_1,\ldots,r_m]$ for the sequence $\SeqR$ and $[c_1,\ldots,c_n]$ for the sequence $\SeqC$. Towards a contradiction, assume that $\backF{\A}(\sigma c) \in \mspan[{\QQ[x]}]{\backF{\A}(c_1),\ldots,\backF{\A}(c_n)}$. Then there exist polynomials~$p_1,\ldots, p_n \in \QQ[x]$ such that $\backF{\A}(\sigma c) = \sum_{i=1}^n p_i(x) \backF{\A}(c_i)$ and hence the equation 
\begin{equation}
\label{eq:XsolutionKannan}
    \begin{bmatrix}\backF{\A}(c_1) & \ldots & \backF{\A}(c_n)\end{bmatrix}
X =
\backF{\A}(\sigma c)
\end{equation}
has solution $X = \begin{bmatrix}p_1(x) & \ldots & p_n(x)\end{bmatrix}^\top$. 

A simple induction on the length of the words~$w$ gives that the degree of the polynomials in $\backF{\A}(w)$ is at most $d_{max}(|w|+1)$. Since  $\SeqC$ is totally suffix-closed, we deduce that $c_1=\varepsilon$, as well as~$\max(|c_1|,\ldots,|c_n|,|\sigma c|) \leq n$. 
These two  above facts imply that the maximal degree of polynomials in $\backF{\A}(c_1)$, \ldots, $\backF{\A}(c_n)$, $\backF{\A}(\sigma c)$ is at most $d_{max}(n+1)$.
By~\cite[Lemma 2.5]{KANNAN198569}, we can assume that the maximum degree of polynomials~$p_i(x)$ in the solution~$X$ of Equation~\eqref{eq:XsolutionKannan} is at most $n$ times the maximum degree of the polynomials in $\backF{\A}(c_1)$, \ldots, $\backF{\A}(c_n)$, $\backF{\A}(\sigma c)$.
Subsequently, the maximum degree of polynomials~$p_i(x)$ in $X$ is at most $d_{max}(n+1)n$. 

Let $(\init, \trans,\fin)$ be the target automaton~$\A$, and let  $f:=\sem{A}$.
Since $\backF{\A}(\sigma c) = \sum_{i=1}^n p_i (x)\backF{\A}(c_i)$ holds, for   all words~$r \in \SeqR$ we have that
\[\init\trans(r,1) \backF{\A}(\sigma c) = \sum_{i=1}^n p_i(|r|+1) \,  \init\trans(r,1) \backF{\A}(c_i) \,, \]
which, in turn, by the definition of the backward function gives
\begin{equation}
\label{eq:jrc}
    f(r \sigma c) = \sum_{i=1}^n p_i(|r|+1)f(rc_i) \,.
\end{equation}

Since $d\geq d_{max}(n+1)n$, we  write the polynomials~$p_i$ in~$X$ as 
$p_i(x):=p^{(0)}_i + p^{(1)}_i x + \ldots + p^{(d)}_i x^d$
where $p^{(k)}_i$ the coefficient of monomial~$x^k$ in~$p_i(x)$. Substituting this representation into~\eqref{eq:jrc}, we get
\begin{equation}
\label{eq:coeffjrc}
    f(r \sigma c) = \sum_{i=1}^n \sum_{k=0}^d  p_i^{(k)} \, (|r|+1)^k \, f(rc_i).
\end{equation}
We group the coefficient  $p_i^{(k)}$ of the monomial~$x^k$   in all polynomials $p_i(x)$ in a single vector~$P^{(k)}$; formally,  define
$d$ vectors $P^{(0)}, \ldots, P^{(d)}$ such that 
 $P^{(k)} = \begin{bmatrix} p_1^{(k)} & \ldots & p_n^{(k)}
\end{bmatrix}^\top$
for  $k\in \{1,\ldots,d\}$. 
From~\eqref{eq:coeffjrc}
we obtain that: 
\[
  \colF{\SeqR}{\sigma c} =
  \begin{bmatrix}
    f(r_1\sigma c) \\ \vdots \\ f(r_m\sigma c)
  \end{bmatrix}
  = \sum_{k=0}^d \Delta^k H P^{(k)} = \AHankel{\SeqR}{\SeqC}{d}
  \begin{bmatrix}
    P^{0}\\
    \vdots\\
    P^{d}
  \end{bmatrix},
\]
where $H=\PHankel{\SeqR}{\SeqC}$ and $\Delta$ is the $m\times m$ diagonal matrix $diag(|r_1|+1, \ldots, |r_m|+1)$. We deduce that the rank of $\AHankel{\SeqR}{\SeqC}{d}$ is equal to the rank of $\begin{bmatrix} \AHankel{\SeqR}{\SeqC}{d} & \colF{\SeqR}{\sigma c} \end{bmatrix}$. This is in contradiction with the assumption that $\AHankel{\SeqR}{\SeqC}{d}$ is not full row rank, but $\begin{bmatrix} \AHankel{\SeqR}{\SeqC}{d} & \colF{\SeqR}{\sigma c} \end{bmatrix}$ is, concluding the proof. 
\end{proof}

Using \Cref{cor:sequence of increasing modules}, we can compute an upper bound on the maximum size of increasing submodule of the backward module $\backS{\A}$.

\begin{algorithm}[t]
  \Fn{\PLearner{$d_{max},\ell,\SeqR,\SeqC$}}{ 
    \lIf{$\ell < |\SeqC|$}{ \KwRet{\KwNone}}
    \Else{
      $d := d_{max}(|\SeqC|+1)|\SeqC|$\;
      $\AH :=$ \HBuild{$d,\SeqR,\SeqC$}\;
      \Switch{\Equiv{$\AH$}}{
        \lCase(\tcp*[f]{\color{blue}Found equivalent automaton}){\KwNone}{\KwRet{\KwSome{$\AH$}}} 
        \Case(\tcp*[f]{\color{blue}A counterexample $w$ has been found}){\KwSome{$w$}}{
          $(u,\sigma,c) :=$ \LPrefix{$\AH,w$}\;
          \lIf{$\rank{\AHankel{\SeqR \cdot [u]}{\SeqC}{d}} = |\SeqR|+1$}{ 
            \PLearner{$d_{max},\ell,\SeqR \cdot [u],\SeqC$}
          }\lElse{
            \PLearner{$d_{max},\ell,\SeqR \cdot [u],\SeqC \cdot [\sigma c]$}
          }
        }
      }
    }
  }
\caption{The partial learner}
\label{func:partial learner}
\end{algorithm}


\begin{proposition}
\label{lem:learner succeed}
Let $\A$ be the target automaton of dimension~$n$ and with $d_{max}$ the maximal degree of its polynomials.
Define $L(y_1,y_2):=\big((y_1+1)y_2^2\big)^{y_2}$.

Let $n' \geq n$ and $d \geq d_{max}$. Then 
\PLearner{$d,L(d,n'),[\varepsilon],[\varepsilon]$} $\neq \KwNone$.

\end{proposition}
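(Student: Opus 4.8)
The plan is to show that with $d = d_{max}$ the table always stays $d$-closed after suitable augmentation, and that the timeout $\ell = L(d,n')$ is never reached; hence \PLearner must exit via the \KwNone-branch of \Equiv{}, i.e., by finding an equivalent automaton. I will maintain two invariants along the recursion of \PLearner{$d,\ell,\SeqR,\SeqC$} started from $[\varepsilon],[\varepsilon]$: first, that $\SeqC$ is totally suffix-closed and begins with $\varepsilon$; second, that the matrix $\AHankel{\SeqR}{\SeqC}{d'}$ is full row rank for every $d' \geq d$, where in particular $d' = d_{max}(|\SeqC|+1)|\SeqC|$ is the degree bound actually used inside \HBuild{}. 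The suffix-closure invariant is immediate from the fact that the only column ever added is $\sigma c$ with $c \in \SeqC$, together with the initialisation $\SeqC = [\varepsilon]$ (so every proper suffix of $\sigma c$ is a suffix of $c$, hence already present by the induction hypothesis). The full-row-rank invariant is preserved in the first recursive branch trivially (we only add a row $u$, and the guard $\rank{\AHankel{\SeqR\cdot[u]}{\SeqC}{d}} = |\SeqR|+1$ is exactly the assertion that it still holds), and in the second branch it is preserved by \Cref{cor:rank}, using that \LPrefix{} returns $(u,\sigma,c_j)$ with $\AH$ correct on $u$ but not on $u\sigma$ at column $c_j$, and that $\AH$ is the automaton associated to the $d'$-closed table (which it is, by \Cref{lem:full rowrank and closeness} applied with degree $d'$, since $\AHankel{\SeqR}{\SeqC}{d'}$ is full row rank by the invariant).

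Next I would bound the number of columns. Each time we take the second recursive branch we have, at that moment, $\AHankel{\SeqR\cdot[u]}{\SeqC}{d''}$ (for $d'' = d_{max}(|\SeqC|+1)|\SeqC|$, the current degree parameter) not full row rank while $\begin{bmatrix}\AHankel{\SeqR\cdot[u]}{\SeqC}{d''} & \colF{\SeqR\cdot[u]}{\sigma c}\end{bmatrix}$ is full row rank — this is precisely the hypothesis of \Cref{lem:bound_column} once we check the degree condition $d'' \geq d_{max}(|\SeqC|+1)|\SeqC|$, which holds with equality by construction since $d \geq d_{max}$ gives $d_{max}(|\SeqC|+1)|\SeqC| \leq d(|\SeqC|+1)|\SeqC|$; more carefully one uses that the degree parameter passed to \HBuild is $d_{max}'(|\SeqC|+1)|\SeqC|$ where $d_{max}' := d \geq d_{max}$, so the inequality $d_{max}'(|\SeqC|+1)|\SeqC| \geq d_{max}(|\SeqC|+1)|\SeqC|$ is what \Cref{lem:bound_column} requires. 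Hence \Cref{lem:bound_column} yields $\backF{\A}(\sigma c) \notin \mspan[{\QQ[x]}]{\backF{\A}(c') : c' \in \SeqC}$. Writing $c_1,\ldots,c_k$ for the successive columns of $\SeqC$ and $\mathcal{M}_i := \mspan[{\QQ[x]}]{\backF{\A}(c_j) : j \leq i}$, this shows $\mathcal{M}_{i} \subsetneq \mathcal{M}_{i+1}$ each time a column is added, so $\mathcal{M}_1 \subsetneq \cdots \subsetneq \mathcal{M}_k$ is a strictly increasing chain of $\QQ[x]$-submodules of $\backS{\A} \subseteq \QQ[x]^n$. The modules $\mathcal{M}_i$ all lie inside the backward module of the $n$-dimensional automaton $\A$, whose generators (as provided by \Cref{th:backward generation}, or directly by $\backF{\A}(\varepsilon)=\fin$ and closure) have degree at most $d_{max}(n+1) \leq d_{max}(n'+1)$; applying \Cref{cor:sequence of increasing modules} with $r \leq n \leq n'$ and degree bound $d_{max}(n'+1)$ gives $k \leq d_{max}(n'+1) \cdot n'$... and here one must be a little careful with exactly which constants to feed in so that $k < L(d,n') = \big((d+1)n'^2\big)^{n'}$. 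Since $d \geq d_{max}$ and $n' \geq n$, the crude bound $d_{max}(n'+1)n' \leq (d+1)n'^2 \leq \big((d+1)n'^2\big)^{n'} = L(d,n')$ (using $n' \geq 1$) comfortably suffices, so the guard $\ell < |\SeqC|$ is never triggered with $\ell = L(d,n')$.

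It then remains to argue termination of the recursion and the form of the exit. Every recursive call either adds a row (first branch) or adds a row and a column (second branch). The number of column-additions is bounded by $L(d,n')$ as just shown. Between two column-additions, only rows are added, and the number of rows cannot grow without bound while $\AHankel{\SeqR}{\SeqC}{d'}$ keeps full row rank: a full-row-rank matrix with $(d'+1)|\SeqC|$ columns has at most $(d'+1)|\SeqC|$ rows, so the row count is capped (as a function of the current, bounded, $|\SeqC|$ and $d'$). Hence the recursion tree is finite, and \PLearner{$d,L(d,n'),[\varepsilon],[\varepsilon]$} must terminate by returning either \KwNone or \KwSome{$\AH$}. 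The \KwNone return only occurs at the line \lIf{$\ell < |\SeqC|$}{\KwRet{\KwNone}}, which we have shown is unreachable; therefore the call returns \KwSome{$\AH$}, i.e., \PLearner{$d,L(d,n'),[\varepsilon],[\varepsilon]$} $\neq$ \KwNone, as required. The main obstacle I anticipate is bookkeeping the degree parameters precisely — the degree bound used inside \HBuild is itself a function of $|\SeqC|$ and changes as columns are added — and verifying that this varying bound always satisfies the hypotheses of both \Cref{lem:bound_column} (lower bound on the degree) and \Cref{cor:rank} / \Cref{lem:full rowrank and closeness} (the associated automaton really is $d'$-closed), so that the module-chain argument and the full-row-rank invariant can be chained together without a gap.
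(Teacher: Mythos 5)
Your overall strategy matches the paper's: maintain the total-suffix-closure and full-row-rank invariants, invoke \Cref{lem:bound_column} to get a strictly increasing chain of submodules $\mathcal{M}_1 \subsetneq \cdots \subsetneq \mathcal{M}_k$ of the backward module, bound $k$, and conclude that the timeout is never triggered. The gap is in the step where you bound $k$. You apply \Cref{cor:sequence of increasing modules} to the \emph{entire} chain with a single degree bound, obtaining $k \leq d_{max}(n'+1)n'$. But that corollary only applies to a chain of submodules \emph{all having the same rank} $r$, and the chain arising in the learning process does not satisfy this: the rank of $\mathcal{M}_i$ grows from at most $1$ (for $\mathcal{M}_1 = \mspan[{\QQ[x]}]{\backF{\A}(\varepsilon)}$) up to possibly $n$. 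Moreover, the corollary's degree hypothesis concerns the generators of the \emph{first} module of the chain; for a constant-rank segment starting at $\mathcal{M}_{i_j}$, the relevant degree bound is roughly $d_{max}\cdot i_j$ (since $|c_{i_j}| < i_j$), which itself depends on how long the earlier segments were. Your appeal to the degree bound $d_{max}(n+1)$ from \Cref{th:backward generation} conflates the generators produced by that algorithm with the generators $\backF{\A}(c_j)$ actually spanning the $\mathcal{M}_i$.

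The paper's proof resolves exactly this difficulty: it splits the chain at the indices $i_1 < \cdots < i_n$ where the rank strictly increases, applies \Cref{cor:sequence of increasing modules} separately to each constant-rank segment, and telescopes the resulting recurrence $i_j \leq j + d_{max}\sum_{k=1}^{j-1} i_k k$ to get $i_j \leq (d_{max}+1)^{j-1}j^{2(j-1)}$ and hence $m \leq ((d_{max}+1)n^2)^n = L(d_{max},n)$. This compounding across segments is precisely why the timeout $L$ is exponential in $n'$; your claimed polynomial bound $d_{max}(n'+1)n'$, were it correct, would allow a polynomial timeout, which the paper does not establish. The remaining parts of your argument (the invariants, the use of \Cref{lem:bound_column} and \Cref{cor:rank}, and the termination of the recursion via the row-count cap) are sound and essentially as in the paper.
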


\begin{proof}
Consider a totally suffix-closed sequence $\SeqC = [c_1,\ldots,c_m]$ of words. 
A simple induction on the length of the words~$w$ gives that the degree of the polynomials in $\backF{\A}(w)$ is at most $d_{max}(|w|+1)$. Since  $\SeqC$ is totally suffix-closed, we deduce that $c_1=\varepsilon$, as well as~$\max(|c_1|,\ldots,|c_m|) < m$. 
These two  above facts imply that the maximum degree of polynomials in $\backF{\A}(c_1)$, \ldots, $\backF{\A}(c_m)$ is at most~$d_{max}m$.

Recall that, by \Cref{cor:sequence of increasing modules}, every strictly increasing sequence   $\mathcal{M}_0 \subsetneq \ldots \subsetneq \mathcal{M}_k$ of submodules of~$\QQ[x]^{n}$ with the same rank $r$ has length~$k \leq d \cdot r$, where $d$ is the maximal degree of polynomials of the vectors generating $\mathcal{M}_0$.

For all $i \in \{1,\ldots,m\}$, define $\mathcal{M}_i := \mspan[{\QQ[x]}]{\backF{\A}(c_j) : j\leq i}$ .
The ranks of the submodules $\mathcal{M}_i$ are at most $n$. We aim at upper bounding~$m$ by $L(d_{\max},n)$; due to~\Cref{cor:sequence of increasing modules}, the worst upper bound is reached when some module in the  strictly increasing sequence of modules reaches full rank. Below we assume that that our increasing sequence reaches full rank, meaning that $\rank{\mathcal{M}_m} = n$.
Define $i_1,\ldots,i_n$ as the indices corresponding to when the rank of the submodules~$\mathcal{M}_{i}$ has strictly increased.
Formally, we have that $i_1=1$ and $i_1<\ldots<i_{n}$. Furthermore, for all $j\in \{2,\ldots,n\}$,
\[\rank{\mathcal{M}_{i_{j-1}}} \, =   \rank{\mathcal{M}_{i_j-1}} \, <\,  \rank{\mathcal{M}_{i_j}}\,.\]
By the above-mentioned bound on the degree of polynomials in the generators of~$\mathcal{M}_{i}$ together with \Cref{cor:sequence of increasing modules}, we infer the following properties: 
\begin{itemize}
    \item $m-i_n \, \leq \, \deg(\mathcal{M}_{i_n})n \, \leq \, i_n d_{max}n$;
    \item  and for all $j\in \{2,\cdots,n\}$, we have 
    \begin{equation}
    \label{eq:telescoping}
        (i_j-1) - i_{j-1} \leq \deg(\mathcal{M}_{i_{j-1}})  (j-1) \leq i_{j-1}  d_{max} (j-1)\,.
    \end{equation}
\end{itemize}

By telescoping~\eqref{eq:telescoping} from $i_j$ to~$i_1$, we get $i_j \leq j+ d_{max}\sum_{k=1}^{j-1} i_k k$, where the right-hand side
is at most~$j+d_{max} i_{j-1}\sum_{k=1}^{j-1} k \leq (d_{max}+1) i_{j-1}j^2$.
By a simple induction, for all $j \in \{2,\ldots,n\}$, we obtain that  $i_j \leq (d_{max}+1)^{j-1} j^{2(j-1)}$,   and so  $m \leq (d_{max}+1)^nn^{2n}=L(d_{max},n)$ holds .

Now we are ready to analyse the maximum  
number of  columns added to the sequence~$\SeqC$ during the successive recursive calls to the procedure \PLearner $\ell=L(d,n')$. 
Let $\SeqR_1,\, \SeqR_2, \, \ldots$ and $\SeqC_1, \, \SeqC_2,\, \ldots$ be  the successive values passed to \PLearner, where   $\SeqR_1, \SeqC_1$ are initialized to~$[\varepsilon]$.
 
 Using \Cref{lem:full rowrank and closeness}, the construction of the hypothesis automaton in \Cref{col:functionHbuild}, and by \Cref{cor:rank}, we deduce that for every $\SeqC_k$, in the successive values $\SeqC_1, \, \SeqC_2,\, \ldots$ as defined above, for~$d' := d(|\SeqC_k|+1)|\SeqC_k|$, the matrix $\AHankel{\SeqR_k}{\SeqC_k}{d'}$ is full row rank and also $\SeqC_k$ is totally  suffix-closed.
Moreover, writing  $\SeqC_k = [c_1,\ldots,c_k]$, as above, we define the sequence of  modules $\mathcal{M}_1, \, \ldots , \,\mathcal{M}_k$ 
where 
$\mathcal{M}_i = \mspan[{\QQ[x]}]{\backF{\A}(c_j) : j\leq i}$, with $1\leq i\leq k$.
By \Cref{lem:rank,lem:bound_column} we know that $\mathcal{M}_1 \subsetneq \ldots  \subsetneq\mathcal{M}_k$, that is, the sequence consists of strictly increasing modules. 
 Since $d \geq d_{max}$ and $n' \geq n$, the inequality 
 $L(n',d)\geq L(n,d_{\max}) $ implies that  $L(n',d) \geq |\SeqC_k|$ for every~$\SeqC_k$ in the sequence of $\SeqC_1, \, \SeqC_2,\, \ldots$. This concludes that \ELearner{$d,L(d,n'),[\varepsilon],[\varepsilon]$} $\neq \KwNone$.
\end{proof}


\subsubsection{Bounding the number of rows}
As previously mentioned, every row added by the learning algorithm is a prefix of a counterexample given by the teacher. In this section, we exhibit  a bound on the total number of rows added during the learning procedure, which is polynomial in the maximum size of the counterexamples and the target automaton. For this, we define a \emph{bounded forward vector space} that only considers words of bounded size.

Let $\A = (\init,\trans,\fin)$ be a P-finite automaton of dimension $n$ over $\Sigma$. Let $s \in \mathbb{N}$. The \emph{$s$-bounded forward function} associated with $\A$   is the function $\forFB{\A}{s}: \{ r \in \Sigma^* : |r| \leq s\} \rightarrow \QQ^{1\times (s+1)n}$ given by: 
\[
  \forFB{\A}{s}(u) = 
  \begin{bmatrix}
    \vec{0}_{1\times |u|n} & \init\trans(u,1) & \vec{0}_{1\times (s-|u|)n}
  \end{bmatrix}.
\]
The \emph{$s$-bounded forward space}, denoted $\forSB{\A}{s}$, is the vector space $\mspan[\QQ]{\forFB{\A}{s}(u) : |u| \leq s, u \in \Sigma^*}$.

Observe  that $\forSB{\A}{s}$ consists  of row vectors from $\QQ^{1\times (s+1)n}$. Thus, the dimension of the vector space~$\forSB{\A}{s}$  is at most $(s+1)n$. Intuitively, here $s$ represents  the maximal size of the counterexamples given by the teacher. In  the following proposition, we show that if $r_1,\ldots,r_m$ are the prefixes of the counterexamples added as rows  during the learning procedure, then the dimension of~$\mspan[\QQ]{\forFB{\A}{s}(r_1),\ldots,\forFB{\A}{s}(r_m)}$ is $m$.


\begin{proposition}
  \label{lem:bound_row}
  Let $\SeqR$ and $\SeqC$ be  sequences of words. Let $d,s \in \NN$ be such that $\AHankel{\SeqR}{\SeqC}{d}$ has full row rank, and  $|r|<s$ for all words~$r \in \SeqR$.  The  dimension of the vector space
$\mspan[\QQ]{\forFB{\A}{s}(r) : r \in \SeqR}$ is $|\SeqR|$.
\end{proposition}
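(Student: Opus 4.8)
The plan is to show that $\forFB{\A}{s}(r_1),\ldots,\forFB{\A}{s}(r_m)$ are linearly independent over $\QQ$, where $\SeqR = [r_1,\ldots,r_m]$. Since each $\forFB{\A}{s}(r_i)$ lives in $\QQ^{1\times(s+1)n}$, linear independence is exactly the statement that the dimension of their span is $m = |\SeqR|$. So suppose, for a contradiction, that there are scalars $\lambda_1,\ldots,\lambda_m \in \QQ$, not all zero, with $\sum_{i=1}^m \lambda_i \forFB{\A}{s}(r_i) = \vec{0}$.

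The key observation is that the structure of $\forFB{\A}{s}(u)$ — namely $\init\trans(u,1)$ placed in the block of coordinates indexed by $|u|$, with zeros elsewhere — lets us read off, for each length $t \in \{0,\ldots,s\}$, a separate relation. Precisely, grouping the terms by $|r_i|$, the block of coordinates $[\,|u|n+1,\ldots,(|u|+1)n\,]$ of the vanishing combination gives
\[
  \sum_{i \,:\, |r_i| = t} \lambda_i\, \init\trans(r_i,1) = \vec{0} \qquad \text{for each } t \in \{0,\ldots,s\}.
\]
Now I would connect this to the full-row-rank hypothesis on $\AHankel{\SeqR}{\SeqC}{d}$. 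For any column word $c \in \SeqC$ and any $k \in \{0,\ldots,d\}$, the row of $\AHankel{\SeqR}{\SeqC}{d}$ for $r_i$ contributes the entry $(|r_i|+1)^k f(r_i c) = (|r_i|+1)^k\, \init\trans(r_i,1)\backF{\A}(c)(1)$, using $f(r_ic) = \sem{\A}(r_ic) = \init\trans(r_i,1)\backF{\A}(c)(1)$ (here $\A$ is the target, which computes $f$). Taking the $\QQ$-linear combination of the rows of $\AHankel{\SeqR}{\SeqC}{d}$ with coefficients $\lambda_i$, the entry in the column indexed by $(c,k)$ becomes $\sum_i \lambda_i (|r_i|+1)^k \init\trans(r_i,1)\backF{\A}(c)(1)$. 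Splitting this sum over the distinct lengths $t$ that occur among the $r_i$, the inner sum $\sum_{i:|r_i|=t}\lambda_i\init\trans(r_i,1)$ vanishes by the displayed relation, so each column entry of the combination is $0$. Hence the $\lambda_i$ give a nontrivial linear dependence among the rows of $\AHankel{\SeqR}{\SeqC}{d}$, contradicting the assumption that this matrix has full row rank. Therefore no such nontrivial $\lambda_i$ exist, and $\dim \mspan[\QQ]{\forFB{\A}{s}(r) : r \in \SeqR} = |\SeqR|$.

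The main obstacle, and the step to be careful about, is the bookkeeping in passing from the single vector relation $\sum_i\lambda_i\forFB{\A}{s}(r_i)=\vec 0$ to the per-length relations and then feeding these into the rank argument: one must make sure the block decomposition by word length is aligned correctly (this is where $|r|<s$ for all $r\in\SeqR$ matters, so that all blocks $|u|\in\{0,\ldots,s\}$ are available and $\forFB{\A}{s}$ is well-defined on $\SeqR$), and that the appearance of the factor $(|r_i|+1)^k$ in $\AHankel{\SeqR}{\SeqC}{d}$ does not spoil the cancellation — which it does not, precisely because $(|r_i|+1)^k$ depends only on the length $t=|r_i|$ and can be pulled out of each length-indexed inner sum. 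Everything else is routine: the identity $f(r_ic)=\init\trans(r_i,1)\backF{\A}(c)(1)$ is just the definition of the semantics via the backward function, and the contradiction with full row rank is immediate once the combination of rows is shown to vanish.
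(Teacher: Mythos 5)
Your approach is essentially the paper's: from a putative linear dependence among the $\forFB{\A}{s}(r_i)$, use the block structure to localize the relation to words of a single length, then pair with the backward function to transfer the dependence to the rows of $\AHankel{\SeqR}{\SeqC}{d}$, contradicting full row rank. (The paper phrases this by assuming WLOG that the dependence involves only words of length $|r_m|$ and pairing against a stacked column vector $B(c)$ whose blocks are $\backF{\A}(c)(1),\ldots,\backF{\A}(c)(s+1)$; your per-length grouping is an equivalent formulation.)

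One identity you state is wrong, although the argument survives its correction. You write $f(r_ic)=\init\trans(r_i,1)\backF{\A}(c)(1)$, but since $\trans(r_ic,1)=\trans(r_i,1)\trans(c,|r_i|+1)$ and $\backF{\A}(c)(x)=\trans(c,x)\fin(x+|c|)$, the correct identity is $f(r_ic)=\init\trans(r_i,1)\backF{\A}(c)(|r_i|+1)$: the backward vector must be evaluated at $|r_i|+1$, not at $1$. This evaluation point is exactly what the block placement in $\forFB{\A}{s}$ encodes and what the paper's $B(c)$ construction takes care of. With the corrected identity, the $(c,k)$-entry of your linear combination of rows becomes $\sum_t (t+1)^k\bigl(\sum_{i:|r_i|=t}\lambda_i\init\trans(r_i,1)\bigr)\backF{\A}(c)(t+1)$, and since $\backF{\A}(c)(t+1)$, like $(t+1)^k$, depends only on the length $t$, it factors out of each inner sum, which vanishes by your per-length relations. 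So the contradiction with full row rank still goes through; you only need to fix the evaluation point.
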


\begin{proof}
Let $\SeqR = [r_1,\ldots,r_m]$. It suffices to argue that  the set $\{ \forFB{\A}{s}(r_i):1\leq i \leq m\}$ is linearly independent. 
Towards a contradiction, we assume  without loss of generality that 
$\forFB{\A}{s}(r_m)$ is dependant on the other $\forFB{\A}{s}(r_i)$, meaning that 
 that there exist~$q_1,\ldots,q_{m-1} \in \QQ$ such that $\forFB{\A}{s}(r_m) = \sum_{i=1}^{m-1} q_i \forFB{\A}{s}(r_i)$.
By the definition of $\forFB{\A}{s}$, as the vectors~$\forFB{\A}{s}(r_i)$ have only $n$ non-zero entries, we can further without loss of generality assume  that  $q_i \neq 0$, with $1\leq i\leq m-1$, implies $|r_i| = |r_m|$. Hence, for all $k \in \NN$,
\begin{align}
    (|r_m|+1)^k \forFB{\A}{s}(r_m) = \sum_{i=1}^{m-1} q_i (|r_m|+1)^k \forFB{\A}{s}(r_i) = \sum_{i=1}^{m-1} q_i (|r_i|+1)^k \forFB{\A}{s}(r_i). \label{eq:rmfb}
\end{align}

Finally, for all $c \in \Sigma^*$, denote by $B(c) \in \QQ^{n(s+1)\times 1}$ the column vector defined as follows:
\[
  B(c) = 
    \begin{bmatrix}
    \backF{\A}(c)(1)\\
    \vdots\\
    \backF{\A}(c)(s+1)
    \end{bmatrix}.
\]
By the definitions of forward and backward functions,  for all words $r \in \SeqR$ and $c \in \Sigma^*$, we have
\[\forFB{\A}{s}(r)B(c) = \init\trans(r,1)\backF{\A}(c)(|r|+1) = f(rc),\]
where the first equality holds due to the match of placement of non-zero entries in $\forFB{\A}{s}(r)$ with the placement of  $\backF{\A}(c)(|r|+1)$ in $B(c)$. 
 As an immediate result of the above equation and~\eqref{eq:rmfb}, 
we obtain that, for all words~$c \in \SeqC$ and for all $k \leq d$, 
\begin{align*}
  (|r_m|+1)^k f(r_mc) = & (|r_m|+1)^k \forFB{\A}{s}(r_m)B(c) \\
  = & \sum_{i=1}^{m-1} q_i (|r_i|+1)^k \forFB{\A}{s}(r_i)B(c) \\
  = & \sum_{i=1}^{m-1} q_i (|r_i|+1)^k f(r_ic).
\end{align*}
Therefore, if we denote by $A_1,\ldots,A_m$ the rows of $\AHankel{\SeqR}{\SeqC}{d}$ then we have shown that $A_m = \sum_{i=1}^{m-1} q_i A_i$ which is in contradiction with $\rank{\AHankel{\SeqR}{\SeqC}{d}} = |\SeqR|$.
\end{proof}


\begin{proposition}
  \label{lem:recursive call}
 Let $d,\ell \in \NN$. Let  $s$ be the maximal length of  counterexamples given by the teacher during the execution of \PLearner{$d,\ell,[\varepsilon],[\varepsilon]$}.
Then the number of recursive calls to \PLearner is at most~$(s+1)n$, where  $n$ is the number of states in the target automata.
\end{proposition}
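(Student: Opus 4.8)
The plan is to show that each recursive call to \PLearner\ that adds a new row $u$ to $\SeqR$ strictly increases the dimension of the bounded forward space spanned by the rows seen so far, and then to use the trivial upper bound $(s+1)n$ on that dimension. First I would observe that every row added during the execution of \PLearner{$d,\ell,[\varepsilon],[\varepsilon]$} is a prefix of a counterexample returned by the teacher, hence has length at most $s$; in fact, since the added prefix $u$ is a \emph{proper} prefix (the prefix $u\sigma$ on which $\AH$ is incorrect has length at most $s$, so $|u| < s$), we have $|u| < s$ for every row $u \in \SeqR$. This is exactly the hypothesis $|r| < s$ needed to invoke \Cref{lem:bound_row}.

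Next I would establish the key invariant that before each recursive call, $\AHankel{\SeqR}{\SeqC}{d'}$ has full row rank, where $d' = d(|\SeqC|+1)|\SeqC|$ is the degree bound used by \PLearner\ on that table. Initially $\SeqR = \SeqC = [\varepsilon]$ and the table $\AHankel{[\varepsilon]}{[\varepsilon]}{d'}$ is the $1\times 1$ matrix $[f(\varepsilon)]$, which has full row rank provided $f(\varepsilon)\neq 0$ (and if $f(\varepsilon)=0$ one handles the degenerate case separately, or notes that the hypothesis automaton is still well-defined). For the inductive step, when a counterexample is found and \LPrefix\ returns $(u,\sigma,c)$: in the branch where $\rank{\AHankel{\SeqR\cdot[u]}{\SeqC}{d}} = |\SeqR|+1$, full row rank of the new table is immediate; in the other branch, we add both the row $u$ and the column $\sigma c$, and full row rank of $\AHankel{\SeqR\cdot[u]}{\SeqC\cdot[\sigma c]}{d}$ follows from \Cref{cor:rank} (noting $d \geq d'$, so $d$ exceeds the degree bound associated to the smaller table). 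Hence at every recursive call the current table $\AHankel{\SeqR}{\SeqC}{d'}$ has full row rank.

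Finally I would combine these facts: by \Cref{lem:bound_row}, full row rank of $\AHankel{\SeqR}{\SeqC}{d'}$ together with $|r| < s$ for all $r \in \SeqR$ gives $\dim \mspan[\QQ]{\forFB{\A}{s}(r) : r \in \SeqR} = |\SeqR|$. Since $\forSB{\A}{s} \subseteq \QQ^{1\times(s+1)n}$, this dimension is at most $(s+1)n$, so $|\SeqR| \leq (s+1)n$ at every stage. Each recursive call that proceeds (rather than returning \KwSome{$\AH$} or \KwNone) appends exactly one word to $\SeqR$, so the total number of recursive calls is bounded by the maximum value of $|\SeqR|$, namely $(s+1)n$. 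The main obstacle is the bookkeeping in the inductive step: one must carefully track the \emph{two different} degree bounds in play—the parameter $d$ passed to \PLearner\ and the derived bound $d' = d(|\SeqC|+1)|\SeqC|$ recomputed at each call as $|\SeqC|$ grows—and verify that \Cref{cor:rank}, which only guarantees full row rank for $d' > d$, applies because the degree bound of the enlarged table strictly exceeds that of the previous one; this is where the monotonicity $d \ge d_{max}$ and the growth of $|\SeqC|$ must be used consistently.
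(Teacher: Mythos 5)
Your proposal is correct and follows essentially the same route as the paper's proof: maintain the invariant that $\AHankel{\SeqR}{\SeqC}{d'}$ has full row rank via \Cref{cor:rank} and \Cref{lem:full rowrank and closeness}, apply \Cref{lem:bound_row} to conclude $|\SeqR|\leq (s+1)n$, and note that each recursive call adds exactly one row. Your extra care about the base case $f(\varepsilon)=0$ and the shifting degree bound $d'=d(|\SeqC|+1)|\SeqC|$ is reasonable bookkeeping that the paper elides, but it does not change the argument.
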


\begin{proof}
Using \Cref{cor:rank} and \Cref{lem:full rowrank and closeness}, for every call to \PLearner with arguments $d,\ell,\SeqR,\SeqC$, we have $\rank{\AHankel{\SeqR}{\SeqC}{d'}} = |\SeqR|$ with $d' = d(|\SeqC|+1)|\SeqC|$. 

By \Cref{lem:bound_row}, the dimension of the vector space $\mspan[\QQ]{\forFB{\A}{s}(u) : u \in \SeqR}$ is $|\SeqR|$. But then, since $\mspan[\QQ]{\forFB{\A}{s}(u) : u \in \SeqR} \subseteq \QQ^{1\times (s+1)n}$, we have $|\SeqR| \leq (s+1)n$. 
We note that every recursive call to \PLearner increases the size of $\SeqR$ by one starting from $[\varepsilon]$. Therefore,  the number of recursive calls to \PLearner is at most $(s+1)n$.\end{proof}


\subsubsection{The exact learner}

The exact learner function is displayed in \Cref{func:exact learner}. The core of the learning process comes from the procedure~\PLearner. However, it still remains to correctly guess the values of $d_{max}$, the maximal degree of polynomials in the target automaton $\A$, and of $n$, its number of states. Guessing $n$ is important in order to compute the limit value $\ell$ for the number of columns added during the execution of \PLearner. As we need to guess two positive integers, we use the standard diagonal progression of the Cantor pairing function.

\begin{algorithm}[t]
  \Fn{\ELearner{}}{ 
    \SetAlgoVlined
    \SetAlgoShortEnd
    $sum := 1$\;
    \While{true}{
      \For{$n = 1$ to $sum$}{
        $d = sum - n$\;
        $\ell = 2(d+1)^nn^{2n}$\;
        \SetAlgoNoLine
        \SetAlgoNoEnd
        \Switch{\PLearner{$d,\ell,[\varepsilon],[\varepsilon]$}}{
          \lCase{\KwNone}{()}
          \lCase{\KwSome{$\AH$}}{\KwRet{$\AH$}}
        }
      }
      $sum := sum+1$\label{alg-label-sum}
    }
  }
\caption{The exact learner.}
\label{func:exact learner}
\end{algorithm}

The following theorem shows that  P-finite automata can be exactly learned in   time polynomial in the size of the target automaton and the maximal length of counterexamples given by the teacher.


\begin{theorem}
Let $\A$ be the target P-finite automaton. The procedure \ELearner terminates and returns a P-finite automaton $\AH$ such that $\sem{\A} = \sem{\AH}$. Moreover,  \ELearner runs in   time polynomial in the length of the encoding of~$\A$ and  the maximal length of counterexamples given by the teacher during the execution of \ELearner.
\end{theorem}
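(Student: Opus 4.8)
The plan is to assemble the pieces established in the previous subsections into a single correctness-and-complexity argument, with the two guessed parameters $d_{max}$ and $n$ handled by the Cantor-pairing enumeration in \ELearner. First I would argue \emph{partial correctness}: whenever \PLearner{$d,\ell,\SeqR,\SeqC$} returns \KwSome{$\AH$}, the teacher's equivalence oracle \Equiv{$\AH$} has answered \KwNone, so $\sem{\AH} = \sem{\A} = f$ by definition of the MAT equivalence oracle; hence any automaton returned by \ELearner is equivalent to the target. This handles the ``returns a P-finite automaton $\AH$ such that $\sem{\A} = \sem{\AH}$'' part of the statement and requires no new work.

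Next I would establish \emph{termination}. Fix the true dimension $n_0$ and true maximal degree $d_{max}^0$ of $\A$. The outer loop of \ELearner enumerates all pairs $(n,d)$ with $n \geq 1$, $d = sum - n \geq 0$ via the diagonal progression, so eventually it reaches a pair $(n,d)$ with $n \geq n_0$ and $d \geq d_{max}^0$, at which point the timeout is set to $\ell = 2(d+1)^n n^{2n} = L(d,n)$ (in the notation of \Cref{lem:learner succeed}, since $L(y_1,y_2) = ((y_1+1)y_2^2)^{y_2}$, and $2(d+1)^n n^{2n} \geq (d+1)^n n^{2n} = L(d,n)$). By \Cref{lem:learner succeed}, the call \PLearner{$d,\ell,[\varepsilon],[\varepsilon]$} does not return \KwNone; by partial correctness it therefore returns \KwSome{$\AH$} with $\sem{\AH} = \sem{\A}$, and \ELearner halts. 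For all earlier pairs $(n,d)$ one must check that the corresponding call to \PLearner itself terminates, whether or not its degree guess is correct: this is forced by the explicit timeout $\ell$, because every recursive call strictly increases $|\SeqR|$ and adds at most one column, so once $|\SeqC| > \ell$ the procedure returns \KwNone immediately; moreover the number of recursive calls for a fixed $(n,d)$ is finite since each one consumes a fresh counterexample and either stops at the timeout or succeeds. (Strictly, the number of rows added is bounded via \Cref{lem:recursive call}, but termination here only needs the column timeout plus the fact that each recursive call is triggered by one equivalence query.)

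Finally I would prove the \emph{polynomial time bound}. Let $s$ be the encoding size of $\A$ and $\lambda$ the maximal length of a counterexample returned by the teacher during the whole run of \ELearner; note $n_0, d_{max}^0 \leq s$. The diagonal enumeration reaches the first good pair after $O(s)$ values of $n$ within $O(s)$ values of $sum$, hence after $O(s^2)$ iterations of the inner loop; each such pair $(n,d)$ has $n, d = O(s)$, so the timeout $\ell = 2(d+1)^n n^{2n}$ is \emph{exponential} in $s$ — this is the one delicate point. The resolution is that for a fixed $(n,d)$ the \emph{actual} number of recursive calls to \PLearner is bounded by $(s'+1)n$ where $s'$ is the longest counterexample seen so far (\Cref{lem:recursive call}), hence by $(\lambda+1)s = \mathrm{poly}(s,\lambda)$, so the timeout $\ell$ is never approached in any run for which the teacher is honest, except possibly on runs that get stuck — but a stuck run for a bad guess $(n,d)$ also terminates after at most $(\lambda+1)\cdot n$ recursive calls by the same row bound (\Cref{lem:bound_row} applies to any $d$-closed table, not just correct-degree ones), after which \PLearner must return \KwNone because no further counterexample can grow $\SeqR$ beyond dimension $(\lambda+1)n$ while $|\SeqC| \leq \ell$ forces eventual exhaustion. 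Thus across all $O(s^2)$ pairs the total number of recursive calls is $\mathrm{poly}(s,\lambda)$. Each recursive call does: one invocation of \HBuild, which by the remark after \Cref{col:functionHbuild} runs in time polynomial in $d + |\SeqR| + |\SeqC| + |\Sigma|$, all of which are $\mathrm{poly}(s,\lambda)$ (here $d = d_{max}(|\SeqC|+1)|\SeqC|$ is polynomial because $|\SeqC| \leq (\lambda+1)n$); one equivalence query; one call to \LPrefix, polynomial in $|w| + |\SeqC| = \mathrm{poly}(s,\lambda)$; and one rank computation on a matrix of polynomial dimension over $\QQ$, again polynomial. Summing over the polynomially many recursive calls and the $O(s^2)$ outer pairs gives the claimed overall polynomial running time in $s$ and $\lambda$.

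\textbf{Main obstacle.} The crux is reconciling the \emph{exponential} timeout $\ell = 2(d+1)^n n^{2n}$ with the claimed polynomial running time: one must show that although $\ell$ is a legitimate (exponential) a priori safeguard, no run of \PLearner — successful or stuck — ever performs more than polynomially many recursive calls, so the timeout is only ever used as a correctness certificate and never as a computational budget that is actually spent. This is exactly what the row bound (\Cref{lem:bound_row}, \Cref{lem:recursive call}) delivers: the number of rows, and hence of recursive calls, is at most $(\lambda+1)n$ independently of $\ell$, so the length of each run of \PLearner is polynomial even when the degree guess is wrong.
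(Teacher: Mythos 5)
Your proposal is correct and follows essentially the same route as the paper's proof: partial correctness from the semantics of the equivalence oracle, eventual success of \PLearner for a sufficiently large pair $(n,d)$ via \Cref{lem:learner succeed}, and — crucially — the observation that the number of recursive calls to \PLearner is bounded by $(s+1)n_{\A}$ via \Cref{lem:recursive call} \emph{irrespective} of the exponential timeout $\ell$, so that each run of \PLearner and the $O\bigl((d_{max}+n_{\A})^2\bigr)$ outer iterations all cost only polynomial time. The ``main obstacle'' you identify (the timeout being a correctness certificate rather than a spent budget) is exactly the implicit point of the paper's argument.
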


\begin{proof}
Let $n_{\A}$ be the number of states in the target automaton $\A$ and $d_{max}$ be the maximal degree of polynomials  in $\A$. 
Let $s$ be the maximal size of all counterexamples given by \Equiv during the execution of \ELearner.

 By \Cref{lem:recursive call}, when executing \PLearner{$d,\ell,[\varepsilon],[\varepsilon]$}, the number of recursive call to \PLearner is at most $(s+1)n_{\A}$, irrespective of the choice of~$d,\ell$.
 Furthermore, by the construction of \PLearner, we also know that for every call to \PLearner with arguments $d,\ell,\SeqR,\SeqC$, the size of $\SeqR$ is at most~$(s+1)n_{\A}$ and $|\SeqC| \leq |\SeqR|$.
 
 Define $d := d_{max}(|\SeqC|+1)|\SeqC|$. Observe that $d \leq d_{max}((s+1)n_{\A}+1)^2$. 
Recall that the procedure \HBuild{$d,\SeqR,\SeqC$} runs in time polynomial in $d + |\SeqR| + |\SeqC|$,  and also the procedure \LPrefix{$\AH,w$} runs in time  polynomial in $|w| + |\SeqC|$. We note that 
 the rank of $\AHankel{\SeqR \cdot [u]}{\SeqC}{d}$ can be computed in time polynomial in $(d+1)|\SeqC|(|\SeqR|+1)$ as well.

 By the above, the computation of one recursive call of \PLearner is polynomial in $s+d_{max}+n_{\A}$, which in turns implies  that each execution of \PLearner{$d,\ell,[\varepsilon],[\varepsilon]$} runs in time polynomial in $s$ and in the length of the encoding of~$\A$, irrespective of the choice of~$d,\ell$.

 Intuitively, 
 the variables~$n,d$ in  \ELearner 
 are the "guessed" values for the number of states and the degrees of polynomials in the target automaton, and 
 the variable~$sum$   is to implement the standard diagonal progression for~$n$ and $d$. 
By \Cref{lem:learner succeed}, we know that when the variable~$sum$ in Line~\ref{alg-label-sum} in \ELearner reaches the value $d_{max} + n_{\A}$, 
in the inner {\bf for}-loop, 
the output of  the call to \PLearner  with $d$  set to $ d_{max}$ and $\ell$  set to $(d+1)^n_{\A} n_{\A}^{2n_{\A}}$ is 
 necessarily different from \KwNone, terminating the  computation (the procedure might terminate before this point). Therefore, we will call \PLearner{$d,\ell,[\varepsilon],[\varepsilon]$} for different values of $d,\ell$ at most $(d_{max} + n_{\A})^2$ times, which  concludes the proof.
\end{proof}

 \newpage
 \begin{acks}
Mahsa Shirmohammadi is
supported by International Emerging Actions grant (IEA’22), by ANR grant VeSyAM (ANR-22-
CE48-0005) and by the grant CyphAI (ANR-CREST-JST).  
\end{acks}

\bibliographystyle{ACM-Reference-Format}
\bibliography{biblio}

\appendix


\section{Proof of Proposition \ref{lemma:general_length}}
\label{sec:proof_seq_length}

Recall that in \Cref{sec:prelim} we considered a PID $R$, and $R$ being a PID implies that it is also a unique factorization domain. Thus, for all $r \in R$, we can compute the number of (not necessarily distinct) prime factors of $r$, that we denote $\pi(r)$, with the convention that $\pi(1_R) = 0$.

Given two $R$-modules $M$ and $N$, a \emph{homomorphism} between $M$ and $N$ is an $R$-linear map $\phi:M\rightarrow N$. That is, for all $r_1,r_2\in R$ and $m_1, m_2 \in M$, $\phi(r_1m_1+r_2m_2)=r_1\phi(m_1)+r_2\phi(m_2)$. When $\phi$ is additionally bijective, we say that $\phi$ is a \emph{isomorphism} and that $M$ and $N$ are \emph{isomorphic}, denoted $M \simeq N$. 

We also define the \emph{direct-sum} $R$-module as $M \oplus N = \{ (m,n) : m \in M, n\in N\}$. We say that $M$ is \emph{torsion} if for all $m \in M$, there exists $r \in R \setminus \{0\}$ such that $rm = 0$.
When $N$ is a submodule of $M$, the \emph{quotient} $R$-module $M/N$ is the set of elements of the form $m+N = \{ m+n : n \in N\}$ for some $m \in M$, endowed with addition and multiplication operations as follows: for all $r \in R$, and $m,m' \in M$, we have $(m+N) + (m'+N) = (m+m') + N$ and $r(m+N) = (rm)+N$.  

For example, $3\ZZ$ is a $\ZZ$-submodule of $\ZZ$ and the quotient $\ZZ/3\ZZ$ is a torsion $\ZZ$-module that contains the elements $0 + 3\ZZ$, $1+3\ZZ$ and $2+3\ZZ$, isomorphic to $\ZZ_3$.

As mentioned in \Cref{sec:prelim}, for any submodules $M$ of $R^n$, there exist an $R$-basis $f_1,\ldots,f_n$ of $R^n$ and elements $d_1,\ldots,d_r \in R$ such that $d_1f_1,\ldots,d_rf_r$ is an $R$-basis of $M$. Moreover, when $M$ is generated by some vectors $\vv_1,\ldots,\vv_m$, each $d_i$ are defined as $\frac{D_i(A)}{D_{i-1}(A)}$ where $A \in R^{n\times m}$ is the matrix whose columns are the vectors $\vv_1,\ldots,\vv_m$ with $D_0(A) = 1$ by convention. This correlates directly with the following theorem.


\begin{theorem}[{\cite[Theorem 6.12]{Hungerford74}}]
Let $M$ be a finitely generated $R$-module. There exists $d_1,\ldots,d_r \in R$ such that $M$ is isomorphic to $R^k \oplus \bigoplus_{j=1}^r R/\mspan[R]{d_j}$ for some $k$ and \emph{invariant factors} $d_1 \mid d_2 \mid \ldots \mid d_r$ that are unique (up to multiplication by units). Moreover, $M$ is torsion iff $k=0$, and in such a case we define the \emph{$R$-dimension} of $M$, denoted $\dim_R(M)$, as $\pi(d_1d_2\ldots d_r)$. 
\end{theorem}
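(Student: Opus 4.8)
The plan is to deduce the theorem from the Smith Normal Form already recorded in Section~\ref{sec:prelim}. First I would present $M$ as a quotient of a free module: choosing finitely many generators of $M$ gives a surjection $R^p\twoheadrightarrow M$ with kernel $N$, so $M\simeq R^p/N$. Since $R$ is a PID, $N$ is a free submodule of $R^p$ of some rank $\ell\le p$; fixing an $R$-basis of $N$, collecting it as the columns of a matrix $A\in R^{p\times\ell}$ (which therefore has rank $\ell$), and writing $A=S\tilde A T$ in Smith Normal Form exactly as in Section~\ref{sec:prelim}, I obtain an $R$-basis $\boldsymbol{f}_1,\ldots,\boldsymbol{f}_p$ of $R^p$ (the columns of $S$) together with nonzero $d_1\mid d_2\mid\ldots\mid d_\ell$ such that $d_1\boldsymbol{f}_1,\ldots,d_\ell\boldsymbol{f}_\ell$ is an $R$-basis of $N$. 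Passing to the quotient yields $M\simeq R^{p-\ell}\oplus\bigoplus_{i=1}^{\ell}R/\mspan[R]{d_i}$; deleting the indices with $d_i$ a unit (the corresponding summands vanish, and the divisibility chain survives among the rest) and setting $k:=p-\ell$ and $d_1\mid\cdots\mid d_r$ to be the surviving non-unit factors gives the asserted decomposition.

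For uniqueness I would argue in two stages. The integer $k$ is an invariant of $M$: it is the rank of the free module $M/T(M)$, where $T(M)=\{m\in M:rm=0\text{ for some }r\in R\setminus\{0\}\}$ is the torsion submodule, equivalently $k=\dim_{\operatorname{Frac}(R)}\big(M\otimes_R\operatorname{Frac}(R)\big)$. For the invariant factors I would pass to elementary divisors: by the Chinese Remainder Theorem each $R/\mspan[R]{d_j}$ decomposes into $\wp$-primary pieces $R/\mspan[R]{\wp^{e_{j,\wp}}}$, and for a fixed prime $\wp$ the multiset $\{\,e_{j,\wp}:e_{j,\wp}>0\,\}$ is recovered from $M$ via the isomorphism invariant
\[
\dim_{R/\mspan[R]{\wp}}\!\big((\wp^{i}M)[\wp]\big)-\dim_{R/\mspan[R]{\wp}}\!\big((\wp^{i+1}M)[\wp]\big)=\#\{\,j:e_{j,\wp}=i+1\,\},
\]
where $P[\wp]$ denotes the $\wp$-torsion of a module $P$ (the $\wp'$-primary parts with $\wp'\ne\wp$ contribute nothing to $P[\wp]$). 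Since, conversely, the invariant factors are reconstructed uniquely from the family of elementary divisors by the standard largest-power-first grouping subject to $d_1\mid\cdots\mid d_r$, the $d_j$ are determined up to multiplication by units.

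It remains to treat the torsion dichotomy and the well-definedness of $\dim_R$. If $k\ge1$, a basis vector of the summand $R^k$ is annihilated only by $0$ since $R$ is a domain, so $M$ is not torsion; conversely, if $k=0$ then every $d_j\ne0$ and $d_j\mid d:=d_1\cdots d_r$, so the nonzero element $d$ annihilates all of $M$, hence $M$ is torsion. In the torsion case $\dim_R(M):=\pi(d_1\cdots d_r)$ is well defined because a PID is a UFD, so $\pi$ is defined on $R\setminus\{0\}$, is additive under products, and is invariant under associates, while by the uniqueness statement $d_1\cdots d_r$ is determined up to a unit.

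The main obstacle is the uniqueness of the elementary divisors (equivalently of the invariant factors): everything else is bookkeeping around the Smith Normal Form of Section~\ref{sec:prelim}, whereas this step requires the $\wp$-primary filtration argument above (or a Fitting-ideal computation), and it is precisely the content imported from \cite[Theorem 6.12]{Hungerford74}.
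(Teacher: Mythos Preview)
The paper does not supply its own proof of this statement: it is quoted verbatim as \cite[Theorem~6.12]{Hungerford74} and used as a black box in the proof of Proposition~\ref{lemma:general_length}. There is therefore no in-paper argument to compare against.

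That said, your proof is correct and is the standard one. Presenting $M$ as $R^p/N$, invoking freeness of submodules of $R^p$ over a PID, and applying the Smith Normal Form of Section~\ref{sec:prelim} to a matrix whose columns generate $N$ is exactly how the existence half is usually derived; your uniqueness argument via $\wp$-primary decomposition and the counts $\dim_{R/\langle\wp\rangle}\big((\wp^{i}M)[\wp]\big)$ is likewise the textbook route (and essentially what Hungerford does). The final paragraph on torsion and on the well-definedness of $\dim_R(M)=\pi(d_1\cdots d_r)$ is the only part the paper actually uses downstream, and your justification---$\pi$ is unit-invariant and additive on products, while $d_1\cdots d_r$ is determined up to a unit---is sound. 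One small remark: you use $\operatorname{Frac}(R)$ and the notation $P[\wp]$, which are not defined in the paper; if this were to be spliced in, you would want to introduce them or rephrase (e.g.\ replace the tensor description of $k$ by the statement that $k$ is the rank of $M/T(M)$, which you already give).
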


The proof of \Cref{lemma:general_length} relies on the following well known result:


\begin{theorem}[{\cite[Proposition 1]{cassels2010algebraic}}]
\label{thm:index_decomposition}
Let $M,M',M''$ be $R$-submodules of $R^n$ such that $M''\subset M'\subset M$, all having the same rank. Then \[\dim_R(M/M'') = \dim_R(M/M') + \dim_R(M'/M'').\]
\end{theorem}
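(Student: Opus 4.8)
The plan is to realise the three quotients as the terms of a short exact sequence and then reduce the additivity of $\dim_R$ to the well-known additivity of module length along short exact sequences.

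First I would record the short exact sequence. Since $M'' \subseteq M' \subseteq M$, the third isomorphism theorem yields
\[
0 \longrightarrow M'/M'' \longrightarrow M/M'' \longrightarrow M/M' \longrightarrow 0,
\]
where the first map is induced by the inclusion $M' \hookrightarrow M$ and the second is the canonical projection. All three modules are finitely generated, being quotients of submodules of $R^n$ over the Noetherian ring $R$. Moreover, rank is additive along short exact sequences of modules over the domain $R$ (tensor with the fraction field, which is exact), so from $\rank{M} = \rank{M'} = \rank{M''} = r$ we get that $M'/M''$, $M/M''$ and $M/M'$ all have rank $0$; a finitely generated rank-$0$ module over a PID is torsion, so $\dim_R$ is defined for each of the three. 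This is the only place where the equal-rank hypothesis is used.

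Next I would prove that for any finitely generated torsion $R$-module $T$ one has $\dim_R(T) = \ell_R(T)$, the length of $T$ as an $R$-module. By the structure theorem, $T \simeq \bigoplus_{j=1}^{r} R/\mspan[R]{d_j}$ with $d_1 \mid \cdots \mid d_r$, and by the Chinese Remainder Theorem each $R/\mspan[R]{d_j}$ splits as a direct sum of cyclic modules $R/\mspan[R]{p^{e}}$ indexed by the prime-power factors of $d_j$ (here $R$ is a PID, hence a UFD). Each such summand has a composition series
\[
R/\mspan[R]{p^{e}} \supsetneq \mspan[R]{p}/\mspan[R]{p^{e}} \supsetneq \cdots \supsetneq \mspan[R]{p^{e}}/\mspan[R]{p^{e}} = 0
\]
whose successive quotients are all isomorphic to the simple module $R/\mspan[R]{p}$, so it has length $e$. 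Summing over the prime-power factors of $d_j$ gives $\ell_R(R/\mspan[R]{d_j}) = \pi(d_j)$, whence $\ell_R(T) = \sum_j \pi(d_j) = \pi(d_1 \cdots d_r) = \dim_R(T)$, using the additivity of $\pi$ on products.

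Finally I would invoke the standard additivity of length: for a short exact sequence $0 \to A \to B \to C \to 0$ of modules with $B$ of finite length, $\ell_R(B) = \ell_R(A) + \ell_R(C)$ (concatenate a composition series of $A$ with the pullback of one of $C$). Applying this to the sequence above and translating lengths back into $R$-dimensions gives
\[
\dim_R(M/M'') = \dim_R(M'/M'') + \dim_R(M/M'),
\]
which is the claim. The only genuinely delicate point is the bookkeeping in the middle step: checking that the quotients are honestly torsion and setting up the identification $\dim_R = \ell_R$ cleanly via the structure theorem and CRT; once $\dim_R$ is recognised as a length, additivity is immediate. An alternative route for the last two steps, avoiding composition series, is to localise at each prime $p$ of $R$: localisation is exact, $\dim_R(T) = \sum_p \ell_{R_p}(T_p)$, and length over the discrete valuation ring $R_p$ is additive in exact sequences, so summing over $p$ yields the same identity.
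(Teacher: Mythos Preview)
Your proof is correct, but note that the paper does not actually prove this theorem: it is stated as a citation to \cite[Proposition~1]{cassels2010algebraic} and used as a black box in the proof of Proposition~\ref{lemma:general_length}. The only justification the paper adds is the sentence immediately following the statement, namely that the three quotients are torsion (hence $\dim_R$ is well defined), which is precisely what you establish in your first step via additivity of rank.

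Your argument---recognising $\dim_R$ as module length over a PID via the structure theorem and CRT, and then invoking additivity of length along the short exact sequence $0 \to M'/M'' \to M/M'' \to M/M' \to 0$---is the standard route and is essentially how this is proved in the cited reference (there phrased in the language of group indices for $R=\Z$). There is nothing to compare against in the paper itself.
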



The $R$-dimensions of $M/M''$, $M/M'$, and $M'/M''$ are well defined as one can observe that two finitely generated modules have the same rank if and only if their quotient module is torsion.
We are now ready to prove a general bound on the length of a strictly increasing sequence of submodules of the same rank.

\lemmagenerallength*

\begin{proof}
  Let $f_1,\ldots,f_n$ be a $R$-basis of $R^n$ and let $d_1,\ldots,d_r \in R$ such that $d_1f_1,\ldots,d_rf_r$ is a $R$-basis of $M$ and $d_1\mid d_2 \mid \ldots \mid d_r$. Recall that for all $i \in \{1,\ldots,r\}$, $d_i = \frac{D_i(A)}{D_{i-1}(A)}$ and $D_0(A) = 1_R$ by convention.
    
  Let $N= \mspan[R]{f_1,\ldots, f_r }$. We first show that $M_k\subseteq N$, so that the whole increasing sequence is in $N$. Assume, for a contradiction, that there is some $m\in M_k$ but $m\notin N$. Then, when expressing $m$ in the basis $f_1,\ldots,f_n$, at least one coefficient of some $f_i$, $i>r$, must be non-zero. This further implies that the vectors $m$, $d_1f_1$, $d_2f_2$,\ldots, $d_rf_r$ are linearly independent (as some linear dependence would result in a linear dependence between $f_i,f_1,f_2,\ldots,f_r$). But, as $\{d_1f_1,\ldots,d_rf_r\}\subseteq M\subseteq M_k$, the module $M_k$ contains $r+1$ linearly independent elements. Their span is contained in $M_k$ and has rank $r+1$, which contradicts the fact that $M_k$ has rank $r$. 

  We are now ready to inductively apply Theorem \ref{thm:index_decomposition} to the chain $M\subsetneq M_1\subsetneq M_2 \subsetneq \ldots \subsetneq M_k\subseteq N$. Note that all these submodules have rank $r$, as $M$ does (by definition), and so does $N$ (by construction). We have:
  \[\dim_R(N/M) = \dim_R(N/M_k)+\dim_R(M_k/M_{k-1})+\ldots+\dim_R(M_2/M_1)+\dim_R(M_1/M).\]
  The first term on the right-hand side is clearly non-negative, so we can bound $k$ as
  \[k\leq \frac{\dim_R(N/M)}{\underset{1\leq i \leq k}{\textrm{min}}\:\dim_R(M_{i}/M_{i-1})}.\]

  For all $i\in\{1,\ldots,k\}$, we can show that $\dim_R(M_i/M_{i-1})\geq 1$. Otherwise, by contradiction, there would exists $i \in \{1,\ldots,k\}$ such that $\dim_R(M_i/M_{i-1})=0$, which entails $M_i/M_{i-1}$ is isomorphic to $\bigoplus_{k=1}^{r'} R/\mspan[R]{1_R}$ for some $r'$. As $R/\mspan[R]{1_R}$ is isomorphic to $\{0\}$ then so is $M_i/M_{i-1}$. This implies that $M_i = M_{i-1}$, contradicting our hypothesis that $M_{i-1} \subsetneq M_i$.
  
  Finally, observe that $\dim_R(N/M) = \pi(d_1d_2\ldots d_r)$ since $d_1f_1,\ldots,d_rf_r$ is a basis of $M$ and $N = \mspan[R]{f_1,\ldots,f_r}$. As $d_i = \frac{D_i(A)}{D_{i-1}(A)}$ for all $i \in \{1,\ldots,r\}$, we conclude that $\pi(d_1\ldots d_r) = \pi(\frac{D_r(A)}{D_1(A)}) = \pi(D_r(A))$ and so $k \leq \pi(D_r(A))$.
\end{proof}

\section{Proofs of Section~\ref{subsec:Hequivalence}}
\label{sec:proof holonomic equiv}

\lemmareductionequivtozero*

\begin{proof}
  Let  $\A_i=\left(\init_i,\trans_i,\fin_i(x)\right)$ for $i\in\{1,2\}$  be two P-finite automata over $\Sigma$, respectively, of dimension $n_1$ and $n_2$.  We construct a  P-finite automata $\mathcal{A}_-=\left(\init,\trans,\fin(x)\right)$ of dimension~$n_1+n_2$ over $\Sigma$ such that
  \[\init=\begin{bmatrix}
      \init_1 & -\init_2
      \end{bmatrix} \qquad \text{ and } \qquad \fin(x)=\begin{bmatrix}
      \fin_1(x) \\ \fin_2(x)
      \end{bmatrix},\]
and  for all $\sigma\in\Sigma$,
\[\trans(\sigma,x)=\begin{bmatrix}      \trans_1(\sigma,x) & 0_{n_1\times n_2}\\
      0_{n_2\times n_1} & \trans_2(\sigma,x)
      \end{bmatrix}\,.\]
The  construction of~$\A$ can be done in time polynomial in the size of encoding of~$\A_1$ and $\A_2$.

We first claim  that  $f_{\mathcal{A}_{-}}(w)=f_{\A_{1}}(w)-f_{\A_{2}}(w)$ for all $w\in\Sigma^*$.  
  By definition, 
  \[f_{\A_-}(\varepsilon) = \init \fin(1) = \init_1\fin_1(1) - \init_2\fin_2(1) = f_{\A_1}(\varepsilon) - f_{\A_2}(\varepsilon)\,,\]
  as required. 
  Consider $w = \sigma_1\ldots \sigma_k$ with the $\sigma_i \in \Sigma$.  We have that 
  \begin{align*}
      f_{\A_-}(w)&=\begin{bmatrix}
      \init_1 & -\init_2
      \end{bmatrix}\begin{bmatrix}
      \trans_1(\sigma_1,1) & 0_{n_1\times n_2}\\
      0_{n_2\times n_1} & \trans_2(\sigma_1,1)
      \end{bmatrix}\cdots\begin{bmatrix}
      \trans_1(\sigma_k,k) & 0_{n_1\times n_2}\\
      0_{n_2\times n_1} & \trans_2(\sigma_k,k)
      \end{bmatrix}
      \begin{bmatrix}
      \fin_1(k+1) \\ \fin_2(k+1)
      \end{bmatrix}\\
      &=\begin{bmatrix}
      \init_1 & -\init_2
      \end{bmatrix}\begin{bmatrix}
      \trans_1(\sigma_1,1)\ldots \trans_1(\sigma_k,k) & 0_{n_1\times n_2}\\
      0_{n_2\times n_1} & \trans_2(\sigma_1,1)\ldots \trans_2(\sigma_k,k)
      \end{bmatrix}
      \begin{bmatrix}
      \fin_1(k+1) \\ \fin_2(k+1)
      \end{bmatrix}\\
      &=\init_1 \trans_1(\sigma_1,1)\ldots \trans_1(\sigma_k,k)\fin_1(k+1)-\init_2 \trans_2(\sigma_1,1)\ldots \trans_2(\sigma_k,k)\fin_2(k+1)\\
      &=f_{\A_1}(w)-f_{\A_2}(w).\qedhere
  \end{align*}

We  can now conclude the proof
by observing that $\sem{\A}$ is identically zero if and only if $\sem{\A_1}=\sem{\A_2}$, as required.
  \end{proof}

\end{document}
\endinput